\newcommand{\RestateRemark}[1]{{\normalfont\bfseries #1}}
\newcommand{\RestateInit}[1]{\newcommand{#1}{}}
\newcommand{\RestateGo}[1]{\renewcommand{#1}{(Restated)}}
\crefname{prop}{property}{properties}
\crefname{modification}{modification}{modifications}
\crefname{modification}{modification}{modifications}
\newcommand*{\B}{\mathcal{B}}
\let\oldnl\nl% Store \nl in \oldnl
\newcommand{\nonl}{\renewcommand{\nl}{\let\nl\oldnl}}% Remove line number for one line
\DeclareMathOperator{\Pb}{P}
\DeclareMathOperator{\Ev}{E}
\DeclareMathOperator{\pred}{pred}
\DeclareMathOperator{\tr}{tr}
\DeclareMathOperator{\seg}{seg}
\newcommand*{\Otilde}{\widetilde{O}}
\newcommand*{\poly}{\textsf{poly}}
\newcommand*{\polylog}{\textsf{polylog}}
\newcommand*{\length}[1]{|#1|}
\newcommand*{\ball}{\mbox{\rm ball}}
\newcommand*{\nwspace}{\hspace*{.1em}} 
\renewcommand{\leq}{\leqslant}
\renewcommand{\geq}{\geqslant}
\renewcommand{\le}{\leqslant}
\renewcommand{\ge}{\geqslant}
\renewcommand{\epsilon}{\varepsilon}
\newcommand{\eps}{\varepsilon}
\let\oldsqrt\sqrt
\def\hksqrt{\mathpalette\DHLhksqrt}
\def\DHLhksqrt#1#2{\setbox0=\hbox{$#1\oldsqrt{#2\,}$}\dimen0=\ht0
   \advance\dimen0-0.2\ht0
   \setbox2=\hbox{\vrule height\ht0 depth -\dimen0}%
   {\box0\lower0.4pt\box2}}
\renewcommand\sqrt\hksqrt
\newcommand*{\gwidth}{3.7}
\newcommand*{\gdist}{1.5}
\newcommand*{\gpad}{0.5}
\newcommand*{\noderadius}{7.5pt}
\tikzset{
node distance={\trlength cm}, vert/.style = {draw, circle, inner sep = 0 pt, minimum size = 2 * \noderadius}, every path/.style = {-Latex}, every label/.append style={rectangle, font = {\normalsize}}, fac/.style = {circle,fill,inner sep=1.5pt}, every node/.style = {font = {\normalsize}}, every edge quotes/.style = {auto, sloped, font = {\normalsize}, inner sep = 0.5pt}
}
\definecolor{cred}{HTML}{D81B60}
\definecolor{cblue}{HTML}{1E88E5}
\definecolor{cyellow}{HTML}{D09C00}
\definecolor{cgreen}{HTML}{5B8600}
\title{Approximate Distance Sensitivity Oracles in Subquadratic Space}
\begin{document}
\maketitle

\begin{abstract}
	An \emph{\(f\)-edge fault-tolerant distance sensitive oracle} (\(f\)-DSO) with stretch \(\sigma \ge 1\)
	is a data structure that preprocesses a given undirected, unweighted graph \(G\) with \(n\) vertices and \(m\) edges, 
	and a positive integer \(f\).
	When queried with a pair of vertices \(s, t\) and a set \(F\) of at most \(f\) edges,
	it returns a \(\sigma\)-approximation of the \(s\)-\(t\)-distance in \(G-F\).

	We study \(f\)-DSOs that take subquadratic space.
	Thorup and Zwick [JACM~2005] showed that this is only possible for \(\sigma \ge 3\).
	We present, for any constant \(f \ge 1\) and \(\alpha \in (0, \frac{1}{2})\), 
	and any \(\varepsilon  > 0\), 
	a randomized \mbox{\(f\)-DSO} with stretch \( 3 + \varepsilon\)
	that w.h.p.\ takes 
	\(\widetilde{O}(n^{2-\frac{\alpha}{f+1}}) \cdot O(\log n/\varepsilon)^{f+2}\)
	space and has an \(O(n^\alpha/\varepsilon^2)\) query time.
	The time to build the oracle is \(\widetilde{O}(mn^{2-\frac{\alpha}{f+1}}) \cdot O(\log n/\varepsilon)^{f+1}\).
	We also give an improved construction for graphs with diameter at most \(D\).
	For any positive integer \(k\), we devise an \(f\)-DSO with stretch \(2k-1\) that w.h.p.\ takes
	\(O(D^{f+o(1)} n^{1+1/k})\) space and has \(\widetilde{O}(D^{o(1)})\) query time,
	with a preprocessing time of \(O(D^{f+o(1)} mn^{1/k})\).
  
	Chechik, Cohen, Fiat, and Kaplan [SODA 2017] devised an \(f\)-DSO with stretch \(1{+}\varepsilon\) 
	and preprocessing time $O(n^{5+o(1)}/\varepsilon^f)$, 
	albeit with a super-quadratic space requirement.
	We show how to reduce their preprocessing time to $O(mn^{2+o(1)}/\varepsilon^f)$.
\end{abstract}

\section{Introduction}
\label{sec:intro}

\emph{Distance Oracles} (DOs) are fundamental data structures 
that store information about the distances of an input graph $G=(V,E)$.\footnote{%
	Throughout, we assume the graph $G$ to be undirected, unweighted, have $n$ vertices,
	and $m$ edges.
}
These oracles are used in applications where one cannot afford to store the entire input,
but still wants to quickly retrieve the graph distances upon query.
Therefore, DOs should provide reasonable  trade-offs between space consumption, query time,
and \emph{stretch}, that is, the quality of the estimated distance. 

We are interested in the design of DOs that additionally can tolerate multiple failures of edges in $G$. 
An \emph{$f$-edge fault-tolerant distance sensitivity oracles} ($f$-DSO) is able to report an estimate $\widehat{d}_{G-F}(s,t)$ of the distance $d_{G-F}(s,t)$ between $s$ and $t$ in the graph $G-F$, where $F \subseteq E$ is a set of at most $f$ failing edges, when queried with the triple $(s,t,F)$. 
The parameter $f$ is the \emph{sensitivity} of the DSO.
We say that the \emph{stretch} of the $f$-DSO is $\sigma \geq 1$ if $d_{G-F}(s,t) \leq \widehat{d}_{G-F}(s,t) \leq \sigma \,{\cdot}\, d_{G-F}(s,t)$ holds for every query $(s,t,F)$. 

Several $f$-DSOs with different size-stretch-time trade-offs have been proposed in the last decades.
Some of them can only deal with a very small number $f \in \{1,2\}$ of failures~\cite{BaswanaK13, BeKa08, BeKa09, BCFS21SingleSourceDSO_ESA, ChCo20, DemetrescuT02, DeThChRa08, DuanP09a, GrandoniVWilliamsFasterRPandDSO_journal, GuRen21, Ren22Improved}. In the following, we focus on $f$-DSOs that deal with multiple failures $f \geq 3$. The randomized $f$-DSO of Weimann and Yuster~\cite{WY13} computes exact distances w.h.p.\footnote{%
  An event occurs \emph{with high probability} (w.h.p.) if it has probability at least $1- n^{-c}$ for some constant $c >0$.
}
and gives adjustable trade-offs depending on some parameter $\alpha \in (0,1)$.
More precisely, the data structure can be built in $\Otilde(mn^{2-\alpha})$ time, has a query time of 
$\Otilde(n^{2-2(1-\alpha)/f})$, and uses $\Otilde(n^{3-\alpha})$ space.\footnote{%
  The space is measured in the number of machine words on $O(\log n)$ bits.
} 
For any constant $f$, the $f$-DSO of Duan and Ren~\cite{DuRe22} requires $O(fn^4)$ space and
returns exact distances in $f^{O(f)}$ query time, but the preprocessing algorithm 
takes $n^{\Omega(f)}$ time. The $f$-DSO of Chechik, Cohen, Fiat, and Kaplan~\cite{ChCoFiKa17}
can handle up to $f = o(\log n / \log \log n)$ failures
but has a stretch of $1+\eps$, for any approximation parameter $\eps \ge 1/n$.
In turn, the oracle is more compact requiring $O_{\varepsilon}(n^{2+o(1)}\log W)$ space,\footnote{%
	For any positive function $g$ of the input and parameters, we use 
	$\Otilde(g)$ to denote $O(g \cdot \polylog(n))$,
	and $O_{\varepsilon}(g)$ for $O(g \cdot \poly(1/\varepsilon))$.
	$\Otilde_{\varepsilon}(g)$ combines the two.
}
where $W$ is the weight of the heaviest edge of $G$,
has query time $O_{\varepsilon}(f^5\log n \log \log W)$,
and can be built in $O_{\varepsilon}(n^{5+o(1)}\log W)$ preprocessing time.
Note that the aforementioned $f$-DSOs all have a super-quadratic space requirement,
that is, they take up \emph{more} space than the original input graph,
which is prohibitive in settings where we cannot even afford to store $G$.
The $f$-DSO of Chechik, Langberg, Peleg, and Roditty~\cite{CLPR12} addresses this issue
with a space requirement of $O(fkn^{1+1/k}\log(nW))$, 
where $k\geq 1$ is an integer parameter.
Their data structure has a fast query time of $\Otilde(|F| \log\log d_{G-F}(s,t))$
but guarantees only a stretch of $(8k-2)(f+1)$,
that is, depending on the sensitivity $f$.

Another way to provide approximate pairwise replacement distances under edge failures is that of fault-tolerant spanners~\cite{Levcopoulos98FaultTolerantGeometricSpanners}.
An \emph{($f$-edge) fault-tolerant $\sigma$-spanner} 
is a subgraph $H$ of~$G$ such that
$d_{H-F}(s,t) \allowbreak\le \sigma \cdot d_{G-F}(s,t)$, for every triple $(s,t,F)$, with $s,t \in V$ and $F\subseteq E$, $|F| \le f$.
There is a simple algorithm by 
Chechik, Langberg, Peleg, and Roditty~\cite{Chechik10FaultTolerantSpannersGeneralGraphs}
that computes, for any positive integer $k$,
a fault-tolerant $(2k{-}1)$-spanner with $O(f n^{1+1/k})$ edges.
Constructions by Bodwin, Dinitz, and Robelle~\cite{BodwinDinitzRobelle21OptimalVFTSpanners,BodwinDinitzRobelle22PartiallyOptimalEFTSpanners} recently reduced the size
to $f^{1/2}n^{1+1/k} \cdot \poly(k)$ for even $k$,
and $f^{1/2-1/(2k)}n^{1+1/k} \cdot \poly(k)$ for odd 
$k$.
They also showed an almost matching lower bound of $\Omega(f^{1/2-1/(2k)} n^{1+1/k} + fn)$ for $k > 2$, and $\Omega(f^{1/2}n^{3/2})$ for $k = 2$,
assuming the Erdős girth conjecture~\cite{Erd64extremal}.
The space is also the main problem with this approach
as it translates to a high query time. 
Currently, the most efficient way to retrieve the approximate distance 
between a given pair of vertices is to compute the single-source distance from one of the endpoint
taking at least linear in the size of the spanner.

All the results above for multiple failures either require $\Omega(n^2)$ space, 
have a stretch depending on $f$, or superlinear query time.
If one wants a truly constant stretch and fast query time simultaneously,
one currently has to pay $\Omega(n^2)$ space.
Thorup and Zwick~\cite{ThorupZ05} showed that, even when not supporting a single failure,
breaking the quadratic barrier is impossible for directed graphs; 
and for undirected graphs this requires a stretch of at least $3$.
In this paper, we discuss the case of unweighted graphs and constant sensitivity.
We give a subquadratic-space DSO with near-optimal stretch $3+\varepsilon$
and an arbitrarily small polynomial query time.
\RestateInit{\restateoraclelongpaths}
\begin{restatable}{theorem}{oraclelongpaths}
\label{thm:oracle_long_paths}\RestateRemark{\restateoraclelongpaths}
  Let $f \ge 2$ be an integer constant and $0 < \alpha < \sfrac{1}{2}$.
  For any  undirected, unweighted graph $G$
  with unique shortest paths
  and any $\varepsilon = \varepsilon(m,n,f) > 0$,
  there is a $(3{+}\eps)$-approximate randomized \mbox{$f$-DSO} for $G$
  that w.h.p.\ takes space 
  $\Otilde(n^{2-\frac{\alpha}{f+1}}) \,{\cdot}\, O(\log n/\varepsilon)^{f+2}$,
  has query time $O(n^{\alpha}/\varepsilon^2)$,
  and preprocessing time
  $\Otilde(mn^{2-\frac{\alpha}{f+1}}) \cdot O(\log n/\varepsilon)^{f+1}$.
\end{restatable}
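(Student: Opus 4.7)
The plan is to split replacement paths at a length threshold $L = n^{\alpha/(f+1)}$, build separate sub-oracles for distances at most $L$ and for distances exceeding $L$, and on a query $(s,t,F)$ return the minimum of the two estimates. The $(3{+}\eps)$ stretch will come from a near-isometric $(1{+}O(\eps))$ estimate on long paths together with a Thorup--Zwick-style stretch-$(3{+}\eps)$ estimate on short ones.

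For long distances I would sample a pivot set $B \subseteq V$, keeping each vertex independently with probability $\Theta(\log n / L)$, so that $|B| = \Otilde(n/L)$. A standard hitting-set argument gives w.h.p.\ that every shortest path of length at least $L$ in any $G-F$ (with $|F| \leq f$) contains a pivot, and in fact some $b \in B$ lies within additive $\eps L$ of the midpoint of the replacement path. For each such pivot $b$ one then precomputes a compact $(1{+}\eps)$-approximate fault-tolerant single-source structure rooted at $b$, restricted to \emph{canonical} failure sets drawn from a hierarchy of approximate replacement paths emanating from $b$. The unique-shortest-paths assumption lets us parametrise each replacement path through $b$ by only $O((\log n/\eps)^{f+1})$ such canonical patterns; summing over the $\Otilde(n/L)$ pivots, the space matches the target $\Otilde(n^{2-\alpha/(f+1)}/\eps) \cdot O(\log n/\eps)^{f+1}$. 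A long query then returns $\widetilde{d}_{G-F}(s,b) + \widetilde{d}_{G-F}(b,t)$ for the best pivot, losing only a multiplicative $(1{+}O(\eps))$.

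For short distances, any replacement path of length at most $L$ is contained in the ball of radius $L$ around $s$. I would maintain, for each vertex $s$, a local stretch-$(3{+}\eps)$ $f$-DSO on this ball, invoking a known construction (e.g.\ the oracle of Chechik et al.) on the induced subgraph; because the balls are small on average, this again fits into subquadratic space, and a query inspects $O(n^{\alpha}/\eps^2)$ candidate landmarks, yielding the stated query time.

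The main obstacle will be the definition and counting of the canonical failure sets: a naive enumeration over all $F$ with $|F|\leq f$ would already blow the space budget by $m^f$. The plan is to peel off one failure at a time, each time substituting a $(1{+}\eps)$-approximate shortest-path segment which can be described by $O(\log n / \eps)$ bits of segment data; after $f{+}1$ peeling steps the $(\log n/\eps)^{f+1}$ factor emerges. Closing the argument then requires showing that the path-estimate produced by the long sub-oracle matches $d_{G-F}(s,t)$ to within $(1{+}O(\eps))$, and that the combination with the short sub-oracle stays within the $(3{+}\eps)$ stretch bound and the claimed preprocessing time, which follow by amortising the BFS-type computations at each pivot over the peeling hierarchy.
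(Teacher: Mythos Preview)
Your high-level split at $L = n^{\alpha/(f+1)}$ and the use of $\Otilde(n/L)$ pivots are correct first moves, but the stretch budgeting is inverted and this breaks the argument. You claim the long-path sub-oracle achieves stretch $(1{+}O(\eps))$; in subquadratic space this is exactly what fails. The $(1{+}\eps)$ fault-tolerant trees of Chechik, Cohen, Fiat, and Kaplan rely on storing all pairwise distances $d_G(x,y)$ in order to test, for each node $\nu$ of $FT(u,v)$, whether a failing edge lies on the stored path $P_\nu$. That table alone is $\Omega(n^2)$. The paper's workaround (its Lemma~5.3) replaces this exact test by querying the short-path oracle on each constituent segment, which only certifies ``$P_\nu$ survives in $G-F$'' up to a factor of~$3$. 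Consequently the long-path estimate satisfies $d_{G-F}(u,v) \le FT(u,v,F) \le 3\cdot d^{(2f+1)}_{\eps/9}(u,v,F)$, not $(1{+}\eps)\,d_{G-F}(u,v)$. The $\eps$ in the final stretch enters through a trapezoid/auxiliary-graph argument: one builds a weighted graph $H^F$ on $\{s,t\}\cup V(F)$, weights its edges by calls to $FT$ and to the short-path oracle, and proves by induction on $d_{G-F}$ that $d_{H^F}(s,t)\le(3{+}\eps)\,d_{G-F}(s,t)$. Simply taking the minimum of a long and a short estimate, as you propose, does not give this.

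Two further gaps. First, your short-path oracle (``local $(3{+}\eps)$ DSOs on balls of radius $L$'') is not subquadratic: balls of radius $L$ need not be small, even on average, and you give no mechanism to bound their total size. The paper instead uses a hierarchical tree-sampling scheme that produces $\Otilde(L^{f+o(1)})$ subgraphs, each equipped with a Thorup--Zwick spanner/oracle, in total space $\Otilde(L^{f+o(1)} n^{3/2})$ and query time $\Otilde(L^{o(1)})$. Second, the $O(n^\alpha/\eps^2)$ query time is the technically hardest part and your proposal does not address it. The bottleneck is that minimising over all $\Otilde(n/L)$ pivots already costs $\Omega(n^{1-\alpha/(f+1)})$. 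The paper resolves this with a sparse/dense dichotomy on $\ball_{G-F}(u,\lambda)$: if the ball is sparse it contains $\Otilde(L^f/\lambda)$ pivots which can be enumerated; if both balls are dense one switches to a second, much sparser pivot set $\mathcal{B}$ of size $\Otilde(n/\lambda L^{f-1})$ and queries generalised FT-trees \emph{with granularity} $\lambda=\Theta(\eps L)$. This dichotomy, and the notion of granularity, are absent from your plan.
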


Note that the literature on $f$-DSOs generally assumes that the only valid queries
to the oracle are triples $(s,t,F)$ where $F \subseteq E$, $|F| \le f$, is a set of edges
actually present in $G$.
\emph{Verifying} whether a given query is valid requires $\Omega(n^2)$ space for dense graphs.
This would make the goal of a subquadratic-space $f$-DSO impossible.
However, our query algorithm never uses the assumption $F \subseteq E$.
This allows it to process any triplet $(s,t,F)$ 
where $F \subseteq \binom{V}{2}$ is a set of at most $f$ pairs of vertices.
It then returns the approximate distance for the valid query~$(s,t, F \cap E)$.\footnote{
	See \Cref{lemma:approx_parent_to_child,lem:weaker_guarantee} for the technical details.
}

The assumption in \Cref{thm:oracle_long_paths} of shortest paths being unique
in the base graph $G$ can be achieved w.h.p.\ by randomly perturbing the edge weights of the input,
while keeping the characteristics of an essentially unweighted graph.
For an unweighted graph, this results in edge weights of $1 \pm o(1/n)$
not affecting the graph distances.
This is a sufficient alternative condition in all places of the paper
where we assume $G$ to be unweighted.
The change on the preprocessing time is negligible.
Computing all-pairs shortest paths via Dijkstra's algorithm from every vertex of a weighted graph
takes time $\Otilde(mn)$, while using breath-first searches 
in an unweighted graph takes time $O(mn)$.
As an alternative,
we can also precompute a set of unique paths via so-called 
\emph{lexicographic perturbation}~\cite{CaChEr13} in time $O(mn + n^2 \log^2 n)$.

Very recently, Bilò, Choudhary, Cohen, Friedrich, Krogmann, and Schirneck~\cite{Bilo23CompactDOLargeSensitivity} addressed the same problem as we do in this work.
Their setting is more general in that they obtain an $f$-DSO for graphs with 
edge weights that are non-negative and polynomially bounded,
and support a sensitivity of $f = o(\log n/\log\log n)$.
For any integer $k \ge 2$ and constant \mbox{$0 < \alpha < 1$}
their construction achieves a stretch of $2k-1$ with space $O(n^{1+\frac{1}{k}+\alpha+o(1)})$
and an $\Otilde(n^{1+\frac{1}{k}-\frac{\alpha}{k(f+1)}})$ query time.
In comparison, their space requirement is smaller for the price of a query time that is always
at least linear (albeit smaller than running a single-source shortest path algorithm on a spanner).
The construction by Bilò et al.~\cite{Bilo23CompactDOLargeSensitivity} also differentiates between long and short paths, which is common for fault-tolerant data structures, 
and employs the distance oracle of Thorup and Zwick~\cite{ThorupZ05}.
Besides that, they use techniques that are different from ours.

In order to prove \Cref{thm:oracle_long_paths}, we develop several new ideas.
For the remainder of this section, we highlight the novelties.
A more detailed overview
can be found in \Cref{sec:overview}.
\vspace*{.5em}

\noindent
\textbf{Tree Sampling for Short Paths.}
It is a common approach in the design of fault-tolerant data structures
to first give a solution for short paths and then combine them into one for all distances,
see~\cite{ChCo20, KarthikParter21DeterministicRPC, GrandoniVWilliamsFasterRPandDSO_journal, GuRen21, Ren22Improved, WY13}.
We also focus first on $f$-DSOs for short paths.
Let $L$ be the cut-off parameter.\footnote{%
	The cut-off point will eventually turn out to be $L = n^{\alpha/(f+1)}$,
	where $\alpha \in (0,\frac{1}{2})$ is the parameter from \Cref{thm:oracle_long_paths}.
} 
We say a path is \emph{short} if it has at most $L$ edges.
An $f$-DSO for short paths only needs to report the correct answer for a query $(s,t,F)$
if $G-F$ contains a shortest path from $s$ to $t$ with at most $L$ edges. 
	Designing such an oracle with good query-space-preprocessing trade-offs 
	is the first step towards improving general $f$-DSOs. 
    Let $d_{G-F}^{\le L}(s,t)$ be the minimum length over all $s$-$t$-paths in $G-F$ with at most $L$ edges; if there are none, then $d_{G-F}^{\le L}(s,t) = +\infty$.
    Note that $d_{G-F}^{\le L}(s,t) = +\infty$ may hold for pairs $(s,t)$
    that are connected in $G-F$.
%\RestateInit{\restateoracleshortpaths}
\begin{restatable}{theorem}{oracleshortpaths}
\label{thm:oracle_short_paths}%\RestateRemark{\restateoracleshortpaths}
  Let $f, k \ge 1$ be integer constants.\footnote{%
  In principle, $k$ could depend on $n$ or $m$,
  but for $k = \Omega(\log n)$ we do not get further space improvements.
}
  There exists a randomized data structure that,
  when given an undirected, unweighted graph $G = (V,E)$, 
  and a positive integer $L$ (possibly dependent on $n$ and $m$),
  preprocesses $G$ and answers queries $(s,t,F)$ 
  for vertices $s,t \in V$ and sets of edges $F \subseteq E$ with $|F| \le f$.
  W.h.p.\ over all queries,
  the returned value $\widehat{d^{\le L}}(s,t,F)$ satisfies 
  $d_{G-F}(s,t) \le \widehat{d^{\le L}}(s,t,F) \le (2k{-}1) \,{\cdot}\,d_{G-F}^{\le L}(s,t,F)$.
  The data structure takes space $\Otilde(L^{f+o(1)} \nwspace n^{1+1/k})$, 
  has query time $\Otilde(L^{o(1)})$, 
    and preprocessing time $\Otilde(L^{f+o(1)} \nwspace mn^{1/k})$.
\end{restatable}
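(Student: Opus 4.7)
The plan is to reduce the fault-tolerant short-path query to a batch of \emph{static} $(2k{-}1)$-approximate distance-oracle lookups on carefully sampled subgraphs of $G$. Sample a family of $N$ subgraphs $G_1,\dots,G_N$ of $G$ by retaining each edge of $G$ independently with probability $1-1/L$, and on each $G_i$ build the Thorup--Zwick $(2k{-}1)$-approximate distance oracle $\mathcal{D}_i$, which has space $O(k n^{1+1/k})$, query time $O(k)$, and preprocessing time $\Otilde(m n^{1/k})$. Call $G_i$ \emph{good} for a query $(s,t,F)$ if $G_i$ avoids $F$ entirely and contains some $s$-$t$-path realising $d_{G-F}^{\le L}(s,t)$; then $d_{G-F}(s,t) \le \mathcal{D}_i(s,t) \le (2k{-}1) \cdot d_{G-F}^{\le L}(s,t)$, which is exactly the required stretch. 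The data structure returns the minimum of the values $\mathcal{D}_i(s,t)$ taken over samples verified to exclude $F$.

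For a fixed query witnessed by a short path $P$ with $|P| \le L$, the probability that a given independent sample is good is at least $(1/L)^f (1-1/L)^{|P|} \ge e^{-1} L^{-f}$. Choosing $N = \Theta(L^f \log n)$ with a sufficient constant and union-bounding over all $\le n^{2+2f}$ possible triples $(s,t,F)$ ensures that w.h.p.\ every query has a good sample. This already yields the space $\Otilde(L^f n^{1+1/k})$ and preprocessing $\Otilde(L^f m n^{1/k})$ claimed (up to the $L^{o(1)}$ slack absorbed by the hashing step below).

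\textbf{The main obstacle: fast lookup.} The difficulty is bringing the query time down to $\Otilde(L^{o(1)})$; a naive scan over samples is $\Otilde(L^f)$, and even intersecting per-edge inverted lists of the samples takes $\Otilde(L^{f-1})$. To circumvent this, I would replace the independent sampling by a structured hashing scheme: draw $T$ hash functions $\phi_1,\dots,\phi_T \colon E \to [L]$ from a highly-independent family with $L^{o(1)}$ evaluation cost (e.g., tabulation- or Siegel-style hashing), and, for every $j \in [T]$ and every address $\vec{b} \in [L]^f$, precompute the oracle $\mathcal{D}_{j,\vec{b}}$ on the deterministic subgraph $G - \phi_j^{-1}(\{b_1,\dots,b_f\})$. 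A query $(s,t,F)$ then evaluates $\vec{b}_j = (\phi_j(e))_{e\in F}$ and probes $\mathcal{D}_{j,\vec{b}_j}$ for each $j$ only. Each such subgraph excludes $F$ by construction, and independently per $j$ the event that no edge of $P$ collides in the hash with $F$ has probability at least $(1-f/L)^{|P|} = \Omega(e^{-f})$, so $T = \Theta(\log n)$ trials locate a good sample w.h.p.\ for every query. Setting the hash alphabet to $L^{1+o(1)}$ to absorb the limited independence of the family inflates the number of precomputed oracles to $L^{f+o(1)}$, matching the claimed $\Otilde(L^{f+o(1)} n^{1+1/k})$ space and $\Otilde(L^{f+o(1)} m n^{1/k})$ preprocessing, while each query performs only $T \cdot O(k) = \Otilde(L^{o(1)})$ work.
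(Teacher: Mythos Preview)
Your plan is correct and takes a genuinely different route from the paper.

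The paper does \emph{not} use hashing. It builds $I=\Otilde(11^{\sqrt{f\ln L}})=\Otilde(L^{o(1)})$ independent \emph{sampling trees} of height $h=\sqrt{f\ln L}$ and branching factor $K\approx((2k{-}1)L)^{f/h}$. Each tree node $x$ carries a set $A_x$ of ``still missing'' edges (sub-sampled from the parent with probability $K^{-1/f}$) together with a union $S_x$ of $\Theta(K^{h-r})$ Thorup--Zwick spanners; leaves store a modified TZ oracle. A query walks each tree root-to-leaf, always descending to a child with $F\cap E(S_y)\subseteq A_x$, and returns the minimum over the leaves reached. Correctness rests on an \emph{inheritance property} of the modified TZ query: if the TZ-approximate path $P_{s,t,G-F}$ (which has at most $(2k{-}1)L$ edges) is present in some $S_y-A$, then the oracle built on the spanner of $S_y-A$ returns exactly $|P_{s,t,G-F}|$. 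The $L^{o(1)}$ factors in all three complexity bounds stem from the $11^h$ trees and the extra factor $K$ per level.

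Your colour-coding scheme sidesteps the tree hierarchy, the spanner unions, and the inheritance lemma entirely: you argue about the \emph{actual} short path $P$, not the TZ approximation, and you look up one pre-built oracle per hash function. One point would sharpen your write-up considerably: you do not need ``highly independent'' hashing at all. For any universal family $\phi_j\colon E\to[M]$, a plain union bound over the at most $|E(P)|\cdot|F|\le Lf$ colliding pairs gives
\[
\Pr\bigl[\phi_j(E(P))\cap\phi_j(F)\neq\emptyset\bigr]\le Lf/M,
\]
so already $M=2fL$ yields success probability $\ge 1/2$ per trial, and $T=\Theta(f\log n)$ independent trials plus a union bound over the $O(n^{2+2f})$ queries finish the argument. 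This even removes the $L^{o(1)}$ slack: you get $\Otilde((fL)^f\,n^{1+1/k})$ space, $\Otilde((fL)^f\,mn^{1/k})$ preprocessing, and $O(fk\log n)=\Otilde(1)$ query time for constant $f,k$---tighter than the theorem as stated and than what the paper's tree sampling achieves. The vague appeal to ``tabulation- or Siegel-style hashing'' and the remark about ``absorbing limited independence'' by enlarging the alphabet are unnecessary and slightly muddled; replace them with the two-line union-bound argument.
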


We compare \Cref{thm:oracle_short_paths} with previous work on $f$-DSOs for short paths.
Weimann and Yuster~\cite{WY13} presented a construction with $\Otilde(L^f mn)$ preprocessing time, 
$\Otilde(L^f n^2)$ space, and $\Otilde(L^f)$ query time. 
It laid the foundation for many subsequent works, see~\cite{AlonChechikCohen19CombinatorialRP,BCFS21DiameterOracle_MFCS,Bilo22Extremal,
KarthikParter21DeterministicRPC,Ren22Improved}.
When using the fault-tolerant trees of Chechik et al.~\cite{ChCoFiKa17}, 
one can reduce the query time of the oracle to $O(f^2)$.
However, storing all of these fault-tolerant trees still requires $\Omega(L^f n^2)$ space.
For small enough $L$, sub-quadratic space suffices for our data structure,
while still providing a better query time than the oracle by Weimann and Yuster~\cite{WY13}. 

We extend their sampling technique~\cite{WY13}
in order to prove~\Cref{thm:oracle_short_paths}.
The technique consists of first constructing $\Otilde(L^f)$ copies of $G$ and then, in each one,
remove edges with probability $\sfrac{1}{L}$.
One can show that w.h.p.\ each short replacement path survives in one of the copies,
where a \emph{replacement path} is the respective shortest path after at most $f$ edge failures.
Instead of having all those graphs be independent of each other, 
we develop hierarchical tree sampling.
This allows us to quickly find
the copies that are relevant for a given query, 
reducing the query time to $\Otilde(L^{o(1)})$.
We further sparsify the resulting graphs for a better space complexity.

From~\Cref{thm:oracle_short_paths}, we immediately get an $f$-DSO for graphs with bounded diameter. Afek, Bremler-Barr, Kaplan, Cohen, and Merritt~\cite{Afek02RestorationbyPathConcatenation_journal}
proved that for undirected, unweighted graphs $G$ any shortest path in $G-F$
is a concatenation of up to $|F|+1$ shortest paths in $G$.
If $G$ has diameter at most $D$ and $|F| \le f$, the diameter of $G-F$ is thus bounded by $(f{+}1)D$.

\begin{corollary}
\label{cor:oracle_bounded_diameter}
  Let $f,k \ge 1$ be integer constants.
  There exists a randomized $(2k{-}1)$-approximate $f$-DSO for undirected, unweighted graphs
  with diameter bounded by $D$,
  that w.h.p.\ takes space $\Otilde(D^{f+o(1)} \nwspace n^{1+1/k})$, 
  has query time $\Otilde(D^{o(1)})$, and preprocessing time $\Otilde(D^{f+o(1)} \nwspace mn^{1/k})$.
\end{corollary}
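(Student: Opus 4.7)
The plan is to obtain the corollary directly from \Cref{thm:oracle_short_paths} by choosing the path cut-off $L$ just large enough that the short-path guarantee coincides with the true replacement distance on every query. Concretely, I would build the short-path oracle of \Cref{thm:oracle_short_paths} on the input graph $G$ with the parameters $f$ and $k$ prescribed by the corollary, and with cut-off $L = (f+1)\cdot D$. The query procedure is the query procedure of \Cref{thm:oracle_short_paths}, unchanged.

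The correctness step is to argue that, for this choice of $L$, we always have $d_{G-F}^{\le L}(s,t) = d_{G-F}(s,t)$ whenever $s,t$ are connected in $G-F$, so the $(2k{-}1)$-approximation of $d_{G-F}^{\le L}(s,t)$ provided by the short-path oracle is automatically a $(2k{-}1)$-approximation of $d_{G-F}(s,t)$. This is exactly the point where the theorem of Afek, Bremler-Barr, Kaplan, Cohen, and Merritt~\cite{Afek02RestorationbyPathConcatenation_journal} enters: any shortest path in $G-F$ decomposes into at most $|F|+1 \le f+1$ shortest paths of $G$, each of which uses at most $D$ edges since $G$ has diameter $D$. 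Hence every shortest replacement path has at most $(f+1)D = L$ edges, and the capped distance $d_{G-F}^{\le L}(s,t)$ equals $d_{G-F}(s,t)$. The case where $s,t$ are disconnected in $G-F$ causes no trouble, since both quantities are $+\infty$.

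It remains to check that the resource bounds of \Cref{thm:oracle_short_paths} specialise to those claimed in the corollary. Substituting $L = (f+1)D$ gives space $\Otilde(L^{f+o(1)} n^{1+1/k}) = \Otilde((f+1)^{f+o(1)} D^{f+o(1)} n^{1+1/k}) = \Otilde(D^{f+o(1)} n^{1+1/k})$, since $f$ is a fixed parameter and the $(f+1)^{f+o(1)}$ factor is a constant that can be absorbed into the $\Otilde(\cdot)$. The same substitution turns the preprocessing time $\Otilde(L^{f+o(1)} mn^{1/k})$ into $\Otilde(D^{f+o(1)} mn^{1/k})$ and the query time $\Otilde(L^{o(1)})$ into $\Otilde(D^{o(1)})$, matching the corollary.

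Since the argument is little more than a choice of parameters combined with one structural lemma from the literature, there is no real technical obstacle; the only point that deserves a line of care is confirming that setting $L = (f+1)D$ is simultaneously large enough to absorb the worst-case length blow-up under $f$ failures and small enough to keep the $L^{f+o(1)}$ factors in the form $D^{f+o(1)}$ once the constant $(f+1)^{f+o(1)}$ is pushed into the $\Otilde$-notation.
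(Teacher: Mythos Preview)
Your proposal is correct and matches the paper's own argument essentially verbatim: invoke the Afek et al.\ concatenation result to bound every replacement path by $(f{+}1)D$ edges, then instantiate \Cref{thm:oracle_short_paths} with $L=(f{+}1)D$ and absorb the $(f{+}1)^{f+o(1)}$ factor into the $\Otilde$-notation.
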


\noindent
\textbf{Fault-Tolerant Trees with Granularity.}
We employ fault-tolerant trees\footnote{%
  The FT-trees are not related to the tree sampling mentioned before.
}
(FT-trees) introduced by Chechik et al.~\cite{ChCoFiKa17}
to combine the solutions for short paths.
Those are trees in which every node is associated with a path in a subgraph
$G-A$ where $A \subseteq E$ is a set of edges, but possibly much more than the sensitivity $f$.
Each path is partitioned into segments whose sizes increase exponentially towards the middle.
This is done to encode the paths more space efficient than edge-by-edge.
We have to take some additional compression steps to fit them in subquadratic space.
For example, instead of building a tree $FT(s,t)$ for every pair of vertices $s,t$,
we only do so if one of them is from a set of randomly selected \emph{pivots}.
But even this gives only a sublinear query time.
To improve it further to $\Otilde_\eps(n^\alpha)$ 
for an any constant $\alpha \in (0,\frac{1}{2})$,
we generalize the FT-trees by adding what we call \emph{granularity} $\lambda \ge 0$.\footnote{%
	In the proof of \Cref{thm:oracle_long_paths},
	we set $\lambda = \eps L/c$, for an ad-hoc constant $c > 1$.
}
That means the first and last $\lambda$ edges of each path are their own segment
and do not fall into the regime of exponential increase.
The original construction \cite{ChCoFiKa17} corresponds to granularity $0$.
Intuitively, the larger the value of $\lambda$, the better the fault-tolerant tree $FT_\lambda(u,v)$ with granularity $\lambda$ approximates the shortest distance from $u$ to $v$ in $G-F$, but the larger the size of each node of the tree becomes.

The idea to answer a query $(s,t,F)$ is to scan balls of a certain radius around
$s$ and $t$ in $G-F$ for pivots and query the respective FT-tree together with
the oracle for short paths in \Cref{thm:oracle_short_paths}.
W.h.p.\ one of the pivots hits the replacement path from $s$ to $t$ ensuring
that this gives (an approximation of) the right distance.
The bottleneck is the case when there are too many vertices in the vicinity of both $s$ and $t$
since then these balls also receive many pivots.
Instead, we sample a second type of much more scarce pivots, 
which are used to hit the dense neighborhoods.
In that case, we can find a scarce pivot $b_s$ near $s$ and a scarce pivot $b_t$ near~$t$, 
but we can no longer assume that they hit the sought replacement path.
The fault-tolerant tree $FT_\lambda(b_s, b_t)$ with granularity $\lambda$, however,
allows us to get a good approximation,
as long the starting points $b_s$ and $b_t$ are at distance at most $\lambda$
from the real endpoints. 

The trees $FT_\lambda(b_s, b_t)$ are much larger than their classical counterparts $FT(s,t)$.
This is compensated by the fact that we require much fewer of those. 
We verify that several of the key lemmas from~\cite{ChCoFiKa17} transfer to 
fault-tolerant trees with granularity $\lambda >0$. 
\vspace*{.5em}

\noindent
\textbf{Efficient Computation of Expaths.}
Since fault-tolerant trees are crucial for our work,
we revisit the approach used in~\cite{ChCoFiKa17} to construct them (with granularity $0$).
It turns out that their algorithm can be improved. 
The preprocessing in~\cite{ChCoFiKa17} invokes many calls to all-pairs shortest path computations (APSP) in different subgraphs $G-F$, each of which is associated with a node of the fault-tolerant trees. 
They also invoke $O(n)$ calls to Dijkstra's algorithm on suitable dense graphs with $O(fn^2)$ edges. 
We prove that many of those APSP calls can be avoided by instead re-using the distances in the original graph $G$, which can be obtained by a single APSP computation.
More precisely, the paths associated with the nodes of the fault-tolerant trees
(later referred as~$(2f+1)$-expaths) are the concatenation of $O(f \log (nW))$ original shortest paths.
The distances in $G$ can be integrated into a single Dijkstra run on a specially built graph with $\Otilde(fm)$ edges to compute such an expath in time $\Otilde(fm)$.
This technique implies an improved preprocessing time for our own subquadratic $f$-DSO.
Moreover, when applied to the $f$-DSO by Chechik et al.~\cite{ChCoFiKa17}, it improves their preprocessing time from $O_\varepsilon(fn^{5+o(1)})$ to $O_\varepsilon(fmn^{2+o(1)})$. 

\begin{theorem}
\label{thm:improved_preprocessing}
  Let $G$ be an undirected weighted graph with maximum edge weight $W = \emph{\poly}(n)$
  and unique shortest paths.
  For any positive integer $f = o(\log n/\log\log n)$,
  and $\varepsilon \ge 1/(nW)$,
  there exists an $(1{+}\varepsilon)$-approximate $f$-DSO for $G$ that takes
  $\Otilde(fn^2) \cdot O\!\left(\frac{\log(nW)}{\varepsilon}\right)^f 
    = O(\varepsilon^{-f}) \cdot n^{2+o(1)}$ space,
  has query time $O(f^5 \log n)$,
  and preprocessing time 
  $\Otilde(fmn^2) \cdot O\!\left(\frac{\log(nW)}{\varepsilon}\right)^f 
  	= O(\varepsilon^{-f}) \cdot mn^{2+o(1)}$.
\end{theorem}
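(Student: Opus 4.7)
The plan is to keep the data structure and the query algorithm of the $(1{+}\varepsilon)$-approximate $f$-DSO of Chechik, Cohen, Fiat, and Kaplan~\cite{ChCoFiKa17} as a black box, so that the space bound $\Otilde(fn^2) \cdot O(\log(nW)/\varepsilon)^f$ and the query time $O(f^5 \log n)$ carry over verbatim from their analysis. The only step I modify is the preprocessing algorithm, whose current bottleneck is that each node of each fault-tolerant tree triggers either a full APSP computation in some subgraph $G-F$ with $|F| \le f$, or a Dijkstra on a dense auxiliary graph with $\Omega(fn^2)$ edges, just to extract the $(2f{+}1)$-expath labeling that node.

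The key idea, as foreshadowed in the introduction, is the structural observation that a $(2f{+}1)$-expath in $G-F$ is a concatenation of only $O(f \log(nW))$ shortest paths of the original graph $G$, stitched together at short detours around the $O(f)$ endpoints of the failed edges. I therefore first run a single APSP in $G$ in $\Otilde(mn)$ time, storing every distance $d_G(u,v)$ and a representative shortest path; this cost is absorbed by the final bound. Next, for each relevant $F$, I build a sparse auxiliary graph $H_F$ with $\Otilde(fm)$ edges, comprising (i) the edges of $G-F$ themselves, and (ii) a carefully chosen set of shortcut edges of weight $d_G(\cdot,\cdot)$ anchored at the $O(f)$ endpoints of the failed edges and their small neighborhoods. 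A single-source Dijkstra on $H_F$ then recovers the desired expath in time $\Otilde(fm)$, replacing the per-node APSP. Summing over the $\Otilde(n^2) \cdot O(\log(nW)/\varepsilon)^f$ total FT-tree nodes across all pairs yields the claimed preprocessing time $\Otilde(fmn^2) \cdot O(\log(nW)/\varepsilon)^f$.

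The main obstacle, and the point where \Cref{sec:improved_preproc} presumably does most of the work, lies in designing $H_F$: the shortcut set must be both sparse enough that the edge count stays $\Otilde(fm)$ independently of the structure of $F$, and expressive enough that the Dijkstra optimum in $H_F$ coincides with a genuine $(2f{+}1)$-expath of $G-F$ with its prescribed geometrically growing and shrinking segment lengths. Correctness leverages the fact that any shortest path in $G-F$ can deviate from a $G$-shortest path only in the vicinity of a failed edge, so that all useful gluing points between consecutive expath segments lie within a bounded-hop ball around an endpoint of $F$; thus attaching shortcuts only to these anchors suffices to realize every candidate expath as a path in $H_F$. Once this construction is established, the aggregation into the stated time bound is routine, and no further change to the Chechik et al.\ framework is required.
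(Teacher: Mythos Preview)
Your high-level accounting is right—single APSP in $G$, then an $\Otilde(fm)$-time per-node computation, summed over $\Otilde(n^2)\cdot O(\log(nW)/\varepsilon)^f$ nodes—but the concrete construction you sketch has a genuine gap.

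You repeatedly write ``for each relevant $F$ with $|F|\le f$'' and anchor your shortcut edges at ``the $O(f)$ endpoints of the failed edges.'' But the expath stored at a node $\nu$ of an FT-tree is \emph{not} computed in $G-F$ for a query set $F$; it is computed in $G-A_\nu$, where $A_\nu$ is the union of all edges of every segment failed on the root-to-$\nu$ path. As the paper stresses, $A_\nu$ can be much larger than $f$ (indeed, it can contain $\Theta(n)$ edges), which is precisely why $P_\nu$ need not be $f$-decomposable and why one works with $(2f{+}1)$-expaths in the first place. Your shortcut graph $H_F$, whose sparsity and correctness both hinge on there being only $O(f)$ failure endpoints and on the ``deviation only near a failed edge'' heuristic, therefore does not apply to the subgraphs that actually arise during preprocessing.

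The paper's construction avoids this dependence on $|A|$ entirely. It builds a layered (``exploded'') graph with $O(f\log(nW))$ copies of $V$, where each intra-layer component is $G-A$ (so removing $A$ costs nothing extra, however large), and runs a modified Dijkstra that, when relaxing an intra-layer edge $(u_j,v_j)$, additionally checks the precomputed equality $d'(s^*,u_j)-d'(s^*,x_{u_j})+w(u_j,v_j)=d_G(x,v)$ against the stored APSP table. This check enforces that each intra-layer subpath is a shortest path of $G$ without any shortcut edges at all; the inter-layer $0$- and $w(u,v)$-weight edges implement the concatenation into $\ell$-decomposable paths, and an outer loop over $2\log_2(nW)$ phases with per-phase length caps $\delta_i$ enforces the expath length profile. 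The resulting graph has $\Otilde(fm)$ edges regardless of $|A|$, which is what makes the argument go through.
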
 

\noindent
\textbf{Open Problems.}
As an open question, we ask whether one can further improve the query time from 
$\Otilde_\eps(n^\alpha)$ to poly-logarithmic in $n$ and $\sfrac{1}{\eps}$ 
while keeping the space truly subquadratic.
The converse problem is to further reduce the space without affecting the query time.
Finally, we can currently only handle unweighted graphs
where the length of the path corresponds to the number of edges.
Some sampling-based ideas break down if long paths can consist of only a few heavy edges.
In all cases, the bottleneck is the handling of long paths.
For short distances, our $f$-DSO has asymptotically almost optimal size and very low query time that can easily be adapted to the weighted case.

\section{Overview}
\label{sec:overview}

\textbf{Fault-tolerant Trees.}
Our distance sensitivity oracle is built on the concept of fault-tolerant trees~\cite{ChCoFiKa17}.
This is a data structure that reports,
for a fixed pair of vertices $s,t \in V$ and any set $F \subseteq E$ of up to $f$ edge failures,
the replacement distance $d_{G-F}(s,t)$.
Consider a shortest path~$P$ from $s$ to $t$ in the original graph $G$.
FT-trees draw from the fact that only failures on $P$ can influence the distance from $s$ to $t$.
In its simplest form, the tree $FT(s,t)$ consists of a root node that stores the path $P$
and the distance $d(s,t) = |P|$.
It has a child for each edge $e \in E(P)$ which in turn holds a shortest $s$-$t$-path in $G\,{-}\,e$.
Iterating this construction until depth $f$ ensures 
that all relevant failure sets for the pair $(s,t)$ are covered.
If some set of edge failures disconnect the two vertices,
this is represented by a leaf node that does not store any path.
Let $P_\nu$ denote the path in some node $\nu$.
Given a failure set $F$, the algorithm checks in each node $\nu$ starting with the root 
whether it is a leaf or $F \cap E(P_\nu) = \emptyset$,
with the latter meaning that the path $P_\nu$ exists in $G-F$.
If so, its length $\length{P_\nu}$ is reported;
otherwise, the search recurses on the child node corresponding to 
an (arbitrary) edge $e \in F \cap E(P_\nu)$.
Let $FT(s,t,F)$ be the reported distance.
It is equal to $d_{G-F}(s,t)$ and the query time is $O(f^2)$ since at most $f{+}1$ vertices are visited and
computing the intersection takes time $O(f)$.

The problem is, these trees are huge.
Preprocessing them for all pairs of vertices takes total space $O(n^{f+3})$.
The main technical contribution of~\cite{ChCoFiKa17} is to reduce the space without sacrificing too much of their performance,
that is, the stretch of the reported distance and the query time.
In the first step, the number of vertices in the tree is decreased
by introducing an approximation parameter $\eps > 0$.
Each path $P_\nu$ is split into $O(\log n/\eps)$ \emph{segments}.
Now node $\nu$ only has a child for each segment and the search procedure recursing 
on that child corresponds to failing the whole segment instead of only a single edge.
This reduces the total size of all trees to $O(n^3 \, (c \nwspace \frac{\log n}{\eps})^f)$
for some constant $c > 0$.
However, it leads to some inaccuracies in the answer of the tree.
The failed segments may contain edges that are actually present in $G-F$
and thus the path $P_{\nu^*}$ stored in the last visited node $\nu^*$ may take unnecessary detours.
It is proven in~\cite{ChCoFiKa17} that $FT(s,t,F) = |P_{\nu^*}| = d_{G-F}(s,t)$
is correct if all failing edges are ``far away''\footnote{%
  More formally, a path $P$ being ``far away'' from $F$ means that,
  for every vertex $x$ on $P$ except for $s$ and $t$ 
  and every endpoint $y$ of a failing edge in $F$,
  the distance from $x$ to $y$ is more than
  $\frac{\eps}{9} \cdot \min(\nwspace |P[s,x]|,\, |P[x,t]| \nwspace )$,
  see~\Cref{def:trapezoid}.
}
from the true replacement path $P(s,t,F)$ in $G-F$,
where the required safety distance depends on the distance $d_{G-F}(s,t)$.
To also answer queries for which this condition is violated,
they consult multiple FT-trees.
An auxiliary graph $H^F$ is constructed on the endpoints $V(F)$
of all failing edges, that is, $V(H^F) = \{s,t\} \cup V(F)$.
For each pair of vertices $u,v \in V(H^F)$, the edge $\{u,v\}$ is weighted
with the reported distance $FT(u,v,F)$.
While not all edge weights may be the correct $u$-$v$-replacement distance,
the distance of $s$ and $t$ in $H^F$ can be shown 
to be a $(1{+}\eps)$-approximation of $d_{G-F}(s,t)$.
The idea is that, when going from $s$ to $t$, one can always find a next vertex in $V(H^F)$
that is not too far off the shortest path and
such that the subpath to that vertex is ``far away'' from all failures.
Computing the weights for $H^F$ increases the query time to $O(f^4)$.

The next step is more involved and is concerned with the size of the nodes in the FT-trees.
Originally, each of them stores all edges of a path in (a subgraph of) $G$ 
and therefore may take $O(n)$ space.
Afek et al.~\cite{Afek02RestorationbyPathConcatenation_journal}
showed that every shortest path in $G{-}F$, for $|F| \le f$, is  $f$-\emph{decomposable},
that is, a concatenation of at most $f$ shortest paths in $G$.
Chechik et al.~\cite{ChCoFiKa17} extend this notion to so-called expaths.
For a positive integer $\ell$, a path is said to be an \mbox{$\ell$-\emph{expath}}
if it is the concatenation of $(2\log_2(n)+1)$ \mbox{$\ell$-decomposable} paths
such that the $i$th $\ell$-decomposable path has length at most $\min\{ 2^i,2^{2\log_2(n) -i}\}$.
Consider a node $\nu$ in the tree $FT(u,v)$.
Instead of storing the shortest $u$-$v$-path $P_\nu$ edge by edge,
one would like to represent it
by the endpoints of the constituting shortest paths (in $G$) and edges.
However, the collection $A_\nu$ of edges in all segments
that were failed while descending from the root to $\nu$ may be much larger than $f$
and~$P_{\nu}$ may not be $f$-decomposable.
Instead, the node $\nu$ now holds the shortest $(2f{+}1)$-expath from~$u$ to $v$ in $G-A_\nu$.
It can be represented by $O(f \log n)$ endpoints,
bringing the total space of the trees to $O(fn^2 (\log n)(c \nwspace \frac{\log n}{\eps})^f)$.
It is described in~\cite{ChCoFiKa17} how to navigate the new representation
to obtain a $(1{+}\eps)$-approximation of $d_{G-F}(s,t)$ in time $O(f^5 \log n)$.

In this work, we advance the space reduction further into the subquadratic regime.
Recall that $L$ is the number of edges up to which a path is called short.
When sampling a set $B$ of~$\Otilde_\eps(n/L)$ \emph{pivots} uniformly at random,
then w.h.p.\ every long replacement path contains a pivot. 
Restricting the FT-trees $FT(u,v)$ to only those pairs $u,v$ 
for which at least one vertex is in~$B$ brings the total number of trees to $o(n^2)$.
Unfortunately, it deprives us of the replacement distances for pairs
that are joined by a short path.
\vspace*{.5em}

\noindent
\textbf{Short Paths.}
To make up for this deficit, we design an approximate $f$-DSO for
vertex pairs with short replacement paths.
We extend a technique by Weimann and Yuster~\cite{WY13}
from exact to approximate distances while also reducing the required space and query time.
Let $\{G_i\}_i$ be a collection of spanning subgraphs of $G$.
It is called an 
$(L,f)$-\emph{replacement path covering} (RPC)~\cite{KarthikParter21DeterministicRPC}
if, for every set $F$ of at most $f$ edges and any pair of vertices that is joined in $G{-}F$
by a path of at most $L$ edges, there exists some subgraph $G_i$ that does not contain any edge of $F$
but all edges of the path.
This construction is the basis of many $f$-DSOs for the following reason.
Consider two vertices $s$ and $t$ that have a replacement path on at most $L$ edges.
Scan over all graphs of the RPC that contain no edge of $F$,
and, for each graph, record the $s$-$t$-distance.
By the properties of an RPC, the minimum recorded value is the correct replacement distance $d_{G-F}(s,t)$.
This holds even if $G{-}F$ itself is not in $\{G_i\}_i$.

Weimann and Yuster~\cite{WY13} showed that one can obtain an RPC with high probability by the following process.
Take $\Otilde(L^f)$ copies of $G$ and,
in each one, remove any edge independently with probability $\sfrac{1}{L}$.
We cannot use that approach directly in subquadratic space.
The subgraphs have total size $\Omega(L^{f-1} \nwspace m)$, which is already too large if $G$ is dense.
Also, it is expensive to find the correct members of the RPC for a given query.
In~\cite{WY13}, the solution was to indeed go over all graphs and explicitly check
whether they have the set $F$ removed.
Karthik and Parter~\cite{KarthikParter21DeterministicRPC} derandomized this
construction and reduced the time needed to find the correct subgraphs to $\Otilde(L)$.
Both approaches break down in subquadratic space, since we cannot even store all edges of the graphs.
However, we are only seeking \emph{approximate} replacement distances.
We exploit this fact in a new way of constructing an approximate
$(L,f)$-replacement path covering.
We turn the sampling technique upside down
and combine it with the DO of Thorup and Zwick~\cite{ThorupZ05}.

Instead of sampling the subgraphs directly by removing edges,
we construct them in a hierarchical manner by \emph{adding} connections.
We build a tree\footnote{%
	Again, the sampling trees and fault-tolerant trees are not related.
} 
in which each node is associated with a subset of the edges of $G$,
this set stands for the ``missing'' edges.
We start with the full edge set $E$ in the root,
that is, the graph in the root is empty.
The height of the tree is $h$ and each node has $L^{f/h}$ children.
The associated set of a child node contains any edge of its parent with probability $L^{-1/h}$.
This corresponds to adding any missing edge with probability $1-L^{-1/h}$.
Knowing the missing edges upfront benefits the query algorithm.
At each node starting with the root, if we were to expand all children
in which all failures of $F$ are missing, we would find the suitable subgraphs.
The hierarchical sampling creates some dependencies among the subgraphs associated with 
the leaves of the tree, while the graphs in~\cite{WY13} were independent.
We tackle this issue by always recursing only on one child node
and therefore querying a single leaf.
The process is repeated in several independent trees in order to amplify the success probability.
We prove that there exists a constant $c > 0$ such that
$\Otilde(c^h)$ trees together ensure the property 
we need from an $(L,f)$-replacement path covering w.h.p.
Optimizing the height $h$ gives an~$\Otilde(L^{o(1)})$ query time (assuming constant $f$).

The main challenge is to bring down the size of this construction 
by reducing the number of edges in the graphs associated
with the nodes of the trees.
Thorup and Zwick~\cite{ThorupZ05} devised, for any positive integer $k$,
a $(2k{-}1)$-approximate distance oracle together with a compatible spanner of size $O(kn^{1+1/k})$, 
i.e., the stretched distance returned by the oracle 
is the length of a shortest path in the spanner.
Therefore, we can use the oracles in the leaves of the trees to report distances,
giving a low query time,
and employ the spanners as proxies for the graphs associated with the intermediate nodes.
For this to work, we have to carefully tweak the computation of the spanners
and interleave it with the sampling process 
in order to not blow up the size or stretch too much.
\vspace*{.5em}

\noindent
\textbf{Long Paths.}
We return to the fault-tolerant trees.
By the use of the pivots, we reduced the required number of trees to $\Otilde_{\varepsilon}(n^2/L)$.
But even in the most compact version of FT-trees, this is not enough 
to reach subquadratic space altogether.
The issue is with the representation of expaths as a sequence of $O(f \log n)$ components,
each of which is implicitly represented by its two endpoints.
In~\cite{ChCoFiKa17} this was implemented by storing the original graph distance $d(x,y)$ 
and the predecessor $\pred(x,y)$ of $y$ on the shortest $x$-$y$-path for \emph{all} pairs $x,y$.
This information is used to expand the implicit representation of an expath when needed.
However, the space is again $\Omega(n^2)$.
The key observation to overcome this is that, in our case,
we do not need to encode arbitrary expaths but only those with a particular structure,
e.g., at least one endpoint is a pivot.
This allows us to forgo the need for a quadratic database of all distances. 

We also devise a new procedure to obtain an approximation of $d_{G-F}(s,t)$
by combining the values from the FT-trees with the $f$-DSO for short paths.
Recall that we build one FT-tree for each pair of vertices $(u,v)$ where $u$ or $v$ are pivots.
The main open issue is to find the weight of the edge $\{u,v\}$ in the auxiliary graph $H^F$ 
(see above)
if neither $u$ nor $v$ are pivots and they do not have a short path between them in $G-F$.
Then, w.h.p.\ at least one pivot $b$ hits the $L$-edge prefix of that replacement path.
Therefore, it is sufficient to estimate its length as the sum of an approximation for 
$d_{G-F}^{\le L}(u,b)$ via the $f$-DSO for short paths, 
and an approximation for $d_{G-F}(b,v)$ via the FT-trees.
However, since we do not know the right pivot $b$, we have to scan all of them.
We prove that this results in a stretch of $3+\varepsilon$ and a sublinear query time.

While already being faster than running a shortest-path algorithm 
on a fault-tolerant spanner, this is still not very efficient.
In~\Cref{sec:improved_query_time}, we improve the query time to 
$O_\eps(n^\alpha)$ 
for any constant $0 < \alpha < 1/2$.
We provide an efficient way to check whether the number of pivots in $B$
that are close to $u$ and $v$ in $G-F$ are below the threshold value of $L^{f-1}$
and, if so, find them all. 
If only a few pivots are around $u$ (or $v$),
we can afford to scan them as described above.

The complementary case of many pivots around both endpoints is solved by 
precomputing a set of $\Otilde_\eps(n/L^f)$ \emph{new pivots}, much fewer than before,
and generalizing the FT-trees to granularity $\lambda > 0$.
This ensures that, in any node $\nu$, the first and last $\lambda$ edges 
of the corresponding path $P_\nu$ each form their own segment.
High granularity thus makes the generalized trees much larger.
For comparison, the maximum granularity $\lambda = n$ would unwind \emph{all} the efforts
taken in~\cite{ChCoFiKa17} to reduce their size, as summarized at the beginning of this section.
We can still fit the trees in subquadratic space by building
$FT_\lambda(b,b')$ only for pairs $b,b'$ of new pivots.

The $u$-$v$-distance in $G-F$ in the case of many \emph{original} pivots around $u$ and $v$ is approximated as follows. We compute two \emph{new} pivots $b_u,b_v$,
with $b_u$ close to $u$ in $G-F$ and $b_v$ close to $v$.
The approximate length of the shortest path from $u$ to $v$ in $G-F$ is computed by the overall sum of (\textit{i}) an approximation of the distance from $u$ to $b_u$ in $G-F$, (\textit{ii}) an approximation of the distance from $b_u$ to $b_v$ in $G-F$ computed by querying $FT_\lambda(b_u,b_v)$, and (\textit{iii}) an approximation of the distance from $b_v$ to $v$ in $G-F$.
We make sure to have a granularity $\lambda \le L$ so that we can obtain the terms (\textit{i}) and (\textit{iii})
from our $f$-DSO for short paths.

\section{Preliminaries}
\label{sec:prelims}

We let $G = (V,E)$ denote the undirected and unweighted base graph with $n$ vertices and $m$ edges.
We tacitly assume $m = \Omega(n)$.
For any undirected \mbox{(multi-)graph} $H$, which may differ from the input $G$,
we denote by $V(H)$ and $E(H)$ the set of its vertices and edges, respectively.
Let $P$ be a path in $H$ from a vertex $s \in V(H)$ to $t \in V(H)$,
we say that $P$ is an \emph{$s$-$t$-path} in $H$. 
We denote by $|P| = |E(P)|$ the \emph{length} of $P$.
For vertices $u,v \in V(P)$, we let $P[u..v]$ denote the subpath of $P$ from $u$ to $v$.
Let $P = (u_1, \dots, u_i)$ and $Q = (v_1, \dots, v_j)$ be two paths in $H$.
Their \emph{concatenation} is $P \circ Q = (u_1, \dots, u_i, v_1, \dots, v_j)$,
which is well-defined if $u_i = v_1$ or $\{u_i,v_1\} \in E(H)$.
For $s,t \in V(H)$, the \emph{distance} $d_H(s,t)$ 
is the minimum length of any $s$-$t$-path in $H$;
if $s$ and $t$ are disconnected, we set $d_H(s,t) =+ \infty$.
When talking about the base graph $G$, we drop the subscripts.

A \emph{spanning} subgraph of a graph $H$
is one with the same vertex set as $H$ but possibly any subset of its edges.
This should not be confused with a spanner.
A \emph{spanner of stretch}~$\sigma \ge 1$, or \emph{$\sigma$-spanner},
is a spanning subgraph $S \subseteq H$
such that additionally for any two vertices $s,t \in V(S) = V(H)$, 
it holds that $d_H(s,t) \le d_S(s,t) \le \sigma \cdot d_H(s,t)$.
A \emph{distance oracle} (DO) for $H$ is a data structure 
that reports, upon query $(s,t)$, the distance $d_H(s,t)$.
It has \emph{stretch} $\sigma \ge 1$, or is $\sigma$\emph{-approximate},
if the reported value $\widehat{d}(s,t)$
satisfies $d_H(s,t) \le \widehat{d}(s,t) \le \sigma \cdot d_H(s,t)$.

For a set $F \subseteq E$ of edges,
let $G{-}F$ be the graph obtained from $G$ by removing all edges in $F$.
For any two $s,t \in V$, a \emph{replacement path} $P(s,t,F)$ 
is a shortest path from $s$ to $t$ in $G{-}F$. 
Its length $d_{G-F}(s,t)$ is the \emph{replacement distance}.
Let $L$ be a positive integer.
We call a path in (a subgraph of) $G$ \emph{short} if it has at most $L$ edges,
and \emph{long} otherwise.
Let $d_{G-F}^{\le L}(s,t)$ be the minimum length
of any short $s$-$t$-paths in $G-F$, or $+\infty$ if no such path exists.

For a positive integer $f$, an \mbox{\emph{$f$-distance sensitivity oracle}} (DSO)
answers queries $(s,t,F)$ with $|F| \le f$ by reporting the replacement distance $d_{G-F}(s,t)$.
The stretch of a DSO is defined as for DOs.
The maximum number $f$ of supported failures is called the \emph{sensitivity}.
We measure the space complexity of any data structure in the number of $O(\log n)$-bit machine words.
The size of the input graph $G$ does not count against the space, 
unless it is stored explicitly.

\section{Handling Short Paths}
\label{sec:short_paths}

We develop here our $(2k{-}1)$-approximate solution for short replacement paths.
It will later be used for the general distance sensitivity oracle.
We first review (and slightly modify) 
the DO and spanner of Thorup and Zwick~\cite{ThorupZ05}
to an extent that is needed to present our own construction.

\subsection{The Distance Oracle and Spanner of Thorup and Zwick}
\label{subsec:short_paths_TZ}

For any positive integer $k$, 
Thorup and Zwick~\cite{ThorupZ05} devised a DO
that is computable in time $\Otilde(kmn^{1/k})$, has size $O(kn^{1+1/k})$, query time $O(k)$,
and a stretch of $2k-1$.
Their stretch-space tradeoff is essentially optimal for sufficiently dense graphs,
assuming the Erdős girth conjecture~\cite{ThorupZ05}.
For sparse graphs, better constructions are known~\cite{Abraham11AffineStretch,PatrascuRoditty14BeyondThorupZwick,Patrascu11NewInfinity},
including subquadratic-space DOs with a stretch less than $2$~\cite{Agarwal14SpaceStretchTimeTradeoffDOs,AgarwalBrightenGodfrey13DOsStretchLessThan2}.

We first review the Thorup and Zwick construction before discussing our changes.
First, a family of vertex subsets 
$V = X_0 \supseteq X_1\supseteq \cdots \supseteq X_{k-1}  \supseteq X_k = \emptyset$ is computed.
Each $X_i$ is obtained by sampling the elements of $X_{i-1}$ independently with probability $n^{-1/k}$.
We keep this family fixed and
apply the construction to a variety of subgraphs of $G$.

\setcounter{AlgoLine}{0}
\begin{algorithm}[H]
%\setstretch{1.25}
    $w\gets s$\; 
    $i\gets 0$\; 
    \While{$w\notin \bigcup_{j=0}^{k-1}X_{j,H}(t)$}
    {
        $i\gets i+1$\;
        $(s,t)\gets (t,s)$\;  
        $w\gets p_{i,H}(s)$\;
    }
    \Return $d_H(s,w)+d_H(w,t)$\;
\caption{Original query algorithm \cite{ThorupZ05} of the distance oracle for the pair $(s,t)$.}
\label{alg:query_TZ}
\end{algorithm}

Let $H$ be such a subgraph for which the oracle needs to be computed.
For any $v\in V$ and $0\le i<k$, 
let $p_{i,H}(v)$ be the closest vertex\footnote{%
  We have $p_{i,H}(v) = v$  for all $i$ small enough so that $X_{i}$ still contains $v$.
}
to $v$ in $X_i$ in the graph $H$,
ties are broken in favor of the vertex with smaller label.
The distances from $v$ to all elements in
\begin{equation*}
	X_{i,H}(v)=\{x\in X_i \mid d_H(v,x) < \min_{y\in X_{i+1}}d_H(v,y)\} \cup \{p_{i,H}(v)\}
\end{equation*}
are stored in a hash table.
In other words, $X_{i,H}(v)$ contains those vertices of $X_i{\setminus}X_{i+1}$ that are closer to $v$
than any vertex of $X_{i+1}$.
This completes the construction of the DO for $H$.
Note that while the $X_i$ are fixed,
the sets $X_{i,H}(v)$ and vertices $p_{i,H}(v)$ may differ for the various subgraphs $H \subseteq G$
as the underlying distance function $d_H$ changes.

The oracle is accompanied by a $(2k{-}1)$-spanner with $O(k n^{1+1/k})$ edges.
It stores all those edges of $H$ that lie on
a shortest path between $v$ and a vertex in $\bigcup_{0\le i<k}X_{i,H}(v)$,
again ties between shortest paths are broken using the edge labels.

\Cref{alg:query_TZ} describes how the oracle handles the query $(s,t)$.
The returned distance can be shown to overestimate $d_H(s,t)$ by at most a factor $2k{-}1$~\cite[Lemma~3.3]{ThorupZ05}.
We would like to use this construction in many different subgraphs $H,G'$
with $H \subseteq G' \subseteq G$
and have it satisfy a certain \emph{inheritance property}.
That means, if some approximate shortest $s$-$t$-path in the larger graph $G'$
only uses edges already present in the smaller graph $H$,
then the same path should be used in both $H$ and $G'$ 
to compute the estimate reported by the oracle.
For a precise statement see \Cref{lem:new_query_inheritance}.
This is not the case for \Cref{alg:query_TZ}.
It computes the vertex $p_{i,H}(s)$ with \emph{smallest index} $i$
that is contained in $\bigcup_{j=0}^{k-1} X_{j,H}(t)$
(respectively, the $p_{i,H}(t)$ with the smallest index in~$\bigcup_{j=0}^{k-1} X_{j,H}(s)$).
As more edges are added to get from $H$ to $G'$,
both the sets $X_{j,G'}$ and the first vertex $p_{i,G'}(s)$ (respectively, $p_{i,G'}(t)$) 
satisfying the inclusion relations may change.

We instead use a slightly modified version as presented in~\Cref{alg:query_new}. 
It considers the vertices $p_{i,H}(s)$ (respectively, $p_{i,H}(t)$) for \emph{all indices} $i$,
checks whether they satisfy the inclusion relations,
and chooses the one minimizing the combined distance.
Observe that the estimate $\widehat{d}$ produced by our version is at most the value returned 
by the original one and at least the actual distance between $s$ and $t$.
As before, for any $s$ and $t$, 
the path corresponding to the new estimate is a concatenation of 
at most two original shortest paths in $H$.
The interconnecting vertex is either $p_{i,H}(s)$ or $p_{i,H}(t)$ for some $i$,
we denote it as $u_{s,t,H}$, and the $(2k{-}1)$-approximate
shortest path as $P_{s,t,H}$.
We show next that the adapted query algorithm has the inheritance property.

\setcounter{AlgoLine}{0}
\begin{algorithm}[H]
%\setstretch{1.33}
    $\widehat{d} \gets \infty$\; 
    \For{$i=0$ \textup{\textbf{to}} $k-1$}
    {
        \If{$p_{i}(s)\in \bigcup_{j=0}^{k-1}X_{j,H}(t)$}
          {$\widehat{d} \gets \min\!\big\{\nwspace \widehat{d}, \
            d_H(s,p_{i}(s)) + d_H(p_{i}(s),t) \big\}$}
        \If{$p_{i}(t)\in \bigcup_{j=0}^{k-1}X_{j,H}(s)$}
          {$\widehat{d} \gets \min\!\big\{ \nwspace \widehat{d}, \
            d_H(t,p_{i}(t))+d_H(p_{i}(t),s) \big\}$}
    }
    \Return $\widehat{d}$\;
    \caption{Modified query algorithm of the distance oracle for the pair $(s,t)$.}
    \label{alg:query_new}
\end{algorithm}

 \begin{lemma}[Inheritance property]
\label{lem:new_query_inheritance}
  Let $H \subseteq G' \subseteq G$ be two spanning subgraphs of $G$, $s,t \in V$ two vertices,
  and $P_{s,t,G'}$ the approximate shortest path underlying the value returned by the
  (modified) distance oracle for $G'$.
  If $P_{s,t,G'}$ also exists in $H$, then $P_{s,t,H} = P_{s,t,G'}$,
  Moreover, the oracle for $H$ returns $|P_{s,t,G'}|$ upon query $(s,t)$.
\end{lemma}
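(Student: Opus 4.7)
The plan is to trace the modified oracle's execution on $H$ and show that it recreates the same iteration that witnesses $P_{s,t,G'}$ in the $G'$-query. Let $u = u_{s,t,G'}$ be the interconnecting vertex of $P_{s,t,G'}$; by the symmetry between $s$ and $t$, assume $u = p_{i^*,G'}(s)$ for some index $i^*$, with eligibility witnessed by $u \in X_{j^*,G'}(t)$. Then $P_{s,t,G'}$ is the concatenation of the shortest $s$-$u$-path $P_1$ and shortest $u$-$t$-path $P_2$ in $G'$. Since $P_1$ and $P_2$ both lie inside $H \subseteq G'$ and $d_H(\cdot,\cdot) \ge d_{G'}(\cdot,\cdot)$, these paths are also shortest in $H$, yielding the identities $d_H(s,u) = d_{G'}(s,u)$ and $d_H(u,t) = d_{G'}(u,t)$.

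From these identities I would derive two inheritance facts. First, $p_{i^*,H}(s) = u$: every $v \in X_{i^*}$ satisfies $d_H(s,v) \ge d_{G'}(s,v) \ge d_{G'}(s,u) = d_H(s,u)$, so $u$ remains a closest vertex in $X_{i^*}$ in $H$; any equality $d_H(s,v) = d_H(s,u)$ collapses to an equality in $G'$, so the fixed lexicographic tiebreaker still selects $u$. Second, $u \in X_{j^*,H}(t)$: if $u = p_{j^*,G'}(t)$, the previous argument applied symmetrically gives $u = p_{j^*,H}(t)$; otherwise $d_{G'}(u,t) < \min_{y \in X_{j^*+1}} d_{G'}(y,t)$, and combining $d_H(u,t) = d_{G'}(u,t)$ with $d_H(y,t) \ge d_{G'}(y,t)$ lifts the strict inequality to $H$. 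Thus iteration $i^*$ of the modified query on $H$ examines the same pivot $u$, passes the bunch check, and produces the value $d_H(s,u) + d_H(u,t) = |P_{s,t,G'}|$, exhibiting $P_{s,t,G'}$ itself as a candidate path and yielding $\widehat{d}_H \le |P_{s,t,G'}|$.

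The remaining task, which I expect to be the main subtlety, is to rule out that some other iteration of the $H$-query produces a strictly smaller value; otherwise $P_{s,t,H}$ could diverge from $P_{s,t,G'}$. The difficulty is that for a generic iteration $i$ the pivot $w = p_{i,H}(s)$ need not coincide with $p_{i,G'}(s)$, so the contribution $d_H(s,w) + d_H(w,t)$ is not immediately a candidate considered by the $G'$-query. My plan is to re-apply the inheritance pattern in the opposite direction: for any $w$ whose eligibility in $H$ is witnessed by a bunch $X_{j,H}(t)$, use $d_H \ge d_{G'}$ together with the defining inequalities of $X_{j,H}(t)$ and the minimality that defines $p_{i,H}(s)$ to argue that either $w$ already witnesses an eligible pair in the $G'$-query (whose value is trivially at least $|P_{s,t,G'}|$ and lower-bounds $d_H(s,w) + d_H(w,t)$ by monotonicity), or the shortest $s$-$w$- and $w$-$t$-paths remain common to $G'$ and $H$, in which case the contribution cannot drop below $|P_{s,t,G'}|$ either. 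Combined with the upper bound from the previous paragraph this forces $\widehat{d}_H = |P_{s,t,G'}|$ and pins $P_{s,t,H}$ to $P_{s,t,G'}$, completing the proof of the inheritance property.
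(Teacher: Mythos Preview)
Your first two paragraphs are correct and are exactly the paper's argument: show $u=p_{i^*,H}(s)$, $u\in X_{j^*,H}(t)$, $d_H(s,u)=d_{G'}(s,u)$, $d_H(u,t)=d_{G'}(u,t)$, and conclude that iteration $i^*$ of the $H$-query produces the candidate value $|P_{s,t,G'}|$. The paper's proof stops right there and simply asserts ``As a result, we have $u=u_{s,t,H}$''; it never addresses the point you raise in your third paragraph.

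You are right that the literal statement $P_{s,t,H}=P_{s,t,G'}$ requires ruling out any other iteration of the $H$-query returning a strictly smaller value, and your sketched dichotomy (either $w$ is already eligible in the $G'$-query, or the $s$-$w$ and $w$-$t$ shortest paths coincide in $H$ and $G'$) is not obviously exhaustive. A pivot $w=p_{i,H}(s)$ can differ from every $p_{i',G'}(s)$ precisely because the shortest $s$-$p_{i,G'}(s)$-path in $G'$ is destroyed in $H$, and there is no reason the surviving $s$-$w$-path in $H$ should be shortest in $G'$. So this part of your plan is not a proof, and neither is the paper's bare assertion.

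That said, look at how the lemma is actually used. In the proof that a well-behaved leaf gives a good estimate, only the upper bound $D_x(s,t)\le|P_{s,t,G-F}|$ is invoked. In the appendix proof that well-behavedness propagates to a leaf, the needed conclusion is that $P_{s,t,G-F}$ lies in the Thorup--Zwick spanner $S_x$ for $H=S_y-A_x$; this follows directly from $u=p_{i^*,H}(s)\in X_{i^*,H}(s)$ and $u\in X_{j^*,H}(t)$, because the spanner for $H$ by construction contains the shortest $s$-$u$- and $u$-$t$-paths in $H$, which you have shown equal $P(s,u)$ and $P(u,t)$. Neither application needs $u$ to be the \emph{minimizer} in the $H$-query. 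Your first two paragraphs therefore already prove everything the paper actually uses; the exact-equality clause in the lemma statement is stronger than what is established (by you or by the paper) and stronger than what is needed.
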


\begin{proof}
  Recall that $P_{s,t,G'}$ is a concatenation of two shortest paths in $G'$,
  say, $P(s,u)$ and $P(u,t)$, where $u = u_{s,t,G'}$ is the interconnecting
  vertex in $\bigcup_{j<k}X_{j,G'}(s) \nwspace \cup \nwspace \bigcup_{j<k}X_{j,G'}(t)$  
  that minimizes the sum of distances $d_{G'}(s,u)+d_{G'}(u,t)$.
  Without losing generality, we have $u = p_{i,G'}(s)$ for some $0 \le i < k$;
  otherwise, we swap the roles of $s$ and $t$.
  Let $0 \le j < k$ be such that $u \in X_{j,G'}(t)$.
  
  For any spanning subgraph $H \subseteq G'$ that contains the path $P_{s,t,G'}$,
  it holds that $u=p_{i,H}(s)$ and $u \in X_{j,H}(t)$.
  Here, we use that the tie-breaking for the $p_{i,H}(s)$ does not depend on the edge set of $H$.
  Moreover, the shortest $s$-$u$-path and $u$-$t$-path in the spanner for $H$
  are the same as in $G$, that is, $P(s,u)$ and $P(u,t)$.
  As a result, we have $u = u_{s,t,H}$ and $P_{s,t,G'} = P_{s,t,H}$.
  The second assertion of the lemma follows from $d_H(s,u) = |P(s,u)|$ and $d_H(u,t) = |P(u,t)|$.
\end{proof}

\subsection{Tree Sampling}
\label{subsec:short_paths_tree_sampling}

We present our fault-tolerant oracle construction for short paths.
Recall that a path in $G$ is short if it has at most $L$ edges,
and that $d_{G-F}^{\le L}(s,t)$ is the minimum distance over short $s$-$t$-paths in $G-F$.
Note that, while we assume $f$ and $k$ to be constants, $L$ may depend on $m$ and $n$.
We prove \Cref{thm:oracle_short_paths} in the remainder of the section.
   
We first compute the vertex sets $X_0, \dots,X_k$.
Define {$h=\sqrt{f\ln L}$}, {$K = \big\lceil((2k{-}1)L)^{{f}/{h}}\big\rceil $, $p = K^{-1/f}$, and $I = C \cdot 11^{h} \ln n$} 
for some sufficiently large constant $C > 0$ (independent of $f$ and~$k$).
We build $I$ rooted trees $T_1, \dots, T_I$, each of height $h$,
such that any internal node has {$K$} children.
For the following description, we fix some tree $T_i$ and use $x$ to denote a node in $T_i$.
Let $y$ be the parent of $x$ in case $x$ is not the root.
We associate with each $x$ a subset of edges $A_x \subseteq E$ 
and a spanning subgraph $S_x \subseteq G$ in recursive fashion.
For the root of $T_i$, set $A_x = E$;
otherwise $A_x$ is obtained by selecting each edge of $A_y$ independently with probability $p$.
The random choices here and everywhere else
are made independently of all other choices.

Let $r$ be the depth of $x$ in $T_i$ (where the root has depth $r = 0$).
Define $J_r = 4 \cdot K^{h-r}$ for $r < h$, and $J_h = 1$.
The graph $S_x$ is constructed in $J_r$ rounds.
In each round, we sample a subset $A \subseteq A_x$ 
by independently selecting each edge with probability $p^{h-r}$.
We then compute the Thorup-Zwick spanner of $S_y\,{-}\,A$
using the family $X_0,\dots,X_k$.
Slightly abusing notation, if $x$ is the root, we define $S_y = G$ here.
We set $S_x$ to be the union of all those spanners.
Note that, for a leaf $x$ at depth $r = h$, then $A = A_x$ with probability $1$,
so indeed only $J_h = 1$ iteration is needed.

For each node, we store a dictionary of the edge set $E(S_x)$ 
and (except for the root) $A_x \cap E(S_y)$.
We use the static construction of Hagerup, Bro Miltersen, and Pagh~\cite{HagerupMiltersenPagh01DeterministicDictionaries}
that, for a set $M$, has space $O(|M|)$, preprocessing time $\Otilde(|M|)$,
and query time $O(1)$.
For each leaf of a tree, we store the (modified) distance oracle $D_x$.
At depth $0 \le r \le h$, 
the tree $T_i$ has $K^r$ nodes.
The largest dictionary at depth $r$ is for $A_x \cap E(S_y)$
of size $O(J_{r-1} \cdot kn^{1+1/k}) = O(K^{h-r+1} \nwspace n^{1+1/k})$
(using that $k$ is constant).
Due to $K = O((2k{-}1)^{f/h} L^{f/h})$ and $h = \sqrt{f \ln n}$,
we have $K^{h+1} = O(L^{f+o(1)})$ (using that $f$ is constant as well).
In total, our data structure requires 
$O(I \cdot h\cdot K^{h+1} \nwspace n^{1+1/k}) = \Otilde(L^{f+o(1)} \nwspace n^{1+1/k})$ space
and can be preprocessed in time 
$O(I \cdot h\cdot K^{h+1} (kmn^{1/k} + kn^{1+1/k})) = \Otilde(L^{f+o(1)} \nwspace mn^{1/k})$.

\subsection{Query Algorithm}
\label{subsec:short_paths_query}

\setcounter{AlgoLine}{0}
\begin{algorithm}[H]
%\setstretch{1.33}
    $\widehat{d} \gets \infty$\;
\For{$i=1$ \textbf{to} $I$}
{
    $y\gets$ root of $T_i$\;
    \While{$y$ is not leaf}
    {
        \ForEach{child $x$ of $y$}
        {
            \If{$F \cap E(S_y) \subseteq A_x$}
            {
                $y \gets x$\; 
                \textbf{continue} while-loop\; 
            }
        
        }
        \textbf{break} while-loop\;
    }
    \lIf{$y$ is leaf}
    {$\widehat{d} \gets \min\!\big\{ \nwspace \widehat{d}, \ D_y(s,t) \big\}$}
}
\Return $\widehat{d}$\;
\caption{Algorithm to answer the query $(s,t,F)$. $D_y$ is the distance oracle associated with the leaf $y$.}
\label{alg:query_short_paths}
\end{algorithm}

\Cref{alg:query_short_paths} presents the query algorithm to report approximate distances.
Fix a query $(s,t,F)$ where $s,t \in V$ are two vertices 
and $F \subseteq E$ is a set of at most $f$ edges.
For each of the $I$ trees, we start at the root and recurse on an arbitrary child,
computed in the inner for-loop, that satisfies $F \cap E(S_y) \subseteq A_x$,
where $y$ is parent of $x$.
Note that the set $A_x$ is not stored as it may be too large.
(We have $|A_x| = m$ in the root.)
The test is equivalent to $F \cap E(S_y) \subseteq A_x \cap E(S_y)$
and can be performed in time $O(f)$ using the stored dictionaries.
If at some point no child satisfies the condition, the algorithm resumes with the next tree.
Once a leaf $y$ is reached, we query the associated (modified) distance oracle $D_y$
with the pair $(s,t)$.
Finally, the algorithm returns the minimum of all oracle answers.

We set $h = \sqrt{f \ln L}$.
This gives $I = \Otilde(11^{\sqrt{f \ln L}}) = \Otilde(L^{o(1)})$ sampling trees
and $K = ((2k{-}1)L)^{\sqrt{f}/\sqrt{\ln L}} = \Otilde(L^{o(1)})$ children per node.
The total query time is $I \cdot O(fh K + k) = \Otilde(L^{o(1)})$.

We are left to prove correctness.
That means, we claim that w.h.p.\ the returned estimate is at least as large as 
the replacement distance $d(s,t,F)$ and,
if $s$ and $t$ are joined by a short path in $G\,{-}\,F$,
then this estimate is also at most $(2k{-}1) \nwspace d_{G-F}^{\le L}(s,t)$.
Consider the Thorup-Zwick spanner for $G-F$
and in it the approximate shortest path $P_{s,t,G-F}$ 
(as defined ahead of~\Cref{lem:new_query_inheritance}).
If $s$ and $t$ have a short path in $G-F$, then $P_{s,t,G-F}$ has at most $(2k{-}1)L$ edges.

Let $x$ be a node at depth $r$ in the tree $T_i$ and let $S_y$ be the spanner associated to its parent
(or $S_y = G$ if $x$ is the root).
We say $x$ is \emph{well-behaved} if it satisfies the following three properties.
\begin{enumerate}[(1)]
\item \label[prop]{prop:F_missing} $F\cap E(S_y)\subseteq A_x$.
\item \label[prop]{prop:P_small} Either $x$ is a root or $|E(P_{s,t,G-F})\cap A_x| < K^{\frac{h-r}{f}}$.
\item \label[prop]{prop:P_spanner} The path $P_{s,t,G-F}$ is contained in $S_x$.
\end{enumerate}

Our query algorithm follows a path from the root to a leaf node
such that at each node~\Cref{prop:F_missing} is satisfied. 
We show in the following lemma that any child $x$ of a well-behaved node $y$
that fulfills~\Cref{prop:F_missing} is itself well-behaved with constant probability.

\begin{restatable}{lemma}{approxparenttochild}
\label{lemma:approx_parent_to_child}
  The following statements hold for any non-leaf node $y$ in the tree $T_i$.
  \begin{enumerate}[(i)]
    \item If $y$ satisfies~\Cref{prop:F_missing}, then
      with probability at least $1-\frac{1}{e}$ there exists a child of $y$ 
      that satisfies~\Cref{prop:F_missing}.
    \item If $y$ satisfies~\Cref{prop:P_small}, then any child of $y$
      satisfies~\Cref{prop:P_small} with probability at least $\frac{1}{4}$.
    \item If $y$ is well-behaved
      and a child $x$ of $y$ satisfies~\Cref{prop:F_missing,prop:P_small}, 
      then the probability of $x$ being well-behaved is at least $1-\frac{1}{e}$.
  \end{enumerate}
  The root of $T_i$ is well-behaved with probability at least $1-\frac{1}{e}$.
\end{restatable}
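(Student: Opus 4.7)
The plan is to establish each assertion by translating the sampling semantics of the tree into independence-based probability estimates. For part~(i), I begin by observing that \Cref{prop:F_missing} for $y$ forces $F \cap E(S_y) \subseteq A_y$: every edge of $S_y$ is inherited from $S_{y'}$, the parent of $y$, because each round builds a spanner of a subgraph of $S_{y'}$, so $F \cap E(S_y) \subseteq F \cap E(S_{y'}) \subseteq A_y$. The at most $f$ edges of $F \cap E(S_y)$ thus lie in $A_y$, each surviving independently into $A_x$ with probability $p = K^{-1/f}$. A single child therefore fulfills \Cref{prop:F_missing} with probability at least $p^f = 1/K$, and by independence of the $K$ siblings the probability that none does is at most $(1 - 1/K)^K \leq 1/e$. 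The root satisfies \Cref{prop:F_missing} trivially because $A_{\mathrm{root}} = E$.

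For part~(ii), let $X = |E(P) \cap A_x|$, a binomial with $n = |E(P) \cap A_y|$ trials and success probability $p$. \Cref{prop:P_small} for $y$ gives $n < K^{(h-r+1)/f}$ (resp.\ $n \leq (2k{-}1)L \leq K^{h/f}$ in the root case), so $\Ev[X] = np < K^{(h-r)/f}$. Because the expectation may lie only marginally below the threshold, a plain Markov inequality does not suffice, and I would instead use the anti-concentration fact that $\lfloor np \rfloor$ or $\lceil np \rceil$ is a median of $\mathrm{Bin}(n,p)$, combined with careful handling of the boundary case $np \in (K^{(h-r)/f}-1,\, K^{(h-r)/f})$, to conclude $\Pr[X < K^{(h-r)/f}] \geq 1/4$.

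For part~(iii), fix one of the $J_r = 4 K^{h-r}$ rounds used to build $S_x$ and call it successful when the sampled $A \subseteq A_x$ satisfies both (a)~$F \cap E(S_y) \subseteq A$ and (b)~$E(P) \cap A = \emptyset$. Under~(a), $S_y - A$ is a subgraph of $G - F$; under~(b) together with \Cref{prop:P_spanner} for $y$, the path $P$ lies in $S_y - A$. The inheritance property (\Cref{lem:new_query_inheritance}) then forces the Thorup--Zwick spanner of $S_y - A$, and hence $S_x$, to contain $P$. Events~(a) and~(b) are independent because $F$ and $E(P)$ are disjoint (as $P \subseteq G - F$), yielding a per-round success probability of at least $(p^{h-r})^f \cdot (1 - p^{h-r})^{K^{(h-r)/f}} \geq K^{-(h-r)}/4$. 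Independence across the $J_r$ rounds then gives $\Pr[\text{no round succeeds}] \leq (1 - K^{-(h-r)}/4)^{4 K^{h-r}} \leq 1/e$. The root case is the identical computation with $r = 0$, $S_y = G$, $J_0 = 4 K^h$, and \Cref{prop:F_missing} and \Cref{prop:P_small} vacuous.

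The main obstacle I anticipate is part~(ii): since $\Ev[X]$ can be arbitrarily close to the integer threshold $K^{(h-r)/f}$, the argument has to go beyond Markov and exploit the binomial's anti-concentration near its median. The other parts reduce to routine independence estimates combined with $(1 - 1/y)^y \leq 1/e$, and the round count $J_r = 4K^{h-r}$ is calibrated so that the expected number of successful rounds is exactly one.
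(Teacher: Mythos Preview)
Your approach matches the paper's almost line for line: part~(i) via independence over the $K$ children, part~(ii) via the binomial mean bound, part~(iii) via a per-round success probability of $(p^{h-r})^f \cdot (1-p^{h-r})^{|E(P)\cap A_x|}$ amplified over $J_r$ rounds. Your justification in part~(i) that $E(S_y)\subseteq E(S_{y'})$ forces $F\cap E(S_y)\subseteq A_y$ is in fact a detail the paper glosses over.

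Two small points. First, in part~(iii) your displayed lower bound $(1-p^{h-r})^{K^{(h-r)/f}}$ degenerates to $0^1=0$ when $x$ is a leaf ($r=h$), so the leaf case needs a separate sentence: there \Cref{prop:P_small} gives $|E(P)\cap A_x|<1$, i.e.\ $E(P)\cap A_x=\emptyset$, and the single round with $A=A_x$ succeeds deterministically. The paper handles this case apart for the same reason. Second, your caution about part~(ii) is well placed; the paper itself disposes of the step with a one-line appeal to the ``central limit theorem'' to get $\Pr[X\ge \Ev[X]]\le 3/4$, so your median-based plan is, if anything, more honest about what needs to be checked.
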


\begin{proof}
  Assume that node $y$ satisfies~\Cref{prop:F_missing},
  that means $F\cap E(S_y)\subseteq A_y$.
  Let $x_1,\ldots,x_K$ be the child nodes of $y$.
  Each edge of $A_y$ is sampled into $A_{x_j}$ with probability $p$.
  The probability that there exists some child $x_j$ with
  $F\cap E(S_y) \subseteq A_{x_j}$ is therefore
  \begin{multline*}
    1-\prod_{j=1}^{K}\Pb\!\Big[F\cap E(S_y) \nsubseteq A_{x_j}\Big]
      = 1-\prod_{i=1}^{K} \left(1-p^{|F\cap E(S_y)|} \right)\\
      = 1-\left(1-\frac{1}{K^{|F\cap E(S_y)|/f}} \right)^K
      \geq  1-\left(1-\frac{1}{K}\right)^{K} 
       \geq 1-\frac{1}{e}.
  \end{multline*} 

  For the second statement, let $r{-}1$ be the depth of $y$ in $T_i$.
  Recall that the path $P = P_{s,t,G-F}$ has at most $(2k{-}1)L$ edges.
  By our assumption of $y$ satisfying~\Cref{prop:P_small}, 
  at most $K^{\frac{h-r+1}{f}}$ of those are in $A_y$. 
  Let $x$ be a child of $y$.
  We first analyze the case that $x$ is a leaf, that is, $r=h$.
  \begin{equation*}
    \Pb\!\Big[E(P)\cap A_x=\emptyset \Big] 
      = (1-p)^{|E(P)\cap A_y|}
       = \left( 1- \frac{1}{K^{1/f}} \right)^{|E(P)\cap A_y|}
      \geq \left( 1- \frac{1}{K^{1/f}}\right)^{K^{1/f}}
       \geq \frac{1}{4}.
  \end{equation*}
  
  Now suppose $r < h$.
  Define the random variable $M = |E(P) \cap A_x|$ to be the number of edges of the path $P$
  that are contained in $A_x$.
  Since $A_x$ is obtained by sampling edges from $A_y$ independently with probability $p$,
  the variable $M$ is binomially distributed with parameters $|E(P) \cap A_y|$ and $p$.
  The parent $y$ satisfies~\Cref{prop:P_small},
  which implies  $\Ev[M] \le p K^{\frac{h-r+1}{f}} = K^{\frac{h-r}{f}}$.
  By the central limit theorem, we have
  $\Pb\!\big[ M \geq K^{\frac{h-r}{f}} \big]
      \le \Pb\!\big[ M \geq \Ev[M] \nwspace \big] \le \frac{3}{4}$.
  In both cases, we see that the child node $x$ also satisfies~\Cref{prop:P_small} 
  with probability at least $\frac{1}{4}$.
  
  We now turn to the third clause of the lemma. 
  Suppose $y$ is well-behaved and its child node $x$ fulfills~\Cref{prop:F_missing,prop:P_small}.
  If $x$ is a leaf, it is well-behaved deterministically.
  Indeed, in this case, subgraph $S_x$ is just the Thorup-Zwick spanner for $S_y-A_x$.
  \Cref{prop:F_missing} for $x$ means that $S_y-A_x$ doesn't contain edges of $F$.
  Likewise, $x$ satisfying~\Cref{prop:P_small} with $r = h$ and
  together with $y$ satisfying~\Cref{prop:P_spanner}
  shows that the path $P$ is contained in $S_y-A_x$.
  The inheritance property (\Cref{lem:new_query_inheritance}) applied to
  $G' = G-F$ and $H = S_y-A_x$ gives that $P$ is contained in~$S_x$.
  
  For $r < h$, the argument goes through only with a certain probability.
  The graph~$S_x$ is obtained in $J_r = 4 K^{h-r}$ iterations;
  in each iteration, a subset $A \subseteq A_x$ is sampled
  by selecting each edge with probability $p^{h-r}$,
  and the spanner $H_A$ of $S_y{-}A$ is computed.
  $S_x$ is the union of all $4 K^{h-r}$ $H_A$'s.
  We estimate the probability that the path 
  $P$ exists in $S_y{-}A$ 
  and no failing edge of $F$ is in $S_y{-}A$.
  By inheritance to $H_A$ and taking the union,
  this will imply that $P$ lies in~$S_x$.
  
  We first claim that $\Pb[F\cap (E(S_y){\setminus}A)=\emptyset] = p^{|F\cap E(S_y)| \cdot (h-r)}$.
  To see this, note that~\Cref{prop:F_missing} holding for $x$ means that $F\cap E(S_y)\subseteq A_x$.
  No failure from $F$ is in $S_y{-}A$ if and only if all the edges in $F\cap E(S_y)$
  are chosen for $A$. 
  Our second claim is $\Pb[E(P) \subseteq E(S_y){\setminus}A] = (1-p^{h-r})^{|E(P) \cap A_x|}$.
  It holds that $E(P)\subseteq E(S_y)$ since $y$ is well-behaved (\Cref{prop:P_spanner}).
  Thus, $E(P)\subseteq E(S_y){\setminus}A$ is true
  if and only if none of the edges in $E(P)\cap A_x$ are selected in $A$.
  
  Using the independence of the events and \Cref{prop:P_small} of the node $x$,
  we arrive at
  \begin{multline*}
    \Pb\!\Big[ (F \cap (E(S_y){\setminus}A)=\emptyset) \,\wedge\,
      (E(P) \subseteq E(S_y){\setminus}A) \Big]
      = p^{|F\cap E(S_y)| \cdot (h-r)} \cdot (1-p^{h-r})^{|E(P) \cap A_x|}\\[-.33em]
      \ge  p^{f(h-r)} \cdot \Big(1-p^{h-r} \Big)^{K^{\frac{h-r}{f}}}
       =  \frac{1}{K^{h-r}} \cdot \left( 1-\frac{1}{K^{\frac{h-r}{f}}} \right)^{K^{\frac{h-r}{f}}}
      \ge \frac{1}{4 \cdot K^{h-r}}.
  \end{multline*}
  Iterating this $J_r$ times gives
  \begin{equation*}
    \Pb\!\big[E(P) \subseteq E(S_x)\big] 
      \ge 1-\left(1-\frac{1}{4K^{h-r}}\right)^{J_r}
      = 1-\left(1-\frac{1}{4K^{h-r}}\right)^{4K^{h-r}}
      \ge 1-\frac{1}{e}.
  \end{equation*}
  
  The assertion about the root follows
  by observing that, for the purpose of this proof, the original graph $G$ is the ``parent''
  of the root, meaning that $A_x = E$ and $S_y = G$ both hold.
\end{proof}

The next lemma shows that the distance oracle computed for a well-behaved leaf
reports a $(2k{-}1)$-approximation of the distance in $G-F$ for short paths. 

\begin{lemma}
\label{lem:approx_well-behaved_child}
  Let $s,t \in V$ be two vertices and $F \subseteq E$ a set of at most $f$ edges.
  Furthermore, let $x$ be a leaf in $T_i$ and $D_x$ be the (modified) distance oracle associated with $x$.
  If $x$ satisfies~\Cref{prop:F_missing} with respect to $F$, then $D_x(s,t) \ge d(s,t,F)$.
  Moreover, if $x$ is well-behaved with respect to the approximate shortest path $P_{s,t,G-F}$,
  then $D_x(s,t) \le (2k{-}1) \, d(s,t,F)$.
\end{lemma}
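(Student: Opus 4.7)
The plan is to unpack what $D_x$ really is at a leaf and then apply the inheritance property of the modified Thorup--Zwick oracle from \Cref{lem:new_query_inheritance}. Since $x$ sits at depth $r = h$, the sampling probability $p^{h-r} = 1$, so the unique round at $x$ produces a $(2k{-}1)$-spanner $S_x$ of the graph $S_y - A_x$, and $D_x$ is precisely the modified TZ distance oracle for $S_y - A_x$. In particular, the shortest paths that $D_x$ reports exist in $S_x \subseteq S_y - A_x$ and the returned value is $|P_{s,t,\,S_y - A_x}|$ (in the notation preceding \Cref{lem:new_query_inheritance}).

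For the lower bound, I would use only \Cref{prop:F_missing}, which states $F \cap E(S_y) \subseteq A_x$. This immediately gives the chain of containments
\[
E(S_y - A_x) \;=\; E(S_y) \setminus A_x \;\subseteq\; E(S_y) \setminus F \;\subseteq\; E(G) \setminus F,
\]
so $S_y - A_x$ is a spanning subgraph of $G-F$. Any $s$-$t$-path in $S_y - A_x$ is therefore an $s$-$t$-path in $G-F$, and hence $d_{S_y - A_x}(s,t) \ge d_{G-F}(s,t) = d(s,t,F)$. Because the (modified) TZ oracle never underestimates the distance in the graph it was constructed for, we obtain $D_x(s,t) \ge d_{S_y - A_x}(s,t) \ge d(s,t,F)$.

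For the upper bound, I would invoke \Cref{prop:P_spanner}, which says that the approximate shortest path $P_{s,t,G-F}$ (the one underlying the TZ oracle's answer on $G-F$) is contained in $S_x$, hence also in $S_y - A_x$. Combined with the inclusion $S_y - A_x \subseteq G-F$ established above, the hypotheses of \Cref{lem:new_query_inheritance} are satisfied with $G' = G-F$ and $H = S_y - A_x$. The lemma then yields $P_{s,t,\,S_y - A_x} = P_{s,t,G-F}$ and $D_x(s,t) = |P_{s,t,G-F}|$. Since $P_{s,t,G-F}$ is the concatenation of two shortest paths in $G-F$ that together form a $(2k{-}1)$-approximation of $d_{G-F}(s,t)$, we conclude $D_x(s,t) = |P_{s,t,G-F}| \le (2k{-}1)\, d_{G-F}(s,t) = (2k{-}1)\, d(s,t,F)$.

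The only real subtlety is verifying the hypothesis of the inheritance lemma carefully, namely that $P_{s,t,G-F}$ lies inside $H = S_y - A_x$ (rather than merely inside $S_x$); this is immediate from $S_x \subseteq S_y - A_x$ by construction. Note that \Cref{prop:P_small} plays no role in this lemma: it only enters the probabilistic analysis of \Cref{lemma:approx_parent_to_child}, ensuring that \Cref{prop:P_spanner} can be maintained down the tree. Here we are granted well-behavedness as an assumption, so all the work reduces to the two subgraph containments above.
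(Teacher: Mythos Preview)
Your proof is correct and essentially identical to the paper's: both use \Cref{prop:F_missing} to obtain the chain $S_x \subseteq S_y - A_x \subseteq G-F$ for the lower bound, and \Cref{prop:P_spanner} together with the inheritance property (\Cref{lem:new_query_inheritance}) for the upper bound. Your version is slightly more explicit in spelling out that $D_x$ is the oracle for $S_y - A_x$ and in verifying the hypothesis of the inheritance lemma, but there is no substantive difference.
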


\begin{proof}
As $x$ is a leaf node, $S_x$ is the spanner of the graph $S_y-A_x$
and $D_x$ reports the distances in $S_x$.
By~\Cref{prop:F_missing}, we have $F\cap E(S_y)\subseteq A_x$ 
whence $S_x \subseteq S_y-A_x \subseteq G-F$.
This implies that~$D_x(s,t) = d_{S_x}(s,t) \ge d_{G-F}(s,t) = d(s,t,F)$.
If $x$ is even well-behaved then, by~\Cref{prop:P_spanner}, the path $P_{s,t,G-F}$ lies in $S_x$
and thus by inheritance, $D_x(s,t) \le |P_{s,t,G-F}| \le (2k{-}1) \cdot d(s,t,F)$.
\end{proof}

Our algorithm only ever queries leaves that fulfill~\Cref{prop:F_missing},
it therefore never underestimates the distance $d(s,t,F)$.
Additionally assume that $s$ and $t$ are connected in $G{-}F$ via a path with at most $L$ edges.
To complete the proof of~\Cref{thm:oracle_short_paths},
we need to show that, under this condition and with high probability over all queries,
our algorithm queries at least one well-behaved leaf.
If there is a short $s$-$t$-path in $G{-}F$ then $P_{s,t,G-F}$ has at most $(2k{-}1)L$ edges.
\Cref{lemma:approx_parent_to_child} shows that the root of each tree $T_i$
is well-behaved with probability $1-\frac{1}{e}$, 
and that in each stage the query algorithm finds a well-behaved child node with constant probability.
More precisely, we arrive at a well-behaved leaf with probability at least
$(1-\tfrac{1}{e}) \cdot \Big( (1-\tfrac{1}{e})^2 \nwspace \tfrac{1}{4} \Big)^h
  \ge \tfrac{1}{2} \cdot 11^{-h}$.
Since there are $I = c \cdot 11^h \ln n$ independent trees,
the query algorithms fails for any fixed query with probability at most 
$(1-\frac{1}{2 \cdot 11^h})^I \le n^{-c/2}$.
We choose the constant $c > 0$ large enough to ensure a high success probability over all 
$O(n^2 m^f) = O(n^{2+2f})$ possible queries.

\section{Sublinear Query Time for Long Paths}
\label{sec:sublinear_query_time}

Let $0 < \alpha < 1/2$ be a constant,
where the approximation parameter $\varepsilon > 0$ may depend on $m$ and $n$.
As a warm-up, we construct a
distance sensitivity oracle with the same stretch and space as in \Cref{thm:oracle_long_paths},
but only a sublinear query time of the form $O_{\varepsilon}(n^{1-g(\alpha,f)})$,
for some function $g$.
In \Cref{sec:improved_query_time}, we then show how to reduce the query time to $\Otilde_\eps(n^\alpha)$.
The intermediate solution serves to highlight many of the key ideas 
needed to implement the classical FT-trees in subquadratic space,
but does not yet involve the granularity $\lambda$. 
Recall that we assume that, for every two vertices $u$ and $v$ of $G$,
there is a unique shortest path from $u$ to $v$ in $G$.
Since the short replacement paths are handled by~\Cref{thm:oracle_short_paths},
we focus on long paths.
The structure of this section is as follows.
We first describe the interface of an abstract data structure \emph{FT}
and show how to use it to get a $(3{+}\eps)$-approximation of the replacement distances.
We then implement the data structure \emph{FT} using FT-trees.

\begin{lemma}
\label{lem:long_paths_sublinear_query}
  Let $f$ be a positive integer and $0 < \alpha < 1/2$ a constant.
  For any  undirected, unweighted graph
  with unique shortest paths
  and any $\eps > 0$,
  there exists a $(3{+}\eps)$-approximate \mbox{$f$-DSO}
  that takes space $\Otilde(n^{2-\frac{\alpha}{f+1}}) \cdot O(\log n/\varepsilon)^{f+1}$,
  has query time $n^{1-\frac{\alpha}{f+1}+o(1)}/\varepsilon$,
  and preprocessing time 
  $\Otilde(mn^{2-\frac{\alpha}{f+1}}) \cdot O(\log n/\varepsilon)^{f}$.
\end{lemma}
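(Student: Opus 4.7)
The plan is to set the cut-off for short paths at $L = n^{\alpha/(f+1)}$ and to combine two complementary components: the short-path oracle of \Cref{thm:oracle_short_paths} instantiated with $k=2$, which reports a $3$-approximation of $d_{G-F}^{\le L}(s,t)$ on any query, and a family of fault-tolerant trees restricted to pairs that involve a \emph{pivot}. I would sample a set $B \subseteq V$ of size $\Otilde(n/L)$ uniformly at random; a standard Chernoff argument then shows that w.h.p.\ over all queries, every replacement path on more than $L$ edges contains a pivot among its first $L$ internal vertices.

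For each pair $(u,v) \in V{\times}V$ with $u \in B$ or $v \in B$, I would build the fault-tolerant tree $FT(u,v)$ in the compressed form of \cite{ChCoFiKa17}: every internal node $\nu$ stores a $(2f{+}1)$-expath in $G - A_\nu$, represented implicitly by the $O(f\log n)$ endpoints of its constituent shortest-path pieces, and the path is partitioned into $O(\log n/\varepsilon)$ exponentially increasing segments. The total space occupied by these trees is $O(n|B|) \cdot O(\log n/\varepsilon)^f = \Otilde(n^{2-\alpha/(f+1)}) \cdot O(\log n/\varepsilon)^f$. To expand the implicit pieces at query time, I would precompute only the distances $d(b,v)$ and predecessors $\pred(b,v)$ for every $b \in B$ and $v \in V$, needing $\Otilde(n|B|)$ additional space; this pivot-restricted database replaces the $\Omega(n^2)$ all-pairs table used in \cite{ChCoFiKa17} and is the main source of space savings over the quadratic regime.

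I would package these trees into an abstract routine $FT(u,v,F)$ returning a $(1{+}\varepsilon)$-approximation of $d_{G-F}(u,v)$ whenever $u$ or $v$ lies in $B$. This is implemented exactly as in \cite{ChCoFiKa17}: first invoke the tree query, then run the auxiliary-graph repair procedure on $\{u,v\} \cup V(F)$ to cover the case where $F$ lies close to the true replacement path. All weights required by that subprocedure involve a pivot and are therefore available; a single call costs $O(f^5 \log n)$. To answer an outer query $(s,t,F)$ I would similarly assemble an auxiliary graph on $\{s,t\} \cup V(F)$, fill in each edge weight, and return the shortest $s$-$t$-path in it. For a pair involving a pivot I would call $FT$ directly. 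For a pair $(u,v)$ with no pivot and no short $u$-$v$-path in $G-F$ I would instead scan every pivot and use the composite estimate $\min_{b\in B}\bigl(\widehat{d^{\le L}}(u,b,F) + FT(b,v,F)\bigr)$. The query time is dominated by $O(f^2)\cdot|B|$ calls to the two subroutines, giving $n^{1-\alpha/(f+1)+o(1)}/\varepsilon$ in total.

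The main technical point is the stretch analysis of the composite estimate. Let $b^*$ be a pivot hitting the first $L$ internal vertices of the replacement path $P(u,v,F)$, where $d_{G-F}(u,v) > L$. Then
\[
  \widehat{d^{\le L}}(u,b^*,F) + FT(b^*,v,F)
  \;\le\; 3\,d_{G-F}(u,b^*) + (1{+}\varepsilon)\,d_{G-F}(b^*,v)
  \;\le\; (1{+}\varepsilon)\,d_{G-F}(u,v) + 2L,
\]
and the right-hand side is at most $(3{+}\varepsilon)\,d_{G-F}(u,v)$ because $d_{G-F}(u,v) \ge L$. The hard part will be propagating this per-edge $(3{+}\varepsilon)$-stretch through the auxiliary-graph argument of \cite{ChCoFiKa17}, whose original correctness proof crucially uses that the pair-wise weights are $(1{+}\varepsilon)$-accurate whenever $F$ is far from the true replacement path; one must verify that substituting some of those weights by $(3{+}\varepsilon)$-approximate short-path composites still yields a $(3{+}\varepsilon)$-approximation of $d_{G-F}(s,t)$ overall. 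The preprocessing time in the statement follows from \Cref{thm:oracle_short_paths} together with the cost $\Otilde(m) \cdot O(\log n/\varepsilon)^f$ of building each of the $\Otilde(n^{2-\alpha/(f+1)})$ fault-tolerant trees via the improved expath computation of \Cref{sec:improved_preproc}.
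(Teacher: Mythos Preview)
Your proposal has a genuine gap: the claim that the packaged routine $FT(u,v,F)$ returns a $(1{+}\varepsilon)$-approximation whenever $u$ or $v$ is a pivot cannot be realized in subquadratic space, and the rest of your stretch analysis hinges on it.

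There are two independent obstructions. First, the ``auxiliary-graph repair procedure on $\{u,v\}\cup V(F)$'' from \cite{ChCoFiKa17} requires an FT-tree for \emph{every} pair of vertices in that set, in particular for pairs of failure endpoints in $V(F)$. Those are arbitrary vertices, not pivots, so the assertion that ``all weights required by that subprocedure involve a pivot'' is simply false; you have not built the needed trees. Second, even a single FT-tree traversal cannot give $(1{+}\varepsilon)$ here. To descend the tree one must detect, for each short part $[v,w]$ of the stored expath, whether it contains a failing edge. In \cite{ChCoFiKa17} this is done by comparing $d(v,x)+w(x,y)+d(y,w)$ against $d(v,w)$, which needs all-pairs distances in $G$. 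Your pivot-restricted table $d(b,\cdot),\pred(b,\cdot)$ does not help, because the part endpoints $v,w$ are arbitrary netpoints or decomposition points, not pivots. The paper's workaround (\Cref{lem:weaker_guarantee}) is to store $d(v,w)$ locally in each part and, for short parts, compare it against the stretch-$3$ estimate $\widehat{d^{\le L}}(v,w,F)$; this can only certify that $d_{G-F}(v,w)\le 3\,d(v,w)$, which is why the paper's $FT$ interface guarantees merely $FT(u,b,F)\le 3\,d^{(2f+1)}_{\varepsilon/9}(u,b,F)$ and not a $(1{+}\varepsilon)$-approximation of $d_{G-F}(u,b)$.

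Consequently the paper runs the auxiliary graph $H^F$ only \emph{once}, at the outer level, with edge weights $\min\{\widehat{d^{\le L}}(u,v,F),\ \min_{b\in B} FT(u,b,F)+FT(b,v,F)\}$, and proves the $(3{+}\varepsilon)$ bound directly by induction on $d_{G-F}(u,v)$ using the trapezoid lemma (\Cref{lem:not_too_far_off}) together with the factor-$3$ guarantee of both components (\Cref{lem:sublinear_query}). Your sketch defers exactly this inductive argument as ``the hard part'', but that is the actual content of the proof; the additive-$2L$ calculation you give only covers the easy case where the replacement path is already far from all failures.
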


\subsection{Trapezoids and Expaths}
\label{subsec:sublinear_trapezoids}

For the interface of \emph{FT}, 
we need a bit of terminology from the work by Chechik et al.~\cite{ChCoFiKa17}.
Recall the high-level description of the original FT-trees in \Cref{sec:overview}.
We now make precise what we mean by all failures in $F$ being ``far away'' from a given path.
Let $0< \eps < 3$; moreover, we assume it to be bounded away from $3$.
(Recall that $\varepsilon$ may depend on the input.)
We use $V(F)$ for the set of endpoints of failing edges.

\begin{definition}[$\frac{\eps}{9}$-trapezoid]
\label{def:trapezoid}
  Let $F \subseteq E$ a set of edges, $u,v \in V$,
  and $P$ a $u$-$v$-path in $G-F$.
  The $\frac{\eps}{9}$\emph{-trapezoid} around $P$ in $G-F$ is
  \begin{equation*}
    \tr^{\eps/9}_{G-F}(P) = \big\lbrace\nwspace z \in V{\setminus}\{u,v\} \mid
      \exists y \in V(P) \colon d_{G-F}(y,z)
      \le \frac{\eps}{9}
        \cdot \min(\nwspace \length{P[u..y]}, \length{P[y..v]} \nwspace) \big\rbrace. 
  \end{equation*}
  \noindent
  $P$ is \emph{far away}\footnote{%
      \Cref{def:trapezoid} relaxes the notion of ``far away'' compared to \cite{ChCoFiKa17}
      in that we allow the case
      $\tr^{\eps/9}_{G-F}(P) \cap \{s,t\} \neq \emptyset$
      if $s,t \notin V(F)$.
      This makes the definition independent of the vertices $s$ and $t$ in the query.
      The proof of \Cref{lem:not_too_far_off} remains the same
      using a vertex $z \in V(F)$ instead of $z \in V(H^F) = V(F) \cup \{s,t\}$.
  }
  from $F$ if it exists in $G-F$
  and $\tr^{\eps/9}_{G-F}(P) \cap V(F) = \emptyset$.
\end{definition}

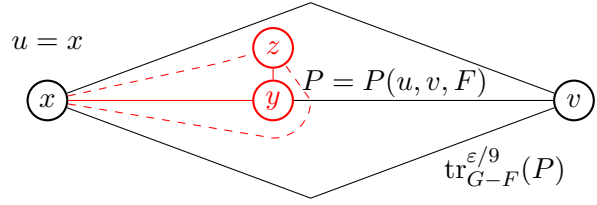
\begin{figure}
\centering
\begin{tikzpicture}
\begin{scope}[scale=1.5]
    \coordinate(left) at (0, 0);
    \coordinate(right) at (7, 0);
    \coordinate(up) at (3.5, 1.3);
    \coordinate(down) at (3.5, -1.3);
    
	\node[vert] (l) [label={above:{$u=x$}}] at (left) {$x$};
	\node[vert] (r) at (right) {$v$};
	\node[vert] (y) [red] at (3,0) {$y$};
	\node[vert] (z) [red] at (3,0.7) {$z$};
	\draw[-]  (l) -- (up) -- (r);
	\draw[-]  (l) -- (down) to node[xshift=2.5em, yshift=-0.5em] {$\tr^{\eps/9}_{G-F}(P)$} (r);
	\draw[-] [red] (l) -- (y) -- (z);
	\draw[-] (y) -- node[xshift=-1.0em, yshift=0.6em] {$P = P(u,v,F)$} (r);
	\draw[-] [red,dashed] (l) -- (3, -0.5) to[in=270, out=0] (3.5, 0) -- (z);
	\draw[-] [red,dashed] (l) -- (2.3, 0.5) -- (z);
 \end{scope}
\end{tikzpicture}
\caption{ A visualization of the trapezoid $\tr^{\eps/9}_{G-F}(P)$ 
	in \Cref{lem:not_too_far_off} for the case $u=x$.
	The vertices $u,v$ are endpoints of failing edges in $F$ or the query vertices $s$ or $t$,
	they are not part of $\tr^{\eps/9}_{G-F}(P)$.
	Vertex $y$ lies on the path $P$ and vertex $z$ is in $V(F)$.
	The replacement path from $y$ to $z$ has length 
	at most $\frac{\varepsilon}{9} \nwspace d_{G-F}(u,y)$.
	The smaller trapezoid around $P[u..y] \circ P(y,z,F)$ (red dashed line) 
	does not contain any vertex from $V(F)$.}
\label{fig:trapezoid}
\end{figure}

The endpoints $u,v$ of $P$ are removed from the trapezoid
to exclude trivialities when applying it to paths
between vertices contained in the failing edges.
Finally, note that, due to $\eps/9 < 1$, 
the distance from $u$ to any vertex in the trapezoid is strictly smaller than $d_{G-F}(u,v)$
(by symmetry, this also holds for $v$).
The idea is that either the path $P$
is already far away from all failures, 
or we can reach our destination via a vertex $z \in \tr^{\eps/9}_{G-F}(P) \cap V(F)$
such that the shortest $u$-$z$-path in $G-F$ is far away from $F$
and only a slight detour.
An illustration is given in \Cref{fig:trapezoid}.

\begin{lemma}[Lemma~2.6 in~\cite{ChCoFiKa17}]
\label{lem:not_too_far_off}
  Let $u,v \in V(F) \cup \{s,t\}$ be endpoints of failing edges or query vertices
  and $P = P(u,v,F)$ their replacement path.
  If $\tr^{\eps/9}_{G-F}(P) \cap V(F) \neq \emptyset$,
  then there are vertices $x \in \{u,v\}$, $y \in V(P)$,
  and $z \in \tr^{\eps/9}_{G-F}(P) \cap V(F)$ satisfying the following statements.
  \begin{enumerate}[(i)]
    \item $\length{P[x..y]} \le \length{P}/2$;
    \item $d_{G-F}(y,z) \le \frac{\eps}{9} \cdot d_{G-F}(x,y)$;
    \item $\tr^{\eps/9}_{G-F}(P[x..y] \circ P(y,z,F)) \cap V(F) = \emptyset$.
  \end{enumerate}
  Thus, the path $P[x..y] \circ P(y,z,F)$ is far away from $F$
  and has length at most $(1+ \tfrac{\eps}{9}) \cdot d_{G-F}(x,y)$.
\end{lemma}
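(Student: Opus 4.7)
The plan is to pick a distinguished tuple $(x,y,z)$ by a two-level minimization and then derive the three properties. Let $\mathcal{T}$ be the set of triples $(x,y,z)$ with $z \in V(F) \cap \tr^{\eps/9}_{G-F}(P)$, $y \in V(P)$ a witness of inclusion (i.e.\ $d_{G-F}(y,z) \le \frac{\eps}{9}\min(|P[u..y]|,|P[y..v]|)$), and $x \in \{u,v\}$ the endpoint attaining this minimum. By hypothesis $\mathcal{T}\neq\emptyset$; among all triples in $\mathcal{T}$ I would pick one that first minimizes $|P[x..y]|$ and, among those, minimizes $d_{G-F}(y,z)$. Property~(i) is then immediate from $|P[x..y]|\le|P|/2$, and property~(ii) follows because $P$ is a shortest $u$-$v$-path in $G{-}F$, so $|P[x..y]|=d_{G-F}(x,y)$, and $y$ witnesses $z$'s membership in the trapezoid.

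For property~(iii), I would argue by contradiction: suppose some $z' \in V(F)$ lies in $\tr^{\eps/9}_{G-F}(Q)$ for $Q = P[x..y]\circ P(y,z,F)$, witnessed by $y' \in V(Q)\setminus\{x,z\}$, and exhibit a tuple in $\mathcal{T}$ with strictly smaller objective. Split on the location of $y'$. \textbf{Case 1:} $y' \in V(P[x..y])$. Then $|Q[x..y']|=|P[x..y']|\le|P|/2$, so $x$ is also the closer endpoint for $y'$ on $P$. Whichever of $|Q[x..y']|$ or $|Q[y'..z]|$ is the minimum governing $d_{G-F}(y',z')\le \frac{\eps}{9}\min(|Q[x..y']|,|Q[y'..z]|)$, a short calculation gives $d_{G-F}(y',z')\le \frac{\eps}{9}|P[x..y']|$, so $(x,y',z')\in\mathcal{T}$. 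If $y'\neq y$, uniqueness of shortest paths forces $|P[x..y']|<|P[x..y]|$, contradicting the primary minimization; if $y'=y$, the subcase $|Q[x..y]|\le |Q[y..z]|$ is impossible since $d_{G-F}(y,z)<|P[x..y]|$, and otherwise $d_{G-F}(y,z')\le \frac{\eps}{9}d_{G-F}(y,z)<d_{G-F}(y,z)$, contradicting the tiebreak.

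\textbf{Case 2:} $y' \in V(P(y,z,F))\setminus\{y\}$. Since $P(y,z,F)$ is a shortest path in $G{-}F$, one has $|Q[y'..z]|=d_{G-F}(y',z)\le d_{G-F}(y,z)\le \frac{\eps}{9}|P[x..y]|<|P[x..y]|\le |Q[x..y']|$, so the minimum in the new trapezoid is realized by $|Q[y'..z]|$ and $d_{G-F}(y',z')\le \frac{\eps}{9}d_{G-F}(y',z)$. Combining the triangle inequality with $d_{G-F}(y,y')+d_{G-F}(y',z)=d_{G-F}(y,z)$ gives $d_{G-F}(y,z')\le d_{G-F}(y,z)-(1-\tfrac{\eps}{9})d_{G-F}(y',z)$, which in particular is at most $\frac{\eps}{9}|P[x..y]|$ so that $(x,y,z')\in\mathcal{T}$, and strictly less than $d_{G-F}(y,z)$ provided $y'\neq z$. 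The degenerate case $y'=z$ forces $z'=z$, excluded by assumption. This again contradicts the tiebreak.

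The ``in particular'' clause follows at once: $|P[x..y]\circ P(y,z,F)|=d_{G-F}(x,y)+d_{G-F}(y,z)\le(1+\tfrac{\eps}{9})\,d_{G-F}(x,y)$ by~(ii). The main obstacle is designing the correct objective so that every type of violator collapses into a contradiction; the primary minimization is tailored to kill Case~1, while the tiebreak on $d_{G-F}(y,z)$ is what closes Case~2 and the boundary $y'=y$ of Case~1. Uniqueness of shortest paths is needed only to upgrade $|P[x..y']|\le |P[x..y]|$ to a strict inequality whenever $y'\neq y$.
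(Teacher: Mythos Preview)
The paper does not give its own proof of this lemma; it is quoted verbatim as Lemma~2.6 of~\cite{ChCoFiKa17}, with only a footnote remarking that the argument there carries over after replacing $V(H^F)$ by $V(F)$. So there is nothing in the present paper to compare your argument against.

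That said, your two-level extremal choice (primary on $|P[x..y]|$, tiebreak on $d_{G-F}(y,z)$) followed by a case split on where the hypothetical violator $y'$ sits on $Q$ is exactly the standard way this kind of statement is proved, and your analysis is correct. Two small remarks. First, the appeal to ``uniqueness of shortest paths'' in Case~1 is unnecessary: $y'$ lies strictly between $x$ and $y$ on the \emph{simple} path $P[x..y]$, so $|P[x..y']|<|P[x..y]|$ already follows. Second, to conclude $(x,y',z')\in\mathcal{T}$ (respectively $(x,y,z')\in\mathcal{T}$) you implicitly need $z'\notin\{u,v\}$, since the trapezoid of $P$ excludes its endpoints; this is a one-line check, as the bound $d_{G-F}(y',z')\le\frac{\eps}{9}|P[x..y]|<|P|/2$ rules out the far endpoint of $P$, and $z'\neq x$ is already forced by $z'\in\tr^{\eps/9}_{G-F}(Q)$. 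With those two clarifications your proof is complete.
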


We now turn to expaths.
Afek et al.~\cite{Afek02RestorationbyPathConcatenation_journal} 
showed that shortest paths in $G-F$ are $f$-\emph{decomposable},
that is, each of them is obtained by concatenating at most $f+1$ shortest paths in~$G$ 
(for weighted $G$ those shortest paths may be interleaved with up to $f$ edges).
One would like to represent replacement paths by the $O(f)$ endpoints of those shortest paths (and edges),
but during the construction of the FT-trees much more than $f$ edges may fail,
so this is not directly possible.
We will see that expaths offer a suitable alternative.

\begin{definition}[$\ell$-decomposable path]
\label{definition:h-decomposable}
Let $A \subseteq E$ be a set of edges and $\ell$ a positive integer. 
An $\ell$\emph{-decomposable} path in $G - A$ is
a concatenation of at most $\ell+1$ shortest paths of $G$.
\end{definition}

\begin{definition}[$\ell$-expath]
\label{def:l-expath}
  Let $A \subseteq E$ be a set of edges and $\ell$ a positive integer.
  An $\ell$-\emph{expath} in $G-A$ is a concatenation of $(2\log_2(n)+1)$
  $\ell$-decomposable paths
  such that, for every $0 \le i \le 2\log_2 n$,
  the length of the $i$-th path is at most $\min(2^i, 2^{2\log_2(n) - i})$.
\end{definition}

Since $n\,{-}\,1$ is an upper bound on the diameter of any connected subgraph of $G$,
the middle level $i \,{=}\, \log_2 n$ is large enough to accompany any (decomposable) path.
Levels may be empty.
Therefore, for any $\ell' \ge \ell$, an $\ell$-decomposable path is also both $\ell'$-decomposable
and an $\ell'$-expath.
Also, an arbitrary subpath of an $\ell$-decomposable path 
(respectively, \mbox{$\ell$-expath}) is again $\ell$-decomposable (respectively, an \mbox{$\ell$-expath}).
This gives the following intuition why it is good enough to work with expaths.
Suppose some replacement path $P(u,v,F)$ survives in $G{-}A$
albeit $A \supseteq F$ may be much larger than $F$, 
then the shortest $u$-$v$-path in $G-A$ is indeed $P(u,v,F)$ and thus $f$-decomposable.
The length of the shortest $(2f{+}1)$-expath
between $u$ and $v$ in $G{-}A$ is the actual replacement distance $|P(u,v,F)| = d_{G-F}(u,v)$.
The reason for the choice $\ell = 2f+1$ will become apparent in the proof of~\Cref{lem:sublinear_query}.
The difficulties of working merely with $(2f{+}1)$-\emph{decomposable} paths are described in~\Cref{lem:prefix_optimality}.

Finally, we define a set $B$ of special vertices of $G$ that we call \emph{pivots}.
Recall that we are mainly interested in paths with more than $L$ edges.
Suppose $L = \omega(\log n)$.
We construct the set $B$ by sampling any vertex from $V$ independently with probability
$C' f \log_2(n)/L$ for some sufficiently large constant $C' > 0$.
With high probability, we have $|B| = \Otilde(n/L)$
and any replacement path with more than $L/2$ edges
in any of the graphs $G-F$ with $|F| \le f$
contains a pivot as can be seen by standard Chernoff bounds,
see e.g.~\cite{GrandoniVWilliamsFasterRPandDSO_journal,RodittyZwick12kSimpleShortestPaths,WY13}.
\vspace*{.5em}

\noindent
\textbf{Interface of Data Structure \emph{FT}.}
For a positive integer $\ell$ and vertices $u,v \in V$,
define $d^{(\ell)}_{\eps/9}(u,v,F)$
to be the minimum length over all $\ell$-\emph{decomposable} paths
between $u$ and $v$ in $G-F$ that are far away from $F$.
If there are no such paths, we set $d^{(\ell)}_{\eps/9}(u,v,F) = +\infty$.
The data structure \emph{FT} can only be queried with triples $(u,v,F)$
for which $u$ or $v$ is a pivot in $B$.
Its returned value satisfies
$d_{G-F}(u,v) \le FT(u,v,F) \le 3 \cdot d^{(2f+1)}_{\eps/9}(u,v,F)$.
We let $q_{FT}$ denote its query time.

\subsection{Querying the Distance Sensitivity Oracle}
\label{subsec:sublinear_algorithm}

We show how to use the black box \emph{FT} to get a $(3{+}\eps)$-approximate $f$-DSO.
As an additional data structure, we instantiate the $f$-DSO for short paths described in \Cref{thm:oracle_short_paths} with parameter $k = 2$.
It thus gives a $3$-approximation, whenever the replacement path in question
has at most $L$ edges.

Fix a query $(s,t,F)$ that we want to answer on the top level.
We build the weighted complete graph $H^F$ 
on the vertex set $V(H^F) = \{s,t\} \cup V(F)$.
For a pair $\{u,v\} \in \binom{V(H^F)}{2}$, let $w_{H^F}(u,v)$ denote the weight of the edge $\{u,v\}$.
Since $G$ is undirected, $w_{H^F}(\cdot,\cdot)$ is symmetric.
To simplify notation, we allow possibly infinite edge weights instead of removing the respective edge.
The weight $w_{H^F}(u,v)$ is intended to roughly equal  to the replacement distance $d_{G-F}(u,v)$.
While we do not achieve this exactly,
most of the time we can ensure that 
$d_{G-F}(u,v) \le w_{H^F}(u,v) \le 3 \cdot d_{G-F}(u,v)$.
Either way,
we show in \Cref{lem:sublinear_query} that we can find weights that are good enough
such that for the query vertices $s$ and $t$,
the $s$-$t$-distance in $H^F$ is a $(3{+}\varepsilon)$-approximation
of $d_{G-F}(s,t)$.
It is thus reported as the answer to the query $(s,t,F)$.

Recall that we use $d_{G-F}^{\le L}(u,v)$ for the minimum length
over all short $u$-$v$-paths in the graph $G{-}F$, 
and $\widehat{d^{\le L}}(u,v,F)$ for its $3$-approximation 
from the $f$-DSO for short paths.
The time to obtain that estimate is $\Otilde(L^{o(1)})$.

If $u$ or $v$ is a pivot, we set $w_{H^F}(u,v)$ 
to the minimum of $\widehat{d^{\le L}}(u,v,F)$ and $FT(u,v,F)$.
Otherwise, if $\{u,v\} \cap B = \emptyset$,
we set it to the minimum of $\widehat{d^{\le L}}(u,v,F)$ and 
\begin{equation*}
  w_{H^F}'(u,v)=\min_{b \in B}\left\{FT(u,b,F)+FT(b,v,F)\right\}.
\end{equation*}
If $u$ and $v$ have a short replacement path,
$\widehat{d^{\le L}}(u,v,F)$ is a good estimate of $d_{G-F}(u,v)$.
Otherwise, the replacement path is long.
The computation of the auxiliary weight $w_{H^F}'(u,v)$ then searches
for a pivot that lies on this long path and uses the FT-trees
to obtain the distance.

\begin{lemma}
\label{lem:sublinear_query}
  With high probability over all queries, the query time is
  $\Otilde(L^{o(1)} + \frac{n}{L} \cdot q_{FT})$
  and it holds that
  $d_{G-F}(s,t) \le d_{H^F}(s,t) \le (3{+}\eps) \nwspace d_{G-F}(s,t)$. 
\end{lemma}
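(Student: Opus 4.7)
The vertex set $V(H^F) = \{s,t\} \cup V(F)$ has at most $2f{+}2$ elements, so $H^F$ has $O(f^2)$ edges. Computing each edge weight requires (i) one query to the short-path oracle of \Cref{thm:oracle_short_paths} costing $\widetilde{O}(L^{o(1)})$, plus (ii) either a single $FT$-call of cost $q_{FT}$ when one endpoint is a pivot, or a scan of all pivots executing two $FT$-calls per pivot and costing $|B| \cdot q_{FT} = \widetilde{O}((n/L) \cdot q_{FT})$, using $|B| = \widetilde{O}(n/L)$ w.h.p. Summing over the $O(f^2)$ edges and running Dijkstra on $H^F$ in $O(f^3)$ time yields the claimed bound.

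\textbf{Lower bound.}
Each of the three expressions appearing in the definition of $w_{H^F}$ dominates the corresponding replacement distance: $\widehat{d^{\le L}}(u,v,F) \ge d_{G-F}(u,v)$ by \Cref{thm:oracle_short_paths}, $FT(u,v,F) \ge d_{G-F}(u,v)$ by the stated interface of $FT$, and $FT(u,b,F)+FT(b,v,F) \ge d_{G-F}(u,b)+d_{G-F}(b,v) \ge d_{G-F}(u,v)$ by the triangle inequality. A second triangle-inequality argument along any $s$-$t$-walk in $H^F$ then gives $d_{H^F}(s,t) \ge d_{G-F}(s,t)$.

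\textbf{Upper bound: a cheap walk.}
To show $d_{H^F}(s,t) \le (3{+}\varepsilon)\, d_{G-F}(s,t)$, I exhibit an $s$-$t$-walk in $H^F$ built by iteratively invoking \Cref{lem:not_too_far_off}. Keep a current pair of endpoints $(u,v)$, initialized to $(s,t)$. If the replacement path $P(u,v,F)$ is far away from $F$, close the walk with the edge $\{u,v\}$; otherwise \Cref{lem:not_too_far_off} produces $x \in \{u,v\}$, $y \in V(P)$, and $z \in V(F)$ such that $Q = P[x..y] \circ P(y,z,F)$ is a $(2f{+}1)$-decomposable path far away from $F$ with $|Q| \le (1{+}\varepsilon/9)\, d_{G-F}(x,y)$ and $d_{G-F}(x,y) \le |P|/2$. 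Depending on whether $x=u$ or $x=v$, extend the ``$s$-side'' or ``$t$-side'' of the walk by the edge incident to $z$ and recurse on the shortened pair; since $V(H^F)$ is finite and each step strictly shrinks the current replacement distance, the process terminates.

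\textbf{Edge-weight bound and telescoping.}
For the edge inserted in step $i$ with side endpoint $x_i$ and new vertex $z_i$, I claim $w_{H^F}(x_i,z_i) \le 3|Q_i| \le 3(1{+}\varepsilon/9)\, d_{G-F}(x_i,y_i)$. Since $Q_i$ is $(2f{+}1)$-decomposable and far away from $F$, we have $d^{(2f+1)}_{\varepsilon/9}(x_i,z_i,F) \le |Q_i|$. Case (a): if $x_i$ or $z_i$ is a pivot, the interface of $FT$ directly yields $FT(x_i,z_i,F) \le 3|Q_i|$. Case (b): if $|Q_i| \le L$, then $d_{G-F}^{\le L}(x_i,z_i) \le |Q_i|$ and the short-path oracle of \Cref{thm:oracle_short_paths} (instantiated with $k{=}2$) returns at most $3|Q_i|$. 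Case (c): if $|Q_i| > L$, then $Q_i$ contains a pivot $b$ w.h.p.\ by the pivot-sampling bound, union-bounded over the polynomially many $Q_i$'s that can arise across all queries; the far-away property of $Q_i$ is inherited by its subpaths via trapezoid monotonicity, so $FT(x_i,b,F)+FT(b,z_i,F) \le 3(|Q_i[x_i..b]|+|Q_i[b..z_i]|) = 3|Q_i|$. For the telescoping, the relation $d_{G-F}(z_i,v) \le d_{G-F}(z_i,y_i)+|P_i[y_i..v]| \le \varepsilon/9 \cdot d_{G-F}(x_i,y_i) + |P_i[y_i..v]|$ together with $|P_i| = d_{G-F}(x_i,y_i) + |P_i[y_i..v]|$ shows the remaining distance decreases by at least $(1{-}\varepsilon/9)\, d_{G-F}(x_i,y_i)$ per step; summing gives total walk weight at most $\tfrac{3(1+\varepsilon/9)}{1-\varepsilon/9} \cdot d_{G-F}(s,t) \le (3{+}\varepsilon)\, d_{G-F}(s,t)$ for $0 < \varepsilon < 3$. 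The main obstacle is the careful two-sided bookkeeping when \Cref{lem:not_too_far_off} forces peeling on the $v$-side, which I would handle by a symmetric meet-in-the-middle analysis that telescopes the $s$- and $t$-sides separately.
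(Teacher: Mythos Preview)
Your approach is essentially the paper's: the paper proves $d_{H^F}(u,v) \le (3{+}\varepsilon)\,d_{G-F}(u,v)$ for all pairs $u,v \in V(H^F)$ by induction on $d_{G-F}$, using the identical three-case split (short replacement path; long and far away, with a pivot on it; long and not far away, invoking \Cref{lem:not_too_far_off} and then the induction hypothesis on the pair $(z,v)$), and closes the third case with an explicit inequality chain that is equivalent to your telescoping ratio $\tfrac{3(1+\varepsilon/9)}{1-\varepsilon/9} \le 3+\varepsilon$ for $\varepsilon \le 3$. The ``two-sided bookkeeping'' you flag as an obstacle is not one---your per-step decrease $D_k - D_{k+1} \ge (1-\varepsilon/9)\,d_{G-F}(x_k,y_k)$ holds verbatim regardless of whether $x_k$ is the current $u$ or the current $v$, so the sum telescopes directly and the closing edge contributes at most $3D_{\text{final}}$---and the paper's inductive phrasing sidesteps the concern altogether; one small point the paper makes explicit and you gloss over in Case~(c) is that when $|Q_i|>L$ one of its two \emph{replacement-path} components has more than $L/2$ edges, and by uniqueness of shortest paths it is that specific subpath (not merely some long path) that is guaranteed to carry a pivot.
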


\begin{proof}
  The graph $H^F$ has $O(f^2) = O(1)$ edges, 
  and assigning a weight takes $\Otilde(L^{o(1)} + |B| \cdot q_{FT})$ per edge.
  The distance from $s$ to $t$ can be computed using Dijkstra's algorithm in time $O(f^2)$.
  
  We prove the seemingly stronger assertion that for each pair $u,v \in V(H^F)$,
  we have $d_{G-F}(u,v) \le d_{H^F}(u,v) \le (3{+}\eps) \nwspace d_{G-F}(u,v)$.
  The first inequality is immediate from the fact that the values 
  $\widehat{d^{\le L}}(u,v,F)$, $FT(u,v,F)$, and $FT(u,b,F) + FT(b,v,F)$ for any $b \in B$
  are all at least $d_{G-F}(u,v)$.
  
  We prove the second inequality by induction over $d_{G-F}$.
  If $u = v$ (i.e., $d_{G-F}(u,v) = 0$), there is nothing to prove.
  Assume the inequality holds for all pairs of vertices with replacement distance
  strictly smaller than $d_{G-F}(u,v)$.
  We distinguish three cases.
  In the first case, the (unique)
  replacement path $P = P(u,v,F)$ has at most $L$ edges.
  \Cref{thm:oracle_short_paths} then implies
  \begin{equation*}
    d_{H^F}(u,v) \le w_{H^F}(u,v) \le \widehat{d^{\le L}}(u,v,F)
      \le 3 \cdot d_{G-F}^{\le L}(u,v) = 3 \cdot \length{P},
  \end{equation*}
  which is $3 \nwspace d_{G-F}(u,v)$ as $P$ is a replacement path.

  If the path $P$ is long instead,
  it contains a pivot $b \in B$ w.h.p.\ (possibly $u = b$ or $v = b$).
  For the second case, assume $P$ has more than $L$ edges and is far away from all failures in $F$.
  Note that then the subpaths $P[u..b]$ and $P[b..v]$ are the replacement paths
  for their respective endpoints,
  and therefore both $f$-decomposable (and also $(2f{+}1)$-decomposable).
  Moreover, they are far away from all failures
  as their trapezoids are subsets of $\tr^{\eps/9}_{G-F}(P)$.
  It holds that
  \begin{align*}
    d_{H^F}(u,v) &\le w_{H^F}(u,v) \le FT(u,b,F) + FT(b,v,F)\\
      &\le 3 \cdot d^{(2f+1)}_{\eps/9}(u,b,F)
        + 3 \cdot d^{(2f+1)}_{\eps/9}(b,v,F)\\
      &= 3 \cdot \length{P[u..b]}
        + 3 \cdot \length{P[b..v]}
      = 3 \cdot d_{G-F}(u,v). 
  \end{align*}  
  
  Finally, for the third case suppose the replacement path $P$ is long but \emph{not} far away from $F$.
  \Cref{lem:not_too_far_off} states the existence of three vertices $x \in \{u,v\}$, $y \,{\in}\, V(P)$,
  and $z \,{\in}\, \tr^{\eps/9}_{G-F}(P) \cap V(F)$
  such that $d_{G-F}(z,y) \le \tfrac{\eps}{9} \,{\cdot}\, d_{G-F}(x,y)$.
  The path $P' = P[x..y] \circ P(y,z,F)$ is far away from all failures
  and has length at most \mbox{$(1+ \tfrac{\eps}{9}) \cdot d_{G-F}(x,y)$}.
  In the remainder, we assume $x=u$; the argument for $x = v$ is symmetric.
  If the concatenation $P'$ has at most $L$ edges, we get
  \begin{equation*}
    w_{H^F}(u,z) \le \widehat{d^{\le L}}(u,z,F) \le 3 \nwspace \length{P'}
      \le 3 \left(1+\frac{\eps}{9}\right) \nwspace d_{G-F}(u,y) = \left(3+\frac{\eps}{3} \right) \, d_{G-F}(u,y).
  \end{equation*}
  Note that we do mean $d_{G-F}(u,y)$ here and not $d_{G-F}(u,z)$.
  
  The subpath $P[u..y]$ is in fact the unique replacement path $P(u,y,F)$.
  So, if $P'$ has more than $L$ edges,
  one of its subpaths $P[u..y]$ or $P(y,z,F)$ has more than $L/2$ edges.
  Thus, there exists a pivot $b$ on $P'$.
  Here, we actually use the uniqueness of shortest paths in $G$
  since replacing, say, $P[u..y]$  with another shortest $u$-$y$-path in $G-F$
  to ensure a pivot
  may result in a concatenation that is no longer far away from all failures.
  Similar to the second case, we arrive at
  \begin{align*}
    w_{H^F}(u,z) &\le FT(u,b,F) + FT(b,z,F)\\
      &\le 3 \cdot d^{(2f+1)}_{\eps/9}(u,b,F)
        +3 \cdot d^{(2f+1)}_{\eps/9}(b,z,F)\\
      &\le 3 \cdot \length{P'[u..b]} + 3 \cdot \length{P'[b..z]}
       = 3 \nwspace \length{P'}
       \le \left(3+\frac{\eps}{3}\right) d_{G-F}(u,y).
  \end{align*}
  It is important that \emph{FT} approximates $d^{(2f+1)}_{\eps/9}$
  since $P'$ may not be \mbox{$f$-decomposable}.
  As the concatenation of two $f$-decomposable paths,
  $P'$ is $(2f{+}1)$-decomposable; so are $P'[u..b]$ and $P'[b..z]$.
  
  Now that we have an upper bound on $w_{H^F}(u,z)$
  we can conclude the third case.
  Since $\tfrac{\eps}{9} < 1$ and $z \in \tr^{\eps/9}_{G_F}(P)$ 
  (where $P$ is the $u$-$v$-replacement path), the distance $d_{G-F}(z,v)$
  is strictly smaller than $d_{G-F}(u,v)$.
  By induction, $d_{H^F}(z,v) \le (3{+}\eps) \cdot d_{G-F}(z,v)$.
  Recall that vertex $y$ lies on~$P$, whence $d_{G-F}(u,y) + d_{G-F}(y,v) = d_{G-F}(u,v)$.
  Due to $\eps \le 3$, we have 
  $(2{+}\tfrac{\eps}{3}) \tfrac{\eps}{9} \le \tfrac{\eps}{3}$.
  Also, recall that $d_{G-F}(z,y) \le \tfrac{\eps}{9} \nwspace d_{G-F}(u,y)$ 
  by the definition of $z$ and $x = u$.
  Putting everything together, we estimate the $u$-$v$-distance in the graph $H^F$.

 \begin{align*}
    d_{H^F}(u,v) &\le w_{H^F}(u,z) + d_{H^F}(z,v)
      \le \left(3+\frac{\eps}{3}\right)  d_{G-F}(u,y)
        + (3+\eps)  d_{G-F}(z,v)\\[.33em]
      &\le 3 \left(\left(1+\frac{\eps}{9}\right) d_{G-F}(u,y)
        + \left(1+\frac{\eps}{3} \right) \left(d_{G-F}(z,y) + d_{G-F}(y,v) \right) \right)\\[.33em]
      &\le 3 \left(\left(1+\frac{\eps}{9}\right) d_{G-F}(u,y)
        + \left(1+\frac{\eps}{3} \right) \left( \frac{\eps}{9}  d_{G-F}(u,y) 
          + d_{G-F}(y,v) \right) \right)\\[.33em]
      &= 3 \Big( d_{G-F}(u,y) + d_{G-F}(y,v) 
        + \left(2+\frac{\eps}{3} \right) \!\frac{\eps}{9}  d_{G-F}(u,y) 
        + \frac{\eps}{3}  d_{G-F}(y,v) \!\Big)\\[.33em]
      &\le  3 \Big(d_{G-F}(u,v) + \frac{\eps}{3}  d_{G-F}(u,y) 
        + \frac{\eps}{3}  d_{G-F}(y,v) \Big)\\[.33em]
      &= 3 \left(1+\frac{\eps}{3} \right)  d_{G-F}(u,v)
       = (3 {+} \eps) \nwspace d_{G-F}(u,v). \qedhere
  \end{align*}
\end{proof}

\subsection{Fault-Tolerant Trees}
\label{subsec:sublinear_FT-trees}

We now describe the implementation of the \emph{FT} data structure
via fault-tolerant trees.
We compute all-pairs shortest distances in the original graph $G$ 
(slightly perturbing edge weights for unique shortest paths if required),
and, for each pivot $b \in B$, a shortest path tree of $G$ rooted in $b$
in $\Otilde(mn)$ time.
We turn each of those trees into a data structure that reports
the lowest common ancestor (LCA) in constant time
with the algorithm of Bender and Farach-Colton~\cite{BenderFarachColton00LCARevisited}.
This takes time and space $O(|B|n) = \Otilde(n^2/L)$ w.h.p.

We also assume that we have access to a procedure
that, given any set $A \subseteq E$ of edges 
(which may have much more than $f$ elements) and a pair of vertices $u,v \in V$,
computes the shortest $(2f{+}1)$-expath between $u$ and $v$
in $G{-}A$.
This expath is labeled with its structure,
that means, (a) the start and endpoints of the $2\log_2(n)+1$
constituting $(2f{+}1)$-decomposable subpaths,
and (b) inside each decomposable path the start and endpoint of the constituting shortest paths
(and possibly interleaving edges). 
The explanation of how to achieve this in time $\Otilde(fm)$ is deferred to \Cref{sec:improved_preproc}.
This is also the key ingredient of the proof of \Cref{thm:improved_preprocessing}.

We build the FT-trees only for pairs of vertices $(u,b)$ for which $b \in B$ is a pivot.
On a high level, $FT(u,b)$ is a tree of depth $f$ 
that stores in each node the shortest $(2f{+}1)$-expath between $u$ and $b$ in some graph $G{-}A$.
We first describe the information that we hold in a single node $\nu$.
Let $P_\nu$ be the stored expath.
It is partitioned first into segments and those are partitioned further into parts.
To define the segments, we need the notion of netpoints.

\begin{definition}[Path netpoints]
\label{def:sparse_netpoints}
  Let $P = (u=v_1, \dots, v_\ell = b)$ be a path.
  Define $p_{\text{left}}$ to be all pairs of consecutive vertices $v_j,v_{j+1} \in V(P)$,
  for which there is an integer $i \ge 0$, 
  such that $\length{P[u..v_j]} < (1+ \tfrac{\eps}{36})^i \le \length{P[u..v_{j+1}]}$.
  Let $p_{\text{right}}$ be all vertices $v_j,v_{j-1} \in V(P)$ 
  such that $\length{P[v_j..b]} < (1+ \tfrac{\eps}{36})^i \le \length{P[v_{j-1}..b]}$
  for some $i$.
  The \emph{netpoints} of $P$ are the vertices in 
  $p_{\text{left}} \cup p_{\text{right}} \cup \{u,b\}$.
\end{definition}

\begin{figure}
\centering
\begin{tikzpicture}
\begin{scope}[scale=1.23]
     \coordinate (start) at (1,0);
    \coordinate (end) at (14,0);

    \draw[dashed,-] (start) -- ++(0,1) node [above] {$0$};
    \draw[dashed,-] (end) -- ++(0,1) node [above right] {distance};
    \draw[dashed,-] (2.3, 0) -- ++(0,1) node [above, cred] {$(1+ \tfrac{\eps}{36})^1$};
    \draw[dashed,-] (5.6, 0) -- ++(0,1) node [above, cblue] {$(1+ \tfrac{\eps}{36})^2$};
    \draw[dashed,-] (10.5, 0) -- ++(0,1) node [above, cyellow] {$(1+ \tfrac{\eps}{36})^3$};

    \draw[dashed, lightgray,-] (end) -- ++(0,-1) node [below] {$0$};
    \draw[dashed, lightgray,-] (start) -- ++(0,-1) node [below left] {distance};
    \draw[dashed, lightgray,-] (15-2.3, 0) -- ++(0,-1) node [below] {$(1+ \tfrac{\eps}{36})^1$};
    \draw[dashed, lightgray,-] (15-5.6, 0) -- ++(0,-1) node [below] {$(1+ \tfrac{\eps}{36})^2$};
    \draw[dashed, lightgray,-] (15-10.5, 0) -- ++(0,-1) node [below] {$(1+ \tfrac{\eps}{36})^3$};
    
    \node[vert, fill=white] (v1) at (start) {$u$};
    \node[vert, fill=cred] (v2) at (2,0) {};
    \node[vert, fill=cred] (v3) at (3,0) {};
    \node[vert] (v4) at (4,0) {};
    \node[vert, fill=cblue] (v5) at (5,0) {};
    \node[vert, fill=cblue] (v6) at (6,0) {};
    \node[vert] (v7) at (7,0) {};
    \node[vert] (v8) at (8,0) {};
    \node[vert] (v9) at (9,0) {};
    \node[vert, fill=cyellow] (v10) at (10,0) {};
    \node[vert, fill=cyellow] (v11) at (11,0) {};
    \node[vert] (v12) at (12,0) {};
    \node[vert] (v13) at (13,0) {};
    \node[vert, fill=white] (v0) at (end) {$b$};
    \draw[-]  (v1) -- (v2) -- (v3) -- (v4) -- (v5) -- (v6) -- (v7) -- (v8) -- (v9) -- (v10) -- (v11) -- (v12) -- (v13) -- (v0);
    \draw (1, 0.9) -- (14.5, 0.9);
    \draw[lightgray] (14, -0.9) -- (0.5, -0.9);

    \node[cred] at (2.5, -0.5) {$i=1$};
    \node[cblue] at (5.5, -0.5) {$i=2$};
    \node[cyellow] at (10.5, -0.5) {$i=3$};
    \end{scope}
\end{tikzpicture}
\caption{The red, blue and yellow vertices are some of the netpoints contained in $p_\text{left}$ for the path $(u,b)$. The set $p_\text{right}$ is created in the same way but with reversed distance (hinted below in grey), i.e., from $b$ to $u$.}
\label{fig:netpoints}
\end{figure}

The netpoints can equivalently be seen as the result of the following process, 
illustrated in \Cref{fig:netpoints}.
Given the $u$-$b$-path $P$, start from the endpoint $u$.
For every power of $1+\tfrac{\eps}{36}$, 
mark the vertex that is the furthest away from $u$ but whose path length along $P$
is still less than this power.
Additionally, mark its immediate successor on the path.
These are exactly the vertices in $p_{\text{left}}$.
When doing the same thing from the other endpoint $b$, one obtains $p_{\text{right}}$.

A \emph{segment} of the path $P$ is the subpath between consecutive netpoints.
Note that the bounding netpoints may stem from both $p_{\text{left}}$ and $p_{\text{right}}$.
For an edge $e \in E(P)$, let $\seg(e,P)$ denote the segment of $P$ containing $e$.
There are segments that contain only a single edge,
e.g., the ones between a marked vertex and its immediate successor in the process above.
The others have exponentially increasing lengths,
with $1+ \tfrac{\eps}{36}$ as the base of the exponential.
However, since we define the segments from \emph{both} ends of $P$, they do not grow too large.
This is made precise in~\Cref{lem:segments_not_too_large} below.

We only ever store expaths in the FT-trees,
there the segments need to be subdivided into parts.
Recall that an expath $P$ is made up of logarithmically many decomposable subpaths.
The decomposable subpaths, in turn, consist of $O(f)$ shortest paths (and interleaving edges) in $G$.
These building blocks may not be aligned with the segments.
This can cause problems as we want to use the structure of an expath but store it segment-wise.
We thus define a \emph{part} as a maximal subpath of $P$
that is completely contained in one segment 
and, at the same time, in one of the constituting shortest paths.
We can find all parts by a linear scan over the labels of the expath
that are provided by the $\Otilde(fm)$-time procedure that computes $P$.

Note that each part is a shortest path/edge in $G$
(they are defined as subpaths of shortest paths).
We assume that the shortest paths  in $G$ are unique.
It is therefore enough to represent a part by its endpoints.
With any part $[v,w]$, for $v,w \in V(P)$,
we store pointers to the closest netpoint before $v$ and after $w$,
possibly $v$ and $w$ themselves.
This allows us to quickly find the segment in which the part lies.
We also store the original graph distance $d(v,w)$.
If the part contains more than $L$ edges,
we mark this fact and store a pivot $p \in B$ that lies in $[v,w]$.

We now describe the FT-tree $FT(u,b)$ recursively.
In some node $\nu$, let $A_{\nu}$ be the set of all edges
that were failed in the path from the root to $\nu$;
with $A_{\nu} = \emptyset$ in the root itself.
We compute the shortest $(2f{+}1)$-expath $P_{\nu}$ in $G-A_{\nu}$
and store the information for all its parts.
For each of its segments $S$, we create a child node $\mu$ in which we set
$A_{\mu} = A_{\nu} \cup E(S)$.
That means, the transition from a parent to a child corresponds to failing the \emph{whole segment}.
Note that the sets $A_{\nu}$ are only used during preprocessing and never actually stored.
We continue the recursive construction until depth $f$ is reached;
if in a node $\nu$ the vertices $u$ and $b$ become disconnected,
we mark this as a leaf node not storing any path.
We build one FT-tree for each pair of (distinct) vertices in $V \times B$
and additionally store the LCA data structure for each pivot.

The number of segments of any simple path in a subgraph of $G$ is at most
$2\log_{1+ \tfrac{\eps}{36}}(n) +1$.
Therefore, there exists a constant $c > 0$ such that
the maximum number of segments of one path is at most $c \nwspace \log_2(n)/\eps$.
This is an upper bound on the degree of any node,
so there are at most $2(c \nwspace \log_2(n)/\eps)^f$ nodes in each tree.
Moreover, an $(2f{+}1)$-expath consists of $O(f \log n)$
shortest paths.
So there are $O(f\log n+ \log(n)/\eps) = O(f \log (n)/\varepsilon)$ parts in one node,
for each of which we store a constant number of machine words.
The combined space of the FT-trees and LCA data structures is
\begin{equation*}
 |B|n \cdot O\!\left(\frac{f\log n}{\varepsilon}\right) \cdot O\!\left(\frac{\log n}{\varepsilon}\right)^{f} + O(|B|n)
 	= \Otilde\!\left(\frac{n^2}{L}\right)
 		\cdot O\!\left(\frac{\log n}{\varepsilon}\right)^{f+1}.
\end{equation*}
The time spent in each node is dominated by computing the $(2f{+}1)$-expath.
The total time to precompute \emph{FT} is
  $|B|n \cdot \Otilde(fm) \cdot O\!\left(\frac{\log n}{\varepsilon}\right)^f + O(|B|n)
    = \Otilde\!\left(\frac{n^2}{L} m \right) 
    	\cdot O\!\left(\frac{\log n}{\varepsilon}\right)^f$.

\subsection{Querying the Data Structure \texorpdfstring{\emph{FT}}{FT}}
\label{subsec:sublinear_FT_query}

We used in~\Cref{lem:sublinear_query} that the value $FT(u,b,F)$
is between $d_{G-F}(u,b)$ and $3 \, d^{(2f+1)}_{\eps/9}(u,b,F)$,
three times the minimum length of an $(2f{+}1)$-decomposable between $u$ and $b$ in $G{-}F$
that is far away from all failures in $F$.
We now show how to do this.

The main challenge when traversing the FT-tree is to utilize the little information that is stored
in a node $\nu$ to solve the following problem.
We must either find the segment $\seg(e,P_{\nu})$ for some failing edge $e \in F$
or verify that $F \cap E(P_{\nu}) = \emptyset$.
The original solution in~\cite{ChCoFiKa17} was to determine for each 
shortest path/interleaving edge $[v,w]$ on $P_{\nu}$ and edge $e = \{x,y\} \in F$
whether the minimum of $d(v,x) + w(x,y) + d(y,w)$ and $d(v,y) + w(x,y) + d(x,w)$
is equal to $d(v,w)$.
If so, $e$ must lie on the shortest path $P_{\nu}[v..w]$.
Computing the actual segment $\seg(e,P_{\nu}) \supseteq [v,w]$ of the edge,
then merely has to find the closest netpoints before $v$ and after $w$
(including $v$ and $w$ themselves).
The problem is that this approach requires storing all $\Omega(n^2)$ original graph distances in $G$,
which we cannot afford.
We first prove that we can get a weaker guarantee.

\begin{lemma}
\label{lem:weaker_guarantee}
  Let $\nu$ be a node of $FT(u,b)$.
  There exists an algorithm to check that
  there is a path between $u$ and $b$ in $G{-}F$
  that has length at most $3 \nwspace \length{P_{\nu}}$ 
  or find the segment $\seg(e,P_{\nu})$ 
  for some $e \in F \cap E(P_{\nu})$.
  The computation time is $\Otilde(L^{o(1)}/{\eps})$.
\end{lemma}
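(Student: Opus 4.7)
The algorithm processes the expath $P_\nu$ part by part, exploiting the structural distinction between long parts (which store a pivot) and short parts, and it terminates as soon as it produces one of the two allowed outputs: a segment containing a failing edge, or a global certificate of a $u$-to-$b$-walk in $G-F$ of length at most $3 \cdot \length{P_\nu}$. For each part it either localizes a failure to a segment or produces a ``bypass'', namely a walk in $G-F$ of length at most $3$ times the part's $G$-distance, which can later be spliced in place of that part.

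For a long part $[v,w]$ with stored pivot $p \in B$, the part coincides with the concatenation of the shortest $v$-to-$p$ and $p$-to-$w$ paths in $G$. Using the precomputed LCA-equipped SPT at $p$, testing whether a failing edge $e=\{x,y\} \in F$ lies on one of these two shortest paths is a single ancestor query, taking $O(1)$ time. If a failure is detected, I binary-search the $O(\log n/\eps)$ netpoints of the part using the same $O(1)$ LCA primitives at $p$, isolating the enclosing segment in total time $O(\log n/\eps)$. This disposes of all failures that happen to lie on long parts.

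If no long part is hit, every failing edge must sit on some short part. For each short part $[v,w]$, I query the short-path oracle of \Cref{thm:oracle_short_paths} instantiated with $k=2$ to obtain $\widehat{d^{\le L}}(v,w,F) \in [d_{G-F}(v,w),\, 3 d_{G-F}^{\le L}(v,w)]$. If the returned value is at most $3 d(v,w)$, the corresponding oracle path is a valid bypass and I record it. Otherwise I invoke a binary search over the netpoints of the part: at each midpoint $z$, I query both $\widehat{d^{\le L}}(v,z,F)$ and $\widehat{d^{\le L}}(z,w,F)$ and compare each with $3 d(v,z)$ and $3 d(z,w)$, respectively. If one of the halves exceeds its threshold, I recurse into that (damaged) half; if both stay under the threshold, their oracle paths concatenate to a valid bypass of the whole part and I move on. The recursion halves the netpoint count each step and either bottoms out at a single segment (which is then known to carry a failing edge and is returned) or produces a bypass. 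After processing all parts, if no segment was returned, splicing the recorded bypasses into $P_\nu$ yields the required $u$-to-$b$-walk of length at most $3 \cdot \length{P_\nu}$.

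The main obstacle is the soundness of the binary search on short parts in the absence of an all-pairs-distance table. It is resolved by the unique-shortest-paths assumption: if a prefix $[v..z]$ has no failing edge on it, then the original $G$-shortest path from $v$ to $z$ has at most $L$ edges and survives in $G-F$, so $d^{\le L}_{G-F}(v,z) = d(v,z)$ and the oracle's 3-approximation guarantees $\widehat{d^{\le L}}(v,z,F) \le 3 d(v,z)$; contrapositively, an oracle value exceeding $3 d(v,z)$ certifies damage inside $[v..z]$, which validates each recursion step and the final base case. The running time is dominated by $O(f\log^2 n/\eps)$ oracle queries in the detection phase plus $O(\log(\log n/\eps))$ additional queries in each binary search, each costing $\Otilde(L^{o(1)})$, together with $O(1)$ LCA lookups per long part and failure; summing gives the claimed $\Otilde(L^{o(1)}/\eps)$ bound.
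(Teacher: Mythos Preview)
Your overall strategy matches the paper's: scan the parts of $P_\nu$, use LCA at the stored pivot for long parts and the short-path oracle for short parts, and either flag a part containing a failing edge or certify that every part admits a replacement in $G-F$ of length at most three times its $G$-length, which splices together to the desired $u$--$b$ walk.

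However, you have the containment between parts and segments reversed. In the paper a \emph{part} is defined as a maximal subpath that lies inside one constituent shortest path of the expath \emph{and does not cross netpoints}; hence every part is already contained in a single segment, and a part has no interior netpoints. Consequently, your ``binary search over the $O(\log n/\eps)$ netpoints of the part'' is searching an empty set: once a part $[v,w]$ is flagged (either by an LCA hit on a long part or by $\widehat{d^{\le L}}(v,w,F) > 3\,d(v,w)$ on a short part), the enclosing segment is obtained directly by locating the two netpoints of $P_\nu$ closest to $v$ and to $w$, which is what the paper does in $O(\log n/\eps)$ time. Your more elaborate recursion on short parts (splitting at a midpoint $z$, comparing $\widehat{d^{\le L}}(v,z,F)$ with $3\,d(v,z)$, etc.) never actually gets off the ground, and in any case would need access to $d(v,z)$ for points $z$ that are not part endpoints, which is not stored. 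None of this is fatal---the argument collapses to the paper's once you fix the containment---but as written the binary-search portion is both unnecessary and not well-defined.
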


\begin{proof}
  Note that one of the alternatives must occur
  for if $F \cap E(P_{\nu}) = \emptyset$, then $P_{\nu}$ exists in $G{-}F$.
  Consider a part $[v,w]$ of $P_{\nu}$.
  If it has more than $L$ edges,
  then we stored a pivot $p$ in $[v,w]$.
  More precisely, $[v,w]$ is the concatenation of the unique shortest path between $v$ and $p$
  and the one between $p$ and $w$ in $G$.
  We have access to a shortest path tree rooted in $p$.
  So, for each edge $e = \{x,y\} \in F$,
  we can check with a constant number of LCA queries involving $p$, $v$, $w$, $x$, and $y$
  whether edge $e$ is in that concatenation
  in time $O(f)$ per part.
  If all checks fail, we have $d_{G-F}(v,w) = d(v,w) = \length{P_{\nu}[v..w]}$.

  If $[v,w]$ is short, 
  the oracle from~\Cref{thm:oracle_short_paths} is queried with the triple $(v,w,F)$.
  That oracle was preprocessed anyway and answers in time $\Otilde(L^{o(1)})$.
  The return value $\widehat{d^{\le L}}(v,w,F)$ is compared with the original distance $d(v,w)$
  that was stored with the part.
  If the former is more than $3$ times the latter, it must be that $d_{G-F}(v,w) > d(v,w)$,
  so the part contains some edge of $F$.

  We either find a part that has a failing edge in time 
  $\Otilde(L^{o(1)} \cdot f \nwspace \frac{\log^2 n}{\eps})
    = \Otilde(L^{o(1)}/{\eps})$
  or verify that $d_{G-F}(v,w) \le 3 \cdot d(v,w)$ holds for \emph{all} parts.
  In the latter case, swapping each part $[v,w]$ for the path $P(v,w,F)$
  shows the existence of a path in $G{-}F$ of length at most $3 \length{P_{\nu}} =  \sum_{[v,w]} 3 \cdot \nwspace d(v,w)$.

  Finally, let $[v,w]$ be a part for which we determined that it contains a failing edge.
  The query algorithm does not need to know which edges are in $E([v,w]) \cap F$
  since for all of them $[v,w]$ is completely contained in the segment $\seg(e,P_{\nu})$.
  It is thus enough to find the last netpoint on the subpath $P_{\nu}[u..v]$ and 
  the first on $P_{\nu}[w..b]$ by following the pointers.
\end{proof}

We use the lemma to compute $FT(u,b,F)$.
The tree transversal starts at the root.
Once it enters a node $\nu$,
it checks whether there is a path in $G{-}F$
of length at most $3 \nwspace \length{P_{\nu}}$.
If so, this length is returned.
Otherwise, the algorithm obtains a segment $\seg(e,P_{\nu})$ for some $e \in F \cap E(P_{\nu})$
and recurses on the corresponding child.
Once a leaf $\nu^*$ is encountered, the length $\length{P_{\nu^*}}$ is returned;
or $+\infty$ if the leaf does not store a path.
This takes total time $q_{FT} = \Otilde(L^{o(1)}/\varepsilon)$
since at most $f{+}1 = O(1)$ nodes are visited.
The main argument for the correctness of this procedure 
is to show that if a $(2f{+}1)$-expath $P$
in $G{-}F$ is far away from all failures,
it survives in $G{-}A_{\nu^*}$.

\begin{lemma}
\label{lem:crucial_survival}
  Let $P$ be the shortest $(2f{+}1)$-decomposable path between $u$ and $b$ in $G-F$ 
  that is far away from all failures in $F$.
  Let $\nu^*$ be the node of $FT(u,b)$ in which a value is returned when queried with $F$, 
  and let $A_{\nu^*}$ be the set of edges that were failed from the root to $\nu^*$.
  Then, $P$ exists in the graph $G-A_{\nu^*}$.
  Moreover, it holds that 
  $d_{G-F}(u,b) \le FT(u,b,F) \le 3 \cdot d^{(2f+1)}_{\eps/9}(u,b,F)$.
\end{lemma}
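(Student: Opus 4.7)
The plan is to prove the single invariant
$P \subseteq G - A_\nu$ for every node $\nu$ visited by the query algorithm on $(u,b,F)$, from which both conclusions follow easily. I will prove the invariant by induction along the root-to-$\nu^*$ path in $FT(u,b)$. At the root, $A_\nu = \emptyset$, so $P \subseteq G$ trivially. For the inductive step, assume $P \subseteq G - A_\nu$. Since $P$ is $(2f+1)$-decomposable in $G-F \subseteq G - A_\nu$, it is in particular a $(2f+1)$-expath in $G - A_\nu$, so the shortest such expath $P_\nu$ stored in $\nu$ satisfies $|P_\nu| \le |P|$. If the algorithm returns at $\nu$ via \Cref{lem:weaker_guarantee}, the induction is concluded. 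Otherwise it descends to the child $\mu$ corresponding to the segment $\seg(e, P_\nu)$ for some $e \in F \cap E(P_\nu)$, and I must show $E(P) \cap E(\seg(e, P_\nu)) = \emptyset$, which together with $P \subseteq G - A_\nu$ yields $P \subseteq G - (A_\nu \cup E(\seg(e, P_\nu))) = G - A_\mu$.

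\textbf{Segment disjointness (the key step).} Suppose for contradiction some internal vertex $z \in V(\seg(e,P_\nu)) \setminus \{u,b\}$ lies on $P$, and let $x$ be the endpoint of $e$ closer to $u$ along $P_\nu$. Let $v_1, v_2$ be the bracketing netpoints of the segment. The exponential-growth definition of netpoints with base $1 + \eps/36$ gives
\[
  |\seg(e,P_\nu)| \;\le\; \tfrac{\eps}{36} \cdot \min\bigl(|P_\nu[u..v_1]|,\ |P_\nu[v_2..b]|\bigr)
  \;\le\; \tfrac{\eps}{36} \cdot \min\bigl(|P_\nu[u..z]|,\ |P_\nu[z..b]|\bigr),
\]
since $v_1$ lies between $u$ and $z$ and $v_2$ lies between $z$ and $b$ on $P_\nu$. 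The subpath $P_\nu[z..x]$ certifies $d_{G-A_\nu}(z,x) \le |\seg(e,P_\nu)|$. I then transfer this bound to a $G-F$ bound of the form $d_{G-F}(z,x) \le \tfrac{\eps}{9} \min(|P[u..z]|,|P[z..b]|)$, which places $x \in \tr^{\eps/9}_{G-F}(P) \cap V(F)$ and contradicts the far-away hypothesis on $P$. The \emph{main obstacle} here is exactly this transfer: the path $P_\nu[z..x]$ lives in $G - A_\nu$ but may contain other edges of $F \setminus A_\nu$, so its length does not immediately bound $d_{G-F}(z,x)$. I plan to handle this by iterating the argument: if some $e' \in F$ lies on $P_\nu[z..x]$, then the endpoint of $e'$ closest to $z$ along $P_\nu$ is itself a vertex of $V(F)$ at $G-A_\nu$-distance at most $|\seg(e,P_\nu)|$ from $z$, and the same bookkeeping using $|P_\nu[u..z]| \le |P[u..z]|$ (which follows from $P_\nu$ being a shortest expath and $P$ being a valid expath passing through $z$) forces some endpoint of a failing edge into $\tr^{\eps/9}_{G-F}(P)$.

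\textbf{Deriving both bounds.} Once the invariant $P \subseteq G - A_\nu$ is established for all visited nodes, the two inequalities follow. For the upper bound: whenever the algorithm terminates at a node $\nu$ via \Cref{lem:weaker_guarantee}, it returns $3|P_\nu| \le 3|P| = 3\, d^{(2f+1)}_{\eps/9}(u,b,F)$; if instead it reaches a leaf $\nu^*$, the returned value $|P_{\nu^*}| \le |P|$ by the same expath-minimality. For the lower bound: in the non-leaf case the lemma certifies the existence of a $u$-$b$-path of length $\le 3|P_\nu|$ in $G-F$, so $d_{G-F}(u,b) \le 3|P_\nu|$. In the leaf case, every descent step added one fresh element of $F \cap E(P_\nu)$ to $A$ (fresh because $E(P_\nu) \cap A_\nu = \emptyset$), so after $f$ descents $F \subseteq A_{\nu^*}$, giving $P_{\nu^*} \subseteq G - A_{\nu^*} \subseteq G - F$ and hence $d_{G-F}(u,b) \le |P_{\nu^*}|$; if $\nu^*$ stores no path, then $u,b$ are disconnected in $G - A_{\nu^*} \supseteq G-F$ is consistent with the return value $+\infty$. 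Combining the two directions completes the proof.
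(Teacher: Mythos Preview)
Your overall architecture matches the paper's: prove by induction along the root-to-$\nu^*$ path that $P$ survives in $G-A_\nu$, and derive both inequalities from that. The handling of the obstacle that $P_\nu[z..x]$ might contain further edges of $F$ is also essentially the paper's trick (the paper phrases it as ``choose $e_F$ closest to $e_P$'', you phrase it as an iteration; both amount to picking the nearest failing endpoint so that the connecting subpath of $P_\nu$ lies in $G-F$).

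There is, however, a genuine gap in your ``segment disjointness'' step. You assert that
\[
  |P_\nu[u..z]| \le |P[u..z]|
\]
``follows from $P_\nu$ being a shortest expath and $P$ being a valid expath passing through $z$''. This is false: global minimality of $P_\nu$ among $(2f{+}1)$-expaths from $u$ to $b$ does \emph{not} imply prefix-wise minimality up to a shared vertex $z$. The paper needs precisely \Cref{lem:prefix_optimality} here, which only gives the weaker bound
\[
  |P_\nu[u..z]| \le 4\cdot d^{(2f+1)}(u,z,A_\nu) \le 4\cdot |P[u..z]|,
\]
and analogously for the suffix. This factor $4$ is exactly why the netpoints are spaced with base $1+\tfrac{\eps}{36}$: combining \Cref{lem:segments_not_too_large} with \Cref{lem:prefix_optimality} yields $|\seg(e_F,P_\nu)| \le \tfrac{\eps}{36}\cdot 4\cdot \min(|P[u..z]|,|P[z..b]|) = \tfrac{\eps}{9}\cdot\min(\dots)$, which then contradicts the trapezoid condition. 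Without the prefix-optimality lemma your chain of inequalities does not close. (This is also why the paper works with $(2f{+}1)$-\emph{expaths} rather than $(2f{+}1)$-decomposable paths in the nodes: \Cref{lem:prefix_optimality} holds for expaths but fails for decomposable paths.)

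Two smaller points. First, the contradiction hypothesis should be that $P$ and the segment share an \emph{edge}, not merely a vertex: the segment's bounding netpoints can lie on $P$ without any edge of $P$ being removed. Second, in your leaf lower-bound argument the inclusion is reversed: from $F\subseteq A_{\nu^*}$ you get $G-A_{\nu^*}\subseteq G-F$, not $\supseteq$; this is the direction you actually need, since then $P_{\nu^*}\subseteq G-F$ and $|P_{\nu^*}|\ge d_{G-F}(u,b)$.
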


We need the following two lemmas for the proof.
The first one states that the segments of a path are not too long,
or even only contain a single edge.
The second lemma verifies a certain prefix optimality of expaths.
This is the crucial property that decomposable paths are lacking.
For some edge set $A \subseteq E$,
let $d^{(\ell)}(u,v,A)$ be the length of the shortest \mbox{$\ell$-\emph{decomposable}} path in $G{-}A$.
Compared to $d^{(\ell)}_{\eps/9}(u,v,F)$,
this definition allows for larger failure sets and
drops the requirement of the path being far away from the failures.

\begin{lemma}[Lemma~3.2 in~\cite{ChCoFiKa17}]
\label{lem:segments_not_too_large}
  Let $u \in V$ and $b \in B$,
  $P$ be any path between $u$ and $b$, $e \in E(P)$,
  and $y$ a vertex of the edge $e$.
  Then, $E(\seg(e,P)) = \{e\}$ or 
  $\length{\seg(e,P)}  \le \tfrac{\eps}{36} \min(\length{P[u..y]}, \length{P[y..b]})$. 
\end{lemma}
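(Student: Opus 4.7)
The plan is to work directly from the structural definition of netpoints, setting $\delta = \eps/36$ and letting $\seg(e,P) = P[v_a..v_{a'}]$, where $v_a$ and $v_{a'}$ are the two consecutive netpoints enclosing $e$. The case $a' = a+1$ is the single-edge alternative in the conclusion, so I would assume $a' \ge a+2$ and prove $\length{\seg(e,P)} \le \delta \length{P[u..y]}$; a symmetric argument using right-netpoints then yields $\length{\seg(e,P)} \le \delta \length{P[y..b]}$, and the lemma follows.

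The first step is to rule out every left-threshold crossing inside the segment. By the definition of $p_{\text{left}}$, whenever $\length{P[u..v_j]} < (1+\delta)^i \le \length{P[u..v_{j+1}]}$ for some $i \ge 0$, \emph{both} $v_j$ and $v_{j+1}$ become netpoints. If such a crossing occurred at some edge $\{v_j, v_{j+1}\}$ with $a \le j < a'$, then either one of these two vertices lies strictly interior to $[v_a..v_{a'}]$ (contradicting that $v_a, v_{a'}$ are consecutive netpoints), or the crossing sits on a boundary edge and the only non-contradictory possibility is $j = a$ and $j+1 = a'$, which forces $a' = a+1$ and contradicts our standing assumption $a' \ge a+2$. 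Applying the same reasoning to the crossing of the threshold $(1+\delta)^0 = 1$ at the edge $\{u, v_1\}$ also rules out $v_a = u$, so that $\length{P[u..v_a]} \ge 1$.

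From these two facts I would conclude that there is a (unique) $i \ge 0$ with $\length{P[u..v_j]} \in [(1+\delta)^i, (1+\delta)^{i+1})$ for every $j$ with $a \le j \le a'$. Consequently,
\[
\length{\seg(e,P)} = \length{P[u..v_{a'}]} - \length{P[u..v_a]} < \delta (1+\delta)^i \le \delta \length{P[u..v_a]} \le \delta \length{P[u..y]},
\]
where the last inequality uses that $y$ is an endpoint of an edge of the segment and therefore lies at or past $v_a$ along $P$. Replacing $p_{\text{left}}$ by $p_{\text{right}}$ and prefix lengths by suffix lengths yields $\length{\seg(e,P)} \le \delta \length{P[y..b]}$, and combining the two bounds proves the claim.

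The main subtlety, and the step I expect to demand the most care, is the boundary bookkeeping in the first step: because the netpoint rule inserts both endpoints of a threshold-crossing edge, one has to argue that any threshold crossing located at \emph{any} of the edges of the segment forces an internal netpoint whenever the segment spans more than one edge. Once that is in place, the ratio bound on prefix lengths and the symmetric argument with $p_{\text{right}}$ finish the proof in a single line.
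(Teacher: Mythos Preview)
Your proposal is correct and follows essentially the same approach as the paper's proof of the granularity-$\lambda$ generalization (\Cref{lemma:segment-size}), of which this lemma is the $\lambda=0$ special case. Both arguments rule out any $p_{\text{left}}$- (respectively $p_{\text{right}}$-) threshold crossing inside a multi-edge segment, forcing all prefix lengths into a single interval $[(1+\delta)^i,(1+\delta)^{i+1})$, and then bound the segment length by $\delta(1+\delta)^i \le \delta\,|P[u..y]|$; your version handles the boundary bookkeeping a bit more explicitly, but the core idea is identical.
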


\begin{lemma}[Lemma~3.1 in~\cite{ChCoFiKa17}]
\label{lem:prefix_optimality}
  Let $u \in V$ and $b \in B$, $A \subseteq E$ a set of edges, 
  $\ell$ a positive integer, and $P$ the shortest $\ell$-expath between $u$ and $b$ in $G-A$.
  Then, for every $y \in V(P)$,
  $\length{P[u..y]} \le 4 \cdot d^{(\ell)}(u,y,A)$ 
  and $\length{P[y..v]} \le 4 \cdot d^{(\ell)}(y,v,A)$ both hold.
\end{lemma}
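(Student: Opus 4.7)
The plan is to prove the prefix inequality $\length{P[u..y]}\le 4\,d^{(\ell)}(u,y,A)$ by contradiction; the suffix inequality follows by applying the same argument to the reversed expath. Write $P=P_0\circ P_1\circ\cdots\circ P_{2m}$ with $m=\log_2 n$, where each $P_i$ is $\ell$-decomposable and $\length{P_i}\le c_i:=\min(2^i,2^{2m-i})$. Let $j$ be the index with $y\in V(P_j)$, splitting $P_j=P_j[u_j..y]\circ P_j[y..v_j]$. Set $D=d^{(\ell)}(u,y,A)$ and let $Q$ be an $\ell$-decomposable path from $u$ to $y$ in $G-A$ of length $D$. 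Assume toward contradiction that $\length{P[u..y]}>4D$; the goal is to construct a strictly shorter $\ell$-expath $P'$ from $u$ to $b$ in $G-A$, violating the minimality of $P$.

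First I would bound $D$ and select the level at which to splice $Q$. Summing the capacities gives $\length{P}\le\sum_{i=0}^{2m}c_i=3n-2$, so $4D<\length{P[u..y]}\le\length{P}$ forces $D<\tfrac{3n}{4}<2^m=n$. Define $i^{*}=\lceil\log_2 D\rceil\in\{0,\dots,m\}$; by choice $c_{i^{*}}=2^{i^{*}}\ge D$, and (vacuously when $i^{*}=0$) the inequality $2^{i^{*}-1}<D$ yields $4D>2^{i^{*}+1}$.

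The key structural step is to show $j\ge i^{*}+1$, which is what allows the tail $P_j[y..v_j],P_{j+1},\dots,P_{2m}$ to be relocated into levels with index strictly greater than $i^{*}$. If $j>m$, this is immediate because $i^{*}\le m<j$. If $j\le m$, every preceding capacity equals $c_i=2^i$, so $\length{P[u..y]}\le\sum_{i=0}^{j-1}2^i+2^j<2^{j+1}$; combined with $\length{P[u..y]}>2^{i^{*}+1}$, this yields $j>i^{*}$.

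Finally I would assemble $P'$ level by level according to the definition of an $\ell$-expath: the empty path at $u$ on levels $0,\dots,i^{*}-1$; the path $Q$ on level $i^{*}$; the empty path at $y$ on levels $i^{*}+1,\dots,j-1$; the subpath $P_j[y..v_j]$ on level $j$; and the original $P_i$ on each level $i>j$. Every empty path is trivially $\ell$-decomposable of length $0$; $Q$ is $\ell$-decomposable with $\length{Q}=D\le c_{i^{*}}$; $P_j[y..v_j]$ is a subpath of the $\ell$-decomposable $P_j$ with length at most $c_j$; and the unmoved $P_i$ respect their original capacity bounds. Hence $P'$ is a valid $\ell$-expath from $u$ to $b$ in $G-A$ of length $\length{P'}=D+\length{P[y..b]}<\length{P[u..y]}+\length{P[y..b]}=\length{P}$, contradicting that $P$ is a shortest $\ell$-expath. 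The main obstacle is precisely this structural packing: inserting $Q$ at a low level $i^{*}$ must leave enough higher-indexed levels to host the original suffix of $P$ in its natural (or later) slots, and the inequality $i^{*}<j$ established in the previous step is exactly what makes this packing legal.
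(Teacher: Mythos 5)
Your proof is correct, and it is essentially the same exchange argument that the paper uses to prove the strictly more general Lemma~\ref{lemma:expath_with_granularity_lambda} (the version with granularity $\lambda$); the present lemma is the $\lambda=0$ special case, cited from~\cite{ChCoFiKa17}. The argument in both cases is: assume $\length{P[u..y]} > 4D$, pick a splice level $i^{*}$ with $D \le 2^{i^{*}}$ while $y$ sits in a constituent $P_j$ with $j > i^{*}$, then replace the prefix with $Q$ at level $i^{*}$, the tail of $P_j$ from $y$ onward at level $j$, and pad the gap with empty paths, yielding a strictly shorter $\ell$-expath. The only cosmetic difference is the order in which the two inequalities are extracted from the contradiction hypothesis. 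The paper fixes $j_0 = \lfloor \log_2 \length{P[u..y]} \rfloor - 1$ (driven by the prefix length), so that the geometric tail bound $\sum_{i\le j_0} 2^i < 2^{j_0+1} \le \length{P[u..y]}$ gives $j_1>j_0$ immediately, and then uses $\length{P[u..y]} > 4D$ to show $D < 2^{j_0}$. You instead fix $i^{*} = \lceil \log_2 D \rceil$ (driven by $D$), so $D \le 2^{i^{*}}$ is automatic, and then use $\length{P[u..y]} > 4D > 2^{i^{*}+1}$ together with $\length{P[u..y]} < 2^{j+1}$ to deduce $j > i^{*}$. These are the same packing argument read from opposite ends; your check that $D < n$ (so $i^{*}\le \log_2 n$ and the capacity at level $i^{*}$ really is $2^{i^{*}}$) matches the paper's check that $j_0 < \log_2 n$.
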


\begin{proof}[Proof of~Lemma~\ref{lem:crucial_survival}]
  The second assertion is an easy consequence of the first.
  $P$ is the shortest $(2f{+}1)$-decomposable $u$-$b$-path in $G-F$ 
  that is far away from all failures in $F$.
  If $P$ also exists in $G-A_{\nu^*}$,
  then $\length{P_{\nu^*}} \le \length{P}$ by the definition of $P_{\nu^*}$
  as the shortest $(2f{+}1)$-expath between $u$ and $v$ in $G-A_{\nu^*}$
  and $P$ being $(2f{+}1)$-decomposable (and thus a $(2f{+}1)$-expath).
  The query algorithm guarantees
  $FT(u,b,F) \le 3 \nwspace \length{P_{\nu^*}} 
    \le 3 \nwspace \length{P} 
    = 3 \cdot d^{(2f+1)}_{\eps/9}(u,b,F)$.
  It is clear that we never underestimate the true distance $d_{G-F}(u,b)$.
  
  We show the existence of the path $P$ in $G-A_{\nu}$ for \emph{every} visited node $\nu$
  by induction over the parent-child transitions of the tree transversal.
  It is true for the root where $A_{\nu} = \emptyset$.
  When going from $\nu$ to a child, $A_{\nu}$ gets increased by
  the edges $E(\seg(e,P_{\nu}))$
  of a segment for some $e_F \in F \cap E(P_{\nu})$.
  It is enough to prove that $P$ does not contain an edge of $\seg(e,P_{\nu})$.
  Intuitively, we argue that the segments are too short
  for their removal to influence a path far away from $F$.
  
  The claim is immediate if $E(\seg(e_F,P_{\nu})) = \{e_F\}$, because $P$ exists in $G-F$.
  For the remainder, suppose $\seg(e_F,P_{\nu})$ consists of more than one edge.
  To reach a contradiction, assume $e_P \in E(P) \cap E(\seg(e_F,P_{\nu}))$
  is an edge in the intersection.
  If $\seg(e_F,P_{\nu})$ contains multiple edges from~$F$,
  we let $e_F$ be the one closest to $e_P$.
  This ensures that the subpath of $P_{\nu}$ between the closest vertices in $e_F$ and $e_P$ 
  does not contain any other failing edges.
  More formally, there are vertices $y \in e_P$ and $z \in e_F$
  such that neither $y$ nor $z$ are the endpoints $u$ or $b$ and
  the subpath $P_{\nu}[y..z]$ lies entirely both in $\seg(e,P_{\nu})$ and the graph $G-F$.
  Since $y \in V(P)$, $z \in V(F)$, 
  and the path $P$ is far away from all failures, 
  $z$ must be outside the trapezoid $\tr^{\eps/9}_{G-F}(P)$, that is,
  \begin{equation*}
    \length{\seg(e_F,P_{\nu})} \ge \length{P_{\nu}[y..z]} 
    \ge d_{G-F}(y,z) > \frac{\eps}{9} \min(\length{P[u..y]},\length{P[y..b]}).
  \end{equation*}
  Conversely, we combine~\Cref{lem:segments_not_too_large,lem:prefix_optimality},
  edge $e_P$ lying both on $P$ and $P_{\nu}$,
  and $P_{\nu}$ (with its subpaths) being a $(2f{+}1)$-expath to arrive at
  \begin{align*}
    \length{\seg(e_F,P_{\nu})}
      &\le \frac{\eps}{36} \min(\length{P_{\nu}[u..y]}, \length{P_{\nu}[y..b]})
       \le \frac{\eps}{36} \cdot 
        \min\!\left(4 \cdot d^{(2f+1)}(u,y,F), \ 4 \cdot d^{(2f+1)}(y,b,F)\right)\\[.33em]
      &= \frac{\eps}{9} \cdot 
        \min\!\left(d^{(2f+1)}(u,y,F), \ d^{(2f+1)}(y,b,F)\right)
       \le \frac{\eps}{9} \cdot \min(\length{P[u..y]}, \length{P[y..b]}). \qedhere
  \end{align*}
\end{proof}

\subsection{Proof of \texorpdfstring{\Cref{lem:long_paths_sublinear_query}}{Lemma 5.1}}
\label{subsec:sublinear_wrapup}

We derive here the parameters of the $f$-DSO with sublinear query time. 
The preprocessing consists of two main parts.
First, the oracle for short paths is computable in time 
$\Otilde(L^{f+o(1)} \nwspace m \sqrt{n})$ (\Cref{thm:oracle_short_paths}).
Secondly, $FT$ has preprocessing time
$\Otilde(mn^2/L) \cdot O(\log n/\varepsilon)^f$,
assuming that we can compute expaths in time $\Otilde(m)$.
We set $L = n^{\alpha/(f+1)}$ for a constant $0 < \alpha < 1/2$.
The preprocessing is dominated by the FT-trees giving a total time of
$\Otilde(mn^{2-\frac{\alpha}{f+1}}) \cdot O(\log n/\varepsilon)^f$.
By~\Cref{lem:sublinear_query} with $q_{FT} = \Otilde(L^{o(1)}/\varepsilon)$
the query time of the resulting oracle is
$\Otilde(n/(\varepsilon L^{1-o(1)}))
  = n^{1-\frac{\alpha}{f+1}+o(1)}/\varepsilon$.
The data structure from~\Cref{thm:oracle_short_paths} requires space $\Otilde(L^{f+o(1)} \nwspace n^{3/2})$, 
and \emph{FT} takes $\Otilde(n^2/ L) \cdot O(\log n/\varepsilon)^{f+1}$.
Inserting our choice of $L$ gives
 $n^{\frac{f}{f+1}\alpha + \frac{3}{2} + o(1)} + 
  	\Otilde\!\left(n^{2-\frac{\alpha}{f+1}}\right)
    \cdot O\!\left(\frac{\log n}{\varepsilon}\right)^{f+1}$.
Since $\alpha < 1/2$ is a constant, the second term dominates.

\section{Reducing the Query Time}
\label{sec:improved_query_time}

We now reduce the query time to $O_{\eps}(n^\alpha)$.
The bottleneck of the query answering is computing the (auxiliary) weight 
$w'_{H^F}(u,v)$ of the edge $\{u,v\}$ in the graph $H^F$,
see the beginning of \Cref{subsec:sublinear_algorithm}.
Minimizing $FT(u,b,F) + FT(b,v,F)$ over all pivots $b$ 
takes linear time in $|B|$.
Let $\lambda = \lambda(L,\varepsilon) \le L$ be a parameter to be fixed later.
We define  $\ball_{G-F}(x, \lambda) = \{z \in V \mid d_{G-F}(x,u) \le  \lambda \}$.
If we had access to the graph $G-F$ at query time,
we could run breath-first searches from $u$ and from $v$ to scan
$\ball_{G-F}(u, \lambda)$ and $\ball_{G-F}(v, \lambda)$ of radius $\lambda$,
and only consider the pivots that are inside these balls. 
By carefully adapting the sampling probability of the pivots to $\Otilde(1/\lambda)$,
we can ensure at least one of them hits the shortest expath (replacement path) for from $u$ to $v$,
more details are given below. 
The problem is that these balls may still contain too many pivots.
In the worst case, we have, say, $\ball_{G-F}(u,\lambda) \cap B = B$
degenerating again to scanning \emph{all} pivots.
Furthermore, we cannot even afford to store all balls as there are $\Omega(nm^f)$ different ones,
a ball for each pair $(x,F)$.
Finally, the assumption of access to $G-F$ itself is problematic in the subquadratic-space regime.

To handle all these issues, we split the computation of $w'_{H^F}(u,v)$ into two cases. 
That of \emph{sparse balls}, where at least one of $\ball_{G-F}(u,\lambda)$ and $\ball_{G-F}(v,\lambda)$ contains fewer than $L^f$ vertices;
and the case of \emph{dense balls} where the two sets both contain more than $L^f$ vertices.
For the sparse ball case, we can reuse the ideas from the previous section
since there are only a few pivots to scan.
The only issue is that we cannot access $G-F$ directly,
instead we use the $(L,f)$-replacement path covering.
However, if we were to apply the same technique also to the dense balls,
the query time would again rise to $\Omega(|B|) = \Omega(n/L)$.
Instead, we introduce FT-trees \emph{with granularity} to handle those.
Unfortunately, those are much larger than the original FT-trees.
We can only keep the total space subquadratic by using fewer of them.
For this, we exploit the fact that the dense balls can also be hit by much fewer than $|B|$ pivots.

\subsection{The Case of Sparse Balls}
\label{subsec:improved_query_sparse}

Consider the same setup as in \Cref{subsec:sublinear_trapezoids},
only that the pivots for $B$ are now sampled with probability $C'' f \log_2(n)/\lambda$
for some $C'' > 0$.
By making the constant $C''$ slightly larger than $C'$ in the original sampling probability
(see the end of \Cref{subsec:sublinear_trapezoids}),
we ensure that w.h.p.\ every path that is a \emph{concatenation} of at most two
replacement paths and has more than $\lambda$ edges contains a pivot.
(Previously, we only had this for ordinary replacement paths with at least $L/2$ edges.)
Note that all statements from \Cref{sec:sublinear_query_time}
except for the space, preprocessing and query time in \Cref{lem:long_paths_sublinear_query} remain true.
Further, observe that in the case of sparse balls, w.h.p.\ there are $\Otilde(L^{f}/\lambda)$ pivots in 
$\ball_{G-F}(u,\lambda)$ or in $\ball_{G-F}(v,\lambda)$.
In this case, it is sufficient to scan those in the same way as we did above.
The only issue is that we do not have access to $\ball_{G-F}(u,\lambda)$ at query time,
so we precompute a proxy.

Let $G_1, \ldots, G_\kappa$ be all the subgraphs of $G$ in the leaves of the sampling trees
introduced in~\Cref{subsec:short_paths_tree_sampling}.
Recall that they form an $(L,f)$-replacement path covering w.h.p.
During preprocessing, we compute and store the sets 
$B_{G_i}(x, \lambda) = B \cap \ball_{G_i}(x,\lambda)$ 
for all the sparse balls $\ball_{G_i}(x, \lambda)$, 
that is, if $|\ball_{G_i}(x, \lambda)| \le L^f$.
Otherwise, we store a marker that $\ball_{G_i}(x, \lambda)$ is dense.\footnote{
	This marker is made more precise in \Cref{subsec:improved_query_dense}.
} 
As $\kappa = L^{f+o(1)}$ and w.h.p.\ $|B_{G_i}(x, \lambda)| = \Otilde(L^{f}/\lambda)$
for sparse balls, storing all of these sets requires 
$\Otilde(nL^{2f+o(1)}/\lambda)$ space.
One can compute $B_{G_i}(x, \lambda)$ by running Dijkstra from $x$ in $G_i$ until at most $L^{f}$ vertices are discovered in time $\Otilde(L^{2f})$.
In total, this takes $\Otilde(nL^{3f+o(1)})$ time.

Suppose we want to compute the weight $w_{H^F}(u,v)$ in the sparse balls case,
meaning that there is an $x \in \{u,v\}$ such that the true set $\ball_{G-F}(x,\lambda)$ is sparse.
If this holds for both $u$ and $v$ the choice of $x$ is arbitrary.
We use $y$ to denote the remaining vertex in $\{u,v\}{\setminus}\{x\}$.
Let $i_1, \ldots, i_r$ be the indices of the graphs $G_{i_j}$ that exclude $F$ 
as computed by \Cref{alg:query_short_paths}. 
We showed in \Cref{subsec:short_paths_tree_sampling} 
that $r =\Otilde(L^{o(1)})$ and that the indices can be found in time proportional to their number.
By definition of $x$, all the proxies $\ball_{G_{i_j}}(x,\lambda)$ for $1 \le j \le r$ are sparse as well.
Departing from \Cref{subsec:sublinear_algorithm}, we redefine the auxiliary weight as
\begin{equation}
\label{eq:sparse-pivots}
	w'_{H^F}(u,v) = \min_{
		\substack{
			1\le j \le r \\
			b \in B_{G_{i_j}}(x,\lambda)} 
		}\left(\widehat{d^{\le L}}(x,b,F) + FT(b,y,F)\right).
\end{equation}
Now that we have precomputed the sets $B_{G_{i_j}}(x,\lambda)$,
it is sufficient to limit the search to pivots that are close to the endpoint $x \in \{u, v\}$.

The actual weight is again
$w_{H^F}(u,v) = \min(\widehat{d^{\le L}}(u,v,F), w'_{H^F}(u,v))$.
Its computation takes time $\Otilde(L^{f+o(1)}/\varepsilon\lambda)$
as there are $\Otilde(L^{o(1)})$ balls, each with $\Otilde(L^f/\lambda)$ pivots,
the values $\widehat{d^{\le L}}$
can be evaluated in time $L^{o(1)}$ (\Cref{thm:oracle_short_paths}),
and we navigate through $\Otilde(L^{f}/\lambda)$ FT-trees
with a query time of $q_{FT} = \Otilde(L^{o(1)}/\varepsilon)$ each. 

Recall the proof of the $(3{+}\varepsilon)$-approximation 
in \Cref{lem:sublinear_query}.
Clearly, if the replacement path $P(u,v,F)$ is short,
then $d_{H^F}(u,v) \le 3 \cdot d_{G-F}(u,v)$ still holds,
the argument was independent of $w'_{H^F}(u,v)$.
We make the next step in recovering what we called the ``second case'' for sparse balls.
The proof of the following lemma motivates the transition from $\Otilde(n/L)$ to $\Otilde(n/\lambda)$ pivots.

\begin{restatable}{lemma}{approxsparseballs}
\label{lem:approx_sparse_balls}
	Let $u,v \in V$ be such that $|\ball_{G-F}(u,\lambda)| \le L^f$
	or $|\ball_{G-F}(v,\lambda)| \le L^f$, and the replacement path $P(u,v,F)$
	is long and far away from all failures in $F$.
	Then, with high probability $w_{H^F}(u,v) \le 3 \cdot d_{G-F}(u,v)$ holds.
\end{restatable}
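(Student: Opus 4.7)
The plan is to exhibit a single pivot $b^* \in B$ on $P = P(u,v,F)$ close to the sparse endpoint, show that it is stored in one of the precomputed proxy sets $B_{G_{i_j}}(u,\lambda)$, and then use it inside \eqref{eq:sparse-pivots} to match the target factor of $3$. Without loss of generality assume $|\ball_{G-F}(u,\lambda)| \le L^f$, so that \eqref{eq:sparse-pivots} is evaluated with $x=u$ and $y=v$. The enlarged constant $C''$ in the pivot sampling rate $C'' f\log_2(n)/\lambda$ can be chosen (via a Chernoff estimate and a union bound over all triples $(s,t,F)$ with $|F|\le f$) so that w.h.p.\ every concatenation of at most two replacement paths of length more than $\lambda/3$ contains a pivot. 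Since $P$ is long it has more than $L \ge \lambda$ edges, so its prefix of length $\lceil\lambda/3\rceil$ already contains a pivot $b^*$; because $P$ is a shortest $u$-$v$-path in $G-F$, the subpath $P[u..b^*]$ is a shortest $u$-$b^*$-path in $G-F$, and $d_{G-F}(u,b^*) \le \lambda/3$.

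Next I would verify that $b^*$ lies in $B_{G_{i_j}}(u,\lambda)$ for at least one index $i_j$ returned by \Cref{alg:query_short_paths} on input $F$. The subpath $P[u..b^*]$ has at most $\lambda/3 \le L$ edges, so \Cref{thm:oracle_short_paths} with $k=2$, applied to the query $(u,b^*,F)$, guarantees that w.h.p.\ at least one visited leaf $G_{i_j}$ is well-behaved for the pair $(u,b^*)$. In that leaf, \Cref{lem:approx_well-behaved_child} yields $d_{G_{i_j}}(u,b^*) \le 3 \cdot d_{G-F}(u,b^*) \le \lambda$, so $b^* \in B_{G_{i_j}}(u,\lambda)$. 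Plugging $b^*$ into \eqref{eq:sparse-pivots} gives $w'_{H^F}(u,v) \le \widehat{d^{\le L}}(u,b^*,F) + FT(b^*,v,F)$. The first summand is at most $3 \cdot d_{G-F}^{\le L}(u,b^*) \le 3\cdot |P[u..b^*]|$ by \Cref{thm:oracle_short_paths}. For the second summand, $P[b^*..v]$ is an $f$-decomposable replacement path, hence $(2f{+}1)$-decomposable, and a routine trapezoid-inclusion argument yields $\tr^{\eps/9}_{G-F}(P[b^*..v]) \subseteq \tr^{\eps/9}_{G-F}(P)$ up to endpoints, so $P[b^*..v]$ is far away from $F$; therefore $FT(b^*,v,F) \le 3 \cdot d^{(2f+1)}_{\eps/9}(b^*,v,F) \le 3 \cdot |P[b^*..v]|$. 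Summing yields $w'_{H^F}(u,v) \le 3 \cdot d_{G-F}(u,v)$, and $w_{H^F}(u,v) \le w'_{H^F}(u,v)$ completes the bound.

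The main difficulty will be the joint calibration of the pivot sampling density and the Thorup-Zwick stretch: the first pivot along $P$ from $u$ must land at $G-F$-distance at most $\lambda/(2k{-}1)$ of $u$, so that after the spanner stretch of $(2k{-}1)$ it still lies inside the radius-$\lambda$ proxy ball actually stored. A secondary subtlety is that $u$ or $v$ may belong to $V(F)$, in which case the trapezoid inclusion for the suffix $P[b^*..v]$ must be verified up to the excluded endpoints; this is a local fix that does not affect the stretch constant of $3$.
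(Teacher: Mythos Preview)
Your proof is correct and takes the same approach as the paper: locate a pivot $b^*$ on the prefix of $P$ near the sparse endpoint, verify $b^* \in B_{G_{i_j}}(u,\lambda)$ for some leaf visited by \Cref{alg:query_short_paths}, and then bound the two summands in \eqref{eq:sparse-pivots} via \Cref{thm:oracle_short_paths} and the far-away property of the suffix $P[b^*..v]$. Your version is in fact slightly more careful than the paper's on one point: you place $b^*$ within $G{-}F$-distance $\lambda/3$ of $u$ so that, after the Thorup--Zwick stretch of $2k{-}1=3$ incurred when passing to the proxy graph $G_{i_j}$, you still obtain $d_{G_{i_j}}(u,b^*)\le\lambda$ and hence $b^*\in B_{G_{i_j}}(u,\lambda)$; the paper takes $b^*$ within distance $\lambda$ and appeals to the replacement-path-covering property without spelling out this calibration.
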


\begin{proof}
  Let $P = P(u,v,F)$.
  Without losing generality, we assume $\ball_{G-F}(u,\lambda)$ is sparse,
  the other case is symmetric.
  Note that $P$ has at least $L \ge \lambda$ edges.
  Let $u' \in V(P)$ be the vertex on $P$ at distance exactly $\lambda$ from $u$.
  There exists a (regular) pivot $b^*$ on $P[u..u']$ w.h.p.
  Here, we used the adapted sampling probability for set $B$ in \Cref{sec:improved_query_time}.
  Note that the pivot is in $B_{G-F}(u,\lambda) = B \cap \ball_{G-F}(u,\lambda)$.
  The graphs $G_1, \dots, G_\kappa$ are an $(L,f)$-replacement path covering,
  and \Cref{alg:query_short_paths} finds the right indices $i_1, \dots, i_r$.
  \Cref{eq:sparse-pivots} thus gives
  \begin{equation*}
    w_{H^F}(u,v) \le w'_{H^F}(u,v) 
    	 = \min_{
		\substack{
			1\le j \le r \\
			b \in B_{G_{i_j}}(u,\lambda)} 
		}\left(\widehat{d^{\le L}}(u,b,F) + FT(b,v,F)\right)
    	\le \widehat{d^{\le L}}(u,b^*,F) + FT(b^*,v,F).
  \end{equation*}
  
  Recall that \emph{FT} approximates the length $d^{(2f+1)}_{\eps/9}$
  of the shortest $(2f{+}1)$-decomposable path that is far away from all failures.
  As in the proof of \Cref{lem:sublinear_query}, since $P$ is far away from all failures,
  $P[b^*..v]$ is $(2f{+}1)$-decomposable and far away itself.
  It holds that
  \begin{align*}
     w'_{H^F}(u,v) &\le 3 \cdot d_{G-F}(u,b^*)
        + 3 \cdot d^{(2f+1)}_{\eps/9}(b^*,v,F)\\
      &= 3 \nwspace \length{P[u..b^*]}
        + 3 \nwspace \length{P[b^*..v]} = 3 \nwspace |P|
      = 3 \nwspace d_{G-F}(u,v). \qedhere
  \end{align*}  
\end{proof}

\subsection{The Case of Dense Balls}
\label{subsec:improved_query_dense}

\Cref{lem:sublinear_query} showed the $3+\varepsilon$ stretch in \Cref{sec:sublinear_query_time}.
To directly transfer its proof, the inequality $w_{H^F}(u,v) \le 3 \, d_{G-F}(u,v)$ would have to hold
also if both $\ball_{G-F}(u,\lambda)$ and $\ball_{G-F}(v,\lambda)$ are dense
and $P(u,v,F)$ is far away from all failures.
\Cref{eq:sparse-pivots} ensures that but requires a query time of $\Omega(n/\lambda)$
since a dense ball may contain many pivots. 
We provide a more efficient query algorithm via FT-trees with granularity.
Besides the larger space requirement of those trees,
the new query algorithm only gives a $(3{+}\delta)$-approximation for some small $\delta > 0$ (see \Cref{lem:crucial_survival_granularity}).
Therefore, we also have to adapt the proof of \Cref{lem:sublinear_query}.
This is done in \Cref{subsec:improved_wrapup}.

Our changes to the construction of FT-trees are twofold.
We define a set $\B$  of \emph{new} pivots,
polynomially sparser than $B$, by sampling each vertex independently with probability 
$C' f\log_2(n)/\lambda L^{f-1}$.
By a Chernoff bound and $\lambda L^{f-1} \le L^f$, 
it holds that w.h.p.\ $|\B| = \Otilde(n/\lambda L^{f-1})$ and
all sets $\ball_{G-F}(x,\lambda)$ with $|\ball_{G-F}(x,\lambda)| > L^f$
contain a new pivot.
We build an FT-tree with granularity $\lambda$ for each pair in $\B^2$.
\vspace*{.5em}

\noindent
\textbf{FT-Trees with Granularity.}
Given two new pivots $b_u,b_v \in \B$, let $FT_\lambda(b_u,b_v)$ be the \emph{fault-tolerant tree of $b_u$ and $b_v$ with granularity $\lambda$}. 
Granularity affects the netpoints, segments and expaths.

\begin{definition}[Path netpoints with granularity $\lambda$]
\label{def:dense_netpoint}
  Let $P = (b_u = v_1, v_2, \ldots, v_\ell= b_v)$ be a path.
  If $|P| \leq 2\lambda$, then the \emph{netpoints of $P$ with granularity $\lambda$}
  are all vertices $V(P)$ of the path.
  Otherwise, define $p_{\text{left}}$ to be all vertices $v_j,v_{j+1} \in V(P)$ 
  with $\lambda \le j \le \ell-\lambda$ 
  such that $\length{P[v_\lambda..v_j]} < (1{+}\frac{\eps}{36})^i \le \length{P[v_\lambda..v_{j+1}]}$
  for some integer $i \ge 0$.
  Analogously, let $p_{\text{right}}$ be all vertices $v_j,v_{j-1} \in V(P)$ 
  such that $\length{P[v_j..v_{\ell-\lambda}]} < (1{+}\frac{\eps}{36})^i \le \length{P[v_{j-1}..v_{\ell-\lambda}]}$
  for some $i$.
  The \emph{netpoints of P with granularity $\lambda$} are all vertices in
  $\{v_1,\ldots,v_{\lambda}\} \cup p_{\text{left}} \cup p_{\text{right}} 
  	\cup \{v_{\ell-\lambda},\ldots, v_\ell\}$.
\end{definition}

For $\lambda = 0$, this is the same as \Cref{def:sparse_netpoints}.
We denote by $\seg_\lambda(e,P)$ for $e\in P$
the set of \emph{segments} w.r.t.\ to the new netpoints that contains $e$.
Any path has $O(\lambda) + O(\log_{1+\eps}n)= O(\lambda)+ O(\log n/\eps)$ 
netpoints with granularity $\lambda$ and thus so many segments.
The number of nodes per tree is now $(O(\lambda)+O(\log n/\eps))^f = O(\lambda^f) + O(\log n/\eps)^f $
The crucial change is that the first and last $\lambda$ edges are in their own segment
and the exponential length increase happens only in the middle part.

\begin{definition}[$\ell$-expath with granularity $\lambda$]
\label{definition:l-expath_granularity}
  Let $A \subseteq E$ be a set of edges and $\ell$ a positive integer.
  An $\ell$-\emph{expath with granularity} $\lambda$ in $G{-}A$ is a path
  $P_a \circ P_b \circ P_c$
  such that $P_a$ and $P_c$ contain at most $\lambda$ edges each, 
  while $P_b$ is a concatenation of $(2\log_2(n){+}1)$
  $\ell$-decomposable paths
  such that, for every $0 \le i \le 2\log_2 n$,
  the length of the $i$-th such path is at most $\min(2^i, 2^{2\log_2(n) - i})$.
\end{definition}

The \emph{parts} of an $\ell$-expath with granularity $\lambda$ are defined as before.
In each node $\nu$ of $FT_\lambda(b_u,b_v)$,
we store the shortest $(2f{+}1)$-expath $P_\nu$ with granularity $\lambda$ 
from $b_u$ to $b_v$ in $G_\nu$.
Note that $P_\nu$ now has
$O(f\log(n) + \lambda+ \log(n)/\varepsilon) = O(\lambda+ f\log(n)/\varepsilon)$ many parts.
\vspace*{.5em}

\noindent
\textbf{Space and Preprocessing Time.}
Recall the analysis at the end of \Cref{subsec:sublinear_FT-trees},
and also that (compared to that) we changed the size of $|B|$ to $\Otilde(n/\lambda)$
in \Cref{subsec:improved_query_sparse}.
The number of nodes in $FT_{\lambda}(b_u,b_v)$ is $O(\lambda^f) + O(\log n/\varepsilon)^f$
and we only need $|\B|^2 = \Otilde(n^2/\lambda^2 L^{2f-2})$ new trees.
With $\lambda \le L$, this makes for
	$\Otilde\!\left(\frac{n^2}{L^{f}} \right) + 
		\Otilde\!\left(\frac{n^2}{\lambda^2 L^{2f-2}}\right)  
			O\!\left(\frac{\log n}{\varepsilon} \right)^f$
nodes in all new trees.
These are fewer than the $\Otilde(n^2/\lambda) \cdot O(\log n/\varepsilon)^f$
we had for the original FT-trees (that we still need to preprocess).
The more efficient expath computation transfers to expaths with granularity,
see \Cref{subsec:improved_preproc_granularity}.
We can compute such a path in asymptotically the same time $\Otilde(fm) = \Otilde(m)$.
So the preprocessing time of the new trees is dominated by the one for the old trees.
Also, the additional $\Otilde(nL^{3f+o(1)})$ term for the sparse/dense balls
will turn out to be negligible, see \Cref{subsec:proof_main_theorem}.
More importantly, the number of nodes
gets multiplied by the new number of parts $O(\lambda+ f \log(n)/\eps)$ to get their size.
The result is
\begin{equation*}
	\Otilde\!\left(\frac{n^2}{L^{f-1}}\right)
		+ \Otilde\!\left(\frac{n^2}{\lambda L^{2f-2}}\right) 
			\cdot O\!\left(\frac{\log n}{\varepsilon} \right)^f +
		\Otilde\!\left(\frac{n^2}{\varepsilon L^{f}}\right)
		+ \Otilde\!\left(\frac{n^2}{\lambda^2 L^{2f-2}}\right) 
			\cdot O\!\left(\frac{\log n}{\varepsilon} \right)^{f+1}\!.
\end{equation*}
Due to $f \ge 2$ (see \Cref{thm:oracle_long_paths}),
all terms are bounded by the $\Otilde(n^2/\lambda) \cdot O(\log n/\varepsilon)^{f+1}$
from the old FT-trees.
The $\Otilde(nL^{2f+o(1)}/\lambda)$ space to store the regular pivots
of the sparse balls will be shown to be irrelevant in comparison.
\vspace*{.5em}

\noindent
\textbf{Query time.}
A straightforward generalization of \Cref{lem:weaker_guarantee}
shows that evaluating $FT_{\lambda}(b_u,b_v)$ with query set $F$
takes time $\Otilde(L^{o(1)} (\lambda + 1/\varepsilon))$.
To compute $w'_{H^F}(u,v)$,
let again $G_1, \ldots, G_\kappa$ be the graphs in the leaves of the sampling trees (\Cref{subsec:short_paths_tree_sampling}).
For every $G_i$ and vertex $x \in V$ for which $|\ball_{G_i}(x,\lambda)| > L^f$,
we said we store a marker.
More precisely, we associate with $(G_i,x)$
a \emph{single} new pivot $b_x \in \mathcal{B} \cap \ball_{G_i}(x,\lambda)$. 
As before, let $i_1, \dots, i_r$ be the indices of graphs $G_{i_j}$
that are relevant for the query $(u,v,F)$.
Even if $\ball_{G-F}(u,\lambda)$ and $\ball_{G-F}(v,\lambda)$
are dense, it might be that all the $\ball_{G_{i_j}}(u,\lambda)$ are sparse
or all $\ball_{G_{i_j}}(v,\lambda)$ are sparse.
If so, we compute the auxiliary weight $w'_{H^F}(u,v)$
(and in turn $w_{H^F}(u,v) = \min(w'_{H^F}(u,v), d^{\le L}(u,v,F))$) 
via \Cref{eq:sparse-pivots}.
Otherwise, there are indices $i_u,i_v \in \{i_1, \ldots, i_r\}$ such that both
$|\ball_{G_{i_u}}(u,\lambda)| > L^f$ and $|\ball_{G_{i_v}}(v,\lambda)| > L^f$.
If there are multiple such indices, the choice is arbitrary.
Let $b_u \in \mathcal{B} \cap \ball_{G_{i_u}}(u, \lambda)$ and let $b_v \in \mathcal{B} \cap \ball_{G_{i_v}}(v, \lambda)$ be the stored new pivots.
We define
\begin{equation} 
\label{eq:dense-pivots}
w_{H^F}'(u,v)= FT_\lambda(b_u, b_v, F) + 2\lambda.
\end{equation}
Computing this auxiliary weight takes time $\Otilde(L^{o(1)} (\lambda + 1/\varepsilon))$,
much less than for the sparse balls.
It is not immediately obvious why $w_{H^F}'(u,v)$ is still a good estimate of $d_{G-F}(u,v)$.
We prove this in the following section, namely, in \Cref{lem:crucial_survival_granularity}.

\subsection{Approximation Guarantee}
\label{subsec:improved_wrapup}

We have already proven the space and preprocessing time stated in \Cref{thm:oracle_long_paths}
for they are dominated by the original FT-trees,
when accounting for the slightly larger set of regular pivots with $|B| = \Otilde(n/\lambda)$.
We also argued the query time.
The plan to prove the approximation guarantee is the same as in \Cref{sec:sublinear_query_time}.
Recall the structure of the proof of \Cref{lem:sublinear_query}.
It involved an induction over the distances in $G{-}F$.
The induction step had three cases:
first, that the replacement path $P =P(u,v,F)$ has at most $L$ edges;
secondly, that $P$ is in fact long but far away from all failures in $F$;
the third case is that $P$ is long but close to $F$.
The first case is not affected by the introduction of granularity.
We already discussed the second case in \Cref{lem:approx_sparse_balls}
if both $\ball_{G-F}(u,\lambda)$ and $\ball_{G-F}(v,\lambda)$ are sparse.
In the remainder, we prove the second case also if the balls are dense,
and show how to transfer the induction to the new setting.

As a first step, we generalize \Cref{lem:segments_not_too_large,lem:prefix_optimality}
to FT-trees with granularity $\lambda > 0$.

\begin{restatable}{lemma}{segmentsize}
\label{lemma:segment-size} 
  Let $b_u, b_v \in \B$,
  $P$ be any path between $b_u$ and $b_v$, $e \in E(P)$,
  and $y \in e$ a vertex of that edge.
  Then, $E(\seg_{\lambda}(e,P)) = \{e\}$ or 
  $\length{\seg_{\lambda}(e,P)}  
  	\le \frac{\eps}{36} \big(\min(\length{P[b_u..y]}, \length{P[y..b_v]}) - \lambda \big)$. 
\end{restatable}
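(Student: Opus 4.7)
The argument mirrors the proof of the original \Cref{lem:segments_not_too_large} (Lemma~3.2 of \cite{ChCoFiKa17}), with the new boundary regions introduced by the granularity handled as separate cases. I split on the position of $e=\{v_k,v_{k+1}\}$ in $P$.

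If $\length{P}\le 2\lambda$, then \Cref{def:dense_netpoint} declares every vertex of $P$ to be a netpoint and every segment to consist of a single edge, so the first alternative holds. Likewise, if $k\le\lambda$ or $k+1\ge\ell-\lambda$, both endpoints of $e$ are netpoints: those in $\{v_0,\ldots,v_\lambda\}\cup\{v_{\ell-\lambda},\ldots,v_\ell\}$ are boundary netpoints, while $v_{\lambda+1}$ (respectively $v_{\ell-\lambda-1}$) is a left (resp.\ right) netpoint coming from the pair at threshold $(1+\tfrac{\eps}{36})^0=1$, so again $\seg_\lambda(e,P)=\{e\}$.

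In the remaining case $\lambda+1\le k$ and $k+1\le\ell-\lambda-1$, both endpoints of $e$ lie strictly inside the exponential middle region of $P$. The reversal $v_i\mapsto v_{\ell-i}$ swaps $b_u$ with $b_v$ and $p_\text{left}$ with $p_\text{right}$ while leaving the statement invariant, so I may assume WLOG that the vertex $y$ of $e$ satisfies $\length{P[v_\lambda..y]}\le \length{P[y..v_{\ell-\lambda}]}$. Then
\begin{equation*}
\min\bigl(\length{P[b_u..y]},\length{P[y..b_v]}\bigr)-\lambda \;=\; \length{P[b_u..y]}-\lambda \;=\; \length{P[v_\lambda..y]} \;\ge\; 1.
\end{equation*}
Adding the right netpoints and boundary points to the partition can only subdivide segments, so it suffices to bound the segment containing $e$ in the coarser partition induced by $p_\text{left}$ together with $v_\lambda$ and $v_{\ell-\lambda}$. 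Pick the integer $i\ge 0$ with $(1+\tfrac{\eps}{36})^i\le\length{P[v_\lambda..y]}<(1+\tfrac{\eps}{36})^{i+1}$. The WLOG hypothesis gives
\begin{equation*}
(1+\tfrac{\eps}{36})^{i+1}\;\le\;(1+\tfrac{\eps}{36})\length{P[v_\lambda..y]}\;\le\;2\length{P[v_\lambda..y]}\;\le\;\length{P[v_\lambda..v_{\ell-\lambda}]},
\end{equation*}
so the left-netpoint pair at threshold $(1+\tfrac{\eps}{36})^{i+1}$ is actually realized within $P$ and, together with the pair at threshold $(1+\tfrac{\eps}{36})^{i}$, sandwiches $y$. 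Consequently,
\begin{equation*}
\length{\seg_\lambda(e,P)}\;\le\;(1+\tfrac{\eps}{36})^{i+1}-(1+\tfrac{\eps}{36})^{i}\;=\;\tfrac{\eps}{36}(1+\tfrac{\eps}{36})^{i}\;\le\;\tfrac{\eps}{36}\length{P[v_\lambda..y]},
\end{equation*}
which equals $\tfrac{\eps}{36}\bigl(\length{P[b_u..y]}-\lambda\bigr)$, yielding the second alternative.

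The main obstacle is the bookkeeping for the middle case, specifically certifying that the next left-netpoint threshold past $y$ is in fact realized inside $P[v_\lambda..v_{\ell-\lambda}]$; this is precisely where the WLOG reduction and the doubling bound $(1+\tfrac{\eps}{36})^{i+1}\le 2\length{P[v_\lambda..y]}$ enter. The $-\lambda$ offset in the final bound is the algebraic reflection of the $\lambda$ boundary edges on each end being absorbed into single-edge segments by \Cref{def:dense_netpoint}.
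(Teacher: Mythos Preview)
Your argument follows the same geometric-netpoint idea as the paper's proof, and the boundary cases ($|P|\le 2\lambda$ and $e$ among the first/last $\lambda$ edges) are handled correctly. There is, however, a small gap in the middle case.

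You pick $i$ so that $(1+\tfrac{\eps}{36})^i\le |P[v_\lambda..y]|<(1+\tfrac{\eps}{36})^{i+1}$ and then assert that the left-netpoint pairs at thresholds $(1+\tfrac{\eps}{36})^{i}$ and $(1+\tfrac{\eps}{36})^{i+1}$ ``sandwich $y$'', concluding $|\seg_\lambda(e,P)|\le (1+\tfrac{\eps}{36})^{i+1}-(1+\tfrac{\eps}{36})^{i}$. But the segment you are bounding contains the \emph{edge} $e$, hence both of its endpoints; you only placed $y$ inside the band $[v_{a+1},v_b]$, not the other endpoint $z$. If $z$ lies just outside that band, the segment containing $e$ is not the one you bounded. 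The fix is short: if $z$ falls outside, then the adjacent pair $\{y,z\}$ is exactly the threshold-crossing pair for level $i$ or $i{+}1$, so both $y$ and $z$ are netpoints and $\seg_\lambda(e,P)=\{e\}$, giving the first alternative. The paper's proof builds this in explicitly: it first assumes $\seg_\lambda(e,P)\supsetneq\{e\}$, notes that then $y,z$ cannot both be netpoints, and from this derives $(1+\tfrac{\eps}{36})^{i+1}>|P[u'..z]|$ as well, so that both endpoints lie in the band before the length bound is invoked. Adding one sentence to cover $z$ makes your proof complete; otherwise the two arguments coincide.
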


\begin{proof}
If $|P| \le 2\lambda$ then $\seg_\lambda(e,P) = \{e\}$ for every edge $e$ of $P$ by definition.
We thus assume that $P$ has more than $2\lambda$ edges.
Let $u'$ and $v'$ be the two vertices of $P$ such that $\length{P[b_u..u']}=\lambda$ and $\length{P[v'..b_v]} = \lambda$, respectively. 
Let $e$ be an edge of $P$ such that $\seg_\lambda(e,P) \supsetneq \{e\}$ and $y \in e$.
Note that $y$ must lie on $P[u'..v']$ since $e$ is not among the first or last $\lambda$
edges of $P$.
It is enough to show
\begin{equation*}
	\length{\seg_\lambda(e,P)} \le \frac{\eps}{36} \cdot \min(\length{P[u'..y]}, \length{P[y..v']})
\end{equation*}
since 
$\min(\length{P[u'..y]}, \length{P[y..v']}) = \min(\length{P[b_u..y]}, \length{P[y..b_v]}) - \lambda$.

Let $z$ be the vertex of the edge $e$ that is not $y$.
First, assume that $y$ is closer to $u'$ along $P$ than~$z$,
that is, $|P[u'..y]| < |P[u'.. z]|$.
Let $i$ be the maximal integer such that $(1+\eps/36)^i \le \length{P[u'..y]}$,
whence $(1+\eps/36)^{i+1}>\length{P[u'..y]}$.
Since $\seg_\lambda(e,P)$ contains more edges than just $e$,
the endpoints $y,z$ cannot both be netpoints of the path $P$
with granularity $\lambda$.
Therefore, we even have $(1+\eps/36)^{i+1} > \length{P[u'..z]}$,
which gives
\begin{equation*}
	|\seg_\lambda(e,P)| \le \left(1+\frac{\eps}{36}\right)^{i+1} - \left(1+ \frac{\eps}{36}\right)^i
	 = \frac{\eps}{36} \cdot \left(1+\frac{\eps}{36}\right)^i
	  \le \frac{\eps}{36} \cdot |P[u'..y]|.
\end{equation*}

We can also show $\length{\seg_\lambda(e,P)} \le \eps/36 \cdot \length{P[z..v']}$ with a symmetric argument.
Together with the assumption $|P[u'..y]| < |P[u'.. z]|$ (whence, $\length{P[z..v']} < \length{P[y..v']}$),
it follows that $\length{\seg_\lambda(e,P)} < \eps/36 \cdot \length{P[y..v']}$.

If conversely $|P[u'..y]| > |P[u'.. z]|$ holds (and thus $\length{P[z..v']} > \length{P[y..v']}$),
then the exact same argument as above where $y$ and $z$ switch places
shows $|\seg_\lambda(e,P)| \le \eps/36 \cdot |P[u'..z]| < \eps/36 \cdot |P[u'..y]|$
and $\length{\seg_\lambda(e,P)} \le \eps/36 \cdot \length{P[y..v']}$.
In summary, we get
$\length{\seg_\lambda(e,P)} \le \eps/36 \cdot \min(\length{P[u'..y]}, \length{P[y..v']})$
in both cases.
\end{proof}

Recall that $d^{(\ell)}(u,v,A)$, $A \subseteq E$, is the length
of the shortest $\ell$-decomposable path in~$G{-}A$.

\begin{restatable}{lemma}{prefixgranularity}
\label{lemma:expath_with_granularity_lambda}
  Let $u,v \in V$ be two vertices, $A \subseteq E$ a set of edges,
  and $b_u \in \B \cap \ball_{G-A}(u,\lambda)$ and $b_v \in \B \cap \ball_{G-A}(u,\lambda)$.
  Let further $\ell$ be a positive integer, 
  and $P$ the shortest $\ell$-expath with granularity $\lambda$ between $b_u$ and $b_v$ in $G-A$.
  Then, for every $y \in V(P)$ with $|P[b_u..y]|,|P[y..b_v]| > \lambda$, it holds that
  $\length{P[b_u..y]} \le 4 \cdot d^{(\ell)}(u,y,A) + \lambda$ 
  and $\length{P[y..b_v]} \le 4 \cdot d^{(\ell)}(y,v,A) + \lambda$.
\end{restatable}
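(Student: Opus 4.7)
The plan is to adapt the proof of Lemma~\ref{lem:prefix_optimality} (Lemma~3.1 of~\cite{ChCoFiKa17}) to accommodate the granularity~$\lambda$. I will establish the bound on $\length{P[b_u..y]}$; the bound on $\length{P[y..b_v]}$ follows by the symmetric argument. Write $P = P_a \circ P_b \circ P_c$ as in \Cref{definition:l-expath_granularity}, with interface vertices $w_a$ (end of $P_a$) and $w_c$ (start of $P_c$). The first observation is that the hypotheses $\length{P[b_u..y]} > \lambda$ and $\length{P[y..b_v]} > \lambda$ force $y$ to lie strictly inside the middle portion $P_b$, since $\length{P_a}, \length{P_c} \le \lambda$.

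The argument is by contradiction. Assume $\length{P[b_u..y]} > 4 \cdot d^{(\ell)}(u,y,A) + \lambda$. Let $S$ be a shortest $b_u$-to-$u$ path in $G-A$; since $b_u \in \ball_{G-A}(u,\lambda)$, we have $\length{S} \le \lambda$. Let $R$ be a shortest $\ell$-decomposable $u$-to-$y$ path in $G-A$, so $\length{R} = d^{(\ell)}(u,y,A)$. Define the alternative path $Q = S \circ R \circ P_b[y..w_c] \circ P_c$. A direct computation gives
\begin{equation*}
\length{Q} \le \lambda + d^{(\ell)}(u,y,A) + \length{P[y..b_v]} < \length{P[b_u..y]} + \length{P[y..b_v]} = \length{P},
\end{equation*}
where the strict inequality uses our contradiction assumption. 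Hence, to derive a contradiction with the minimality of $P$, it remains to show that $Q$ is itself an $\ell$-expath with granularity $\lambda$ between $b_u$ and $b_v$ in $G-A$.

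The outer pieces of $Q$ are immediate: $S$ has length at most $\lambda$ and serves as the granularity prefix, and $P_c$ is the unchanged granularity suffix. The delicate step is to verify that the middle piece $R \circ P_b[y..w_c]$ is a (standard) $\ell$-expath. Here I replay the scheme used in the proof of \Cref{lem:prefix_optimality}. Suppose $y$ falls in the $j$-th decomposable sub-path $\pi_j$ of $P_b$; then $P_b[y..w_c] = \pi_j[y..\mathrm{end}] \circ \pi_{j+1} \circ \cdots \circ \pi_{2\log_2 n}$. Because $P_b$ is an $\ell$-expath, we have $\length{P_b[w_a..y]} \le \sum_{i \le j} \min(2^i, 2^{2\log_2(n) - i})$, which is at most $2 \cdot \min(2^j, 2^{2\log_2(n) - j})$. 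I then pack $R$ together with the residual $\pi_j[y..\mathrm{end}]$ into the $j$-th level of the new expath (leaving earlier levels empty) while keeping $\pi_{j+1}, \ldots, \pi_{2\log_2 n}$ in their original levels; the factor~$4$ in the claimed bound provides precisely the slack needed to guarantee the level-length constraints $\min(2^i, 2^{2\log_2(n) - i})$.

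The main obstacle is the last step: carefully verifying that the reorganized middle portion fits the length constraints of an $\ell$-expath. This is essentially the same combinatorial argument as in~\cite{ChCoFiKa17}, the only new ingredient being the additive $\lambda$ contribution of $S$, which accounts for the additive $\lambda$ in the statement. All the remaining manipulations are routine bookkeeping on the levels of the expath, and the symmetric treatment at the $b_v$-end yields the second inequality.
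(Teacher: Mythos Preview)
Your overall strategy matches the paper's: assume the bound fails, replace the prefix $P[b_u..y]$ by $S \circ R$, show the resulting path is a shorter $\ell$-expath with granularity $\lambda$, and derive a contradiction. The length comparison $|Q| < |P|$ is fine.

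The gap is in your repacking of the middle portion. You propose to place $R$ \emph{together with} the residual $\pi_j[y..\mathrm{end}]$ into the single level~$j$. This fails for two reasons. First, $R$ is an $\ell$-decomposable path and $\pi_j[y..\mathrm{end}]$ is (a suffix of) another; their concatenation is in general only $(2\ell{+}1)$-decomposable, so it cannot occupy a single level of an $\ell$-expath. Second, your auxiliary bound $\sum_{i \le j} \min(2^i, 2^{2\log_2 n - i}) \le 2 \min(2^j, 2^{2\log_2 n - j})$ is false once $j > \log_2 n$: the left-hand side can be close to $2n$ while the right-hand side is $2^{2\log_2 n - j +1}$, which may be arbitrarily small. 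So you have neither the decomposability nor the length constraint at level~$j$.

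The paper's fix is to use a \emph{second} index $j_0 = \lfloor \log_2(|P[b_u..y]| - |P_a|) \rfloor - 1$ and place $R$ alone at level $j_0$, leaving the residual of $\pi_{j_1}$ (your $\pi_j$) at its original level $j_1$; levels strictly between $j_0$ and $j_1$ are set empty. One then checks that $j_0 < j_1$ (so there is room) and that $j_0 < \log_2 n$ (so $\min(2^{j_0}, 2^{2\log_2 n - j_0}) = 2^{j_0}$). The factor~$4$ is used precisely here, to verify $|R| \le 2^{j_0}$ via $2^{j_0} \ge (|P[b_u..y]| - \lambda)/4 > d^{(\ell)}(u,y,A)$. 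With this two-level placement every piece is $\ell$-decomposable and satisfies its length cap, and the contradiction goes through.
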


\begin{proof}
We only show that $|P[b_u..y]|, \length{P[y..b_v]} > \lambda$ implies 
$\length{P[b_u..y]} \le 4 \nwspace d^{(\ell)}(u,y,A) + \lambda$,
the proof of the other inequality is symmetric.
Let $P^{(\ell)}_{G-A}(u,y)$ be the shortest $\ell$-decomposable $u$-$y$-path
in $G-A$, that is, $d^{(\ell)}(u,y,A) = \length{P^{(\ell)}_{G-A}(u,y)}$.
For the sake of contradiction, assume $\length{P[b_u..y]}-\lambda > 4\length{P^{(\ell)}_{G-A}(u,y)}$.
Let $P_{b_u}$ be the shortest path in $G-A$ from $b_u$ to $u$.
Since $b_u \in \B \cap \ball_{G-A}(u,\lambda)$, it holds that $\length{P_{b_u}} \leq \lambda$.

We first prove that $P'=P_{b_u} \circ P^{(\ell)}_{G-A}[u..y]\circ P[y..b_v]$ is an $\ell$-expath with granularity $\lambda$.
Let $P=P_a \circ P_b \circ P_c$ be the constituting decomposition of $P$ as an expath with granularity.
That means $P_a$ and $P_c$ contain at most $\lambda$ edges each, while $P_b$ is the concatenation $P_0\circ \ldots \circ P_{2\log_2 n}$ of $2\log_2(n)+1$, $\ell$-decomposable paths 
such that $\length{P_i} \leq \min(2^i,2^{2\log_2 n - i})$.
To show that $P'$ is an $\ell$-expath with granularity $\lambda$, we define $\ell$-decomposable paths $P'_0,\ldots, P'_{2\log n}$ in $G-A$ such that $\length{P'_i} \leq \min(2^i,2^{2\log_2 n - i})$ and $P'$ is the concatenation of $P_{b_u} \circ P'_0 \circ \ldots \circ P'_{2\log n} \circ P_c$.

We have $\length{P[b_u..y]} > \lambda$ and thus $\length{P[b_u..y]}-\length{P_a} \ge 1$.
Let $j_0= \lfloor \log_2(\length{P[b_u..y]}-\length{P_a})\rfloor -1$.
Be aware that $j_0  = -1$ is possible.
Since $\length{P_i} \leq 2^i$, we have
\begin{equation*}
	\length{P_a} + \sum_{i=0}^{j_0} \length{P_i} < \length{P_a} + 2^{j_0+1} \le
	\length{P_a} + \length{P[b_u..y]}-\length{P_a}
	= \length{P[b_u..y]}.
\end{equation*}
This implies that either $y$ is contained in a subpath $P_{j_1}$ of $P_b$ for some $j_1 > j_0$ 
or that $y$ is a vertex of $P_c$. 
The latter case is impossible 
as $\length{P_c} \leq \lambda$ while $\length{P[y..b_v]} > \lambda$. 
So $y$ is indeed on $P_{j_1}$.

We are now ready to define the new subpaths $P'_0, P'_1, \ldots, P'_{2\log_2 n}$.
For every $0 \le i <j_0$, we define $P'_i$ as the empty path,
and set $P'_{j_0} = P^{(\ell)}_{G-A}(u,y)$. 
For every $j_0 < i < j_1$, we define $P'_i$ as the empty path again,
and $P'_{j_1}$ is the suffix of $P_{j_1}$ starting at $y$.
Finally, for every $i > j_1$, we set $P'_i = P_i$.
Clearly all the $P'_i$ are $\ell$-decomposable paths in $G - A$.
The only index where the length bound is possibly in doubt is $j_0$.
We need to prove that $\length{P^{(\ell)}_{G-A}(u,y)} \le \min(2^{j_0},2^{2\log_2 n - {j_0}})$.
Note that $j_0 < \log_2 n$ as otherwise $\length{P[b_u..y]}-\length{P_a} > n$,
thus
\begin{multline*}
	\min(2^{j_0},2^{2\log_2 n - {j_0}}) = 2^{j_0}
 	 = 2^{\lfloor\log(\length{P[b_u..y]}-\length{P_a})\rfloor -1}\\
 	\geq \frac{\length{P[b_u..y]} - \length{P_a}}{4} 
 	 \geq \frac{\length{P[b_u..y]} - \lambda}{4}
 	> \length{P^{(\ell)}_{G-A}(u,y)}.
\end{multline*}
The last step uses the assumption
$\length{P[b_u..y]}-\lambda > 4 \nwspace \length{P^{(\ell)}_{G-A}(u,y)}$.

We have established that $P'=P_{b_u} \circ P^{(\ell)}_{G-A}[u..y]\circ P[y..b_v]$
is some $\ell$-expath with granularity~$\lambda$ from $b_u$ to $b_v$ in $G-A$.
However, its length is
\begin{align*}
	\length{P'} &=\length{P_{b_u}} + \length{P^{(\ell)}_{G-A}(u,y)} + \length{P[y..b_v]}
		< \lambda + \frac{\length{P[b_u..y]}-\lambda}{4} + \length{P[y..b_v]}\\
		&= \frac{3\lambda + \length{P[b_u..y]}}{4} + \length{P[y..b_v]}
		< \length{P[b_u..y]}+\length{P[y..b_v]}=\length{P},
\end{align*}
where the last proper inequality follows from $|P[b_u..y]| > \lambda$.
This is a contradiction to $P$ being the shortest 
$\ell$-expath with granularity $\lambda$ from $b_u$ to $b_v$ in $G-A$.
\end{proof}

We use the results to show that also \Cref{lem:crucial_survival}
transfers to non-vanishing granularity, but with a slight loss in the approximation.
Again, $d^{(2f+1)}_{\eps/9}(u,b,F)$ is the length of the shortest $(2f{+}1)$-de\-com\-pos\-able
$u$-$v$-path in $G-F$ that is \emph{far away} from all failures.

\begin{restatable}{lemma}{crucialsurvivalgranularity}
\label{lem:crucial_survival_granularity}
  Define $\delta = 8\lambda/L$.
  Let $u,v \in V$ be such that both
  $|\ball_{G-F}(u,\lambda)|, |\ball_{G-F}(v,\lambda)| > L^f$,
  and  $b_u,b_v \in \B$ the associated new pivots.
  Let $P$ be any $(2f{+}1)$-de\-com\-pos\-able path between $u$ and~$v$ in $G-F$
  that is far away from $F$.
  Then,
  $d_{G-F}(u,v) \le FT_{\lambda}(b_u,b_v,F) + 2\lambda \le 3 \length{P} + \delta L$.
  Moreover, if the \emph{shortest} $(2f{+}1)$-de\-com\-pos\-able path between $u$ and $v$ in $G-F$ 
  that is far away from $F$ has more than $L$ edges,
  then,
  $FT_{\lambda}(b_u,b_v,F) + 2\lambda \le (3+\delta) \cdot d^{(2f+1)}_{\eps/9}(u,v,F)$.
\end{restatable}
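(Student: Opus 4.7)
The plan is to mirror the structure of the proof of \Cref{lem:crucial_survival} while tracking how the granularity parameter~$\lambda$ propagates through the estimates, using \Cref{lemma:segment-size,lemma:expath_with_granularity_lambda} in place of \Cref{lem:segments_not_too_large,lem:prefix_optimality}. First I would derive the lower bound $d_{G-F}(u,v) \le FT_\lambda(b_u,b_v,F) + 2\lambda$: the granularity analogue of \Cref{lem:weaker_guarantee} shows that $FT_\lambda(b_u,b_v,F)$ equals the length of an actual $b_u$-to-$b_v$ path in $G-F$ (assembled by replacing each part of $P_{\nu^*}$ with its replacement path), so $FT_\lambda(b_u,b_v,F) \ge d_{G-F}(b_u,b_v)$. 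Combined with $d_{G-F}(u,b_u) \le \lambda$ and $d_{G-F}(b_v,v) \le \lambda$, which hold because $b_u$ and $b_v$ are the new pivots stored for graphs $G_{i_u}, G_{i_v} \subseteq G-F$ whose $\lambda$-balls around $u$ respectively $v$ are dense, the triangle inequality yields the bound.

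For the upper bound I would introduce a reference $(2f{+}1)$-expath with granularity $\lambda$ between $b_u$ and $b_v$. Setting $R_u = P(b_u,u,F)$ and $R_v = P(v,b_v,F)$, each of length at most $\lambda$, define $Q = R_u \circ P \circ R_v$ so that $|Q| \le |P| + 2\lambda$. The path $Q$ fits \Cref{definition:l-expath_granularity} with $P_a = R_u$, $P_c = R_v$, and $P_b = P$: since $P$ is a single $(2f{+}1)$-decomposable path, it can be placed at the (large enough) middle level of the expath $P_b$ with the remaining $2\log_2 n$ levels left empty.

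The core is an induction on the nodes visited by the query algorithm showing that $Q \subseteq G - A_\nu$ throughout. The base case at the root is immediate. For the inductive step, assume $Q \subseteq G-A_\nu$ and that the algorithm descends via $\seg_\lambda(e_F,P_\nu)$; if this segment is the singleton $\{e_F\}$, it cannot collide with $E(Q) \subseteq E(G-F)$. Otherwise, suppose for contradiction some $e_Q \in E(Q) \cap E(\seg_\lambda(e_F,P_\nu))$ exists and pick $y \in V(P_\nu) \cap e_Q$ together with $z \in V(P_\nu) \cap e_F$ (choosing $e_F$ as the failing edge in the segment closest to $e_Q$) so that $P_\nu[y..z]$ lies in $\seg_\lambda \cap (G-F)$. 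Combining \Cref{lemma:segment-size} with \Cref{lemma:expath_with_granularity_lambda} applied to $P_\nu$ in $G-A_\nu$ and using the induction hypothesis $P \subseteq G-A_\nu$ as a $(2f{+}1)$-decomposable witness upper-bounds $|\seg_\lambda|$ by $\frac{\eps}{9}\min(|P[u..y]|,|P[y..v]|)$, whereas $P$ being far away from $F$ gives the strict lower bound $|\seg_\lambda| \ge d_{G-F}(y,z) > \frac{\eps}{9}\min(|P[u..y]|,|P[y..v]|)$---a contradiction. With $Q \subseteq G-A_{\nu^*}$ in hand, minimality of $P_{\nu^*}$ among $(2f{+}1)$-expaths with granularity $\lambda$ in $G-A_{\nu^*}$ gives $|P_{\nu^*}| \le |Q| \le |P| + 2\lambda$, and the query algorithm returns at most $3|P_{\nu^*}|$; thus $FT_\lambda(b_u,b_v,F) + 2\lambda \le 3|P| + 8\lambda = 3|P| + \delta L$. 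The second assertion follows by taking $P$ to be the shortest far-away $(2f{+}1)$-decomposable path, so that $|P| = d^{(2f+1)}_{\eps/9}(u,v,F) > L$ forces $8\lambda \le \delta |P|$ and hence $FT_\lambda + 2\lambda \le (3+\delta) \cdot d^{(2f+1)}_{\eps/9}(u,v,F)$.

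The main obstacle is the subcase of the contradiction step where $e_Q$ lies on $R_u$ or $R_v$ rather than on $P$, so the vertex $y$ need not belong to $V(P)$ and the trapezoid of $P$ does not directly supply the lower bound on $|\seg_\lambda|$. The resolution exploits the granularity structure: any $y \in V(R_u) \cap V(P_\nu)$ satisfies $d_{G-F}(b_u,y) \le \lambda$, which together with $R_u \subseteq G-A_\nu$ (induction) and \Cref{lemma:expath_with_granularity_lambda} localizes $|P_\nu[b_u..y]|$ to within a small multiple of $\lambda$; the fine $-\lambda$ correction in \Cref{lemma:segment-size} then pins $|\seg_\lambda|$ to at most a fraction $\eps$ of $\lambda$. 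Carefully tuning the constants $\frac{\eps}{36}$ and $\frac{\eps}{9}$ in the definition of netpoints with granularity so that this residual error either makes the segment a singleton or folds harmlessly into the additive $8\lambda = \delta L$ slack is the technical heart of the proof.
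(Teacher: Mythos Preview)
Your overall plan matches the paper's proof almost exactly: define $Q = R_u \circ P \circ R_v$, show by induction that $Q$ survives in $G-A_\nu$, and read off the bounds. The lower bound, the final arithmetic, and the main case $y \in V(P)$ are all correct.

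The gap is in your resolution of the obstacle you correctly identify, namely $e_Q \in E(R_u) \cup E(R_v)$. Your proposal is to bound $\length{\seg_\lambda}$ from above (via \Cref{lemma:segment-size,lemma:expath_with_granularity_lambda} you would get $\length{\seg_\lambda} \le \tfrac{\eps}{9}\lambda$) and then ``fold the residual error into the $8\lambda$ slack'' or ``make the segment a singleton''. Neither works: the induction hypothesis is that $Q$ survives \emph{exactly} in $G-A_\nu$, so a single edge of $R_u$ landing in the failed segment already breaks it, regardless of how short the segment is; and $\tfrac{\eps}{9}\lambda < 1$ would require $\lambda = O(1/\eps)$, which contradicts $\lambda = \Theta(\eps L)$. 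There is also no lower bound on $d_{G-F}(y,z)$ available when $y \in V(R_u)$, since $R_u$ is not assumed to be far away from $F$, so you cannot close the contradiction the way you do for $y \in V(P)$.

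The paper resolves this differently: it argues that $y$ \emph{cannot} lie on $R_u$ (or $R_v$). Since the segment containing $e_Q$ is not a singleton, $y$ is past the first $\lambda$ edges of $P_\nu$, i.e.\ $\length{P_\nu[b_u..y]} > \lambda$. But if $y$ were on $R_u$, then $R_u[b_u..y]$ is a path of length at most $\lambda$ in $G-A_\nu$ (by induction $R_u \subseteq G-A_\nu$), and $R_u[b_u..y] \circ P_\nu[y..b_v]$ is a $(2f{+}1)$-expath with granularity $\lambda$ (take $P_a = R_u[b_u..y]$ and the rest as before) that is strictly shorter than $P_\nu$, contradicting the minimality of $P_\nu$. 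Hence $y \in V(P)$, and your main-case argument applies directly. This is the missing step; once you have it, no constant-tuning or slack-absorption is needed.
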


\begin{proof}
We prove the survival of $P$ all the way to the output node $\nu^*$ of $FT_{\lambda}(b_u,b_v)$
when queried with set $F$, as in \Cref{lem:crucial_survival}.
We have to take care of the fact that $P$ and $P_{\nu^*}$
may have different endpoints.
In fact, we argue about a longer path.
Let $P(b_u,u,F)$ be the replacement path between $b_u$ and $u$ in $G{-}F$.
It has at most $\lambda$ edges by the choice $b_u \in \ball_{G-F}(u,\lambda)$,
same for $P(v,b_v,F)$.
Also $P$ is $(2f{+}1)$-decomposable, thus $Q = P(b_u,u,F) \circ P \circ  P(v,b_v,F)$
is an $(2f{+}1)$-expath with granularity $\lambda$.
We argue by induction that $Q$ exists in the graph $G{-}A_{\nu}$ for every visited node $\nu$.
This is clear for the root.
For a non-output node $\nu \neq \nu^*$, let $\nu'$ be its visited child.

To reach a contradiction, assume $Q$ does not exist in $G-A_{\nu'}$
Thus, there is a segment of
$P_{\nu}$ that contains both a failing edge of $F$ and an edge of $Q$.
Without losing generality,
we choose $e_F \in F$ and $e_Q \in E(Q)$ such that both $e_F$ and $e_Q$ are in $P_\nu$
and the subpath of $P_\nu$ containing both edges contains no other failing edge.
Let $y \in e_Q$ the endpoint closer to $e_F$ along $P_\nu$,
and let $z \in e_F$ the endpoint closer to $e_Q$. 
The subpath $P_\nu[y..z]$ is entirely in $G-F$.

The edges $e_F \ne e_Q$ must be different since $Q$ lies in $G{-}F$.
Segments with more than one edge only appear in the middle part of the stored expath, that is,
$\length{P_\nu[b_u..y]},\length{P_\nu[y..b_v]} > \lambda$.
By \Cref{lemma:segment-size,lemma:expath_with_granularity_lambda}, this implies
\begin{align*}
	\length{\seg_{\lambda}(e_F,P)}  
  		&\le \frac{\eps}{36} \Big(\min\!\big(\length{P_\nu[b_u..y]}, \length{P_\nu[y..b_v]} \big) - \lambda \Big)\\
  		&\le \frac{\eps}{36} \Big(\min\!\big(4 \nwspace d^{(2f+1)}(u,y,A_{\nu})+\lambda,
  			4 \nwspace d^{(2f+1)}(y,v,A_{\nu}) + \lambda \big) - \lambda \Big)\\
  		&= \frac{\varepsilon}{9} \min\!\big(d^{(2f+1)}(u,y,A_{\nu}), d^{(2f+1)}(y,v,A_{\nu})\big).
\end{align*}

We argue next that $y$ lies on the middle part of $Q$, that is, on $P$.
To reach a contradiction, assume that $y$ is on the prefix $P(b_u,u,F)$
and consider the replacement path $P(b_u,y,F) = P(b_u,u,F)[b_u..y]$,
which has at most $\lambda$ edges since $b_u \in \ball_{G-F}(u,\lambda)$.
The concatenation $P(b_u,y,F) \circ P_\nu(y,b_v)$ is some $(2f{+}1)$-expath from
$b_u$ to $b_v$ with granularity $\lambda$,
but due to $|P_\nu[b_u..y]| > \lambda$ it is strictly shorter than $P_\nu$, a contradiction.
Likewise, $y$ being on $P(v,b_v,F)$ contradicts $|P_{\nu}[y..b_v]| > \lambda$.

As $P$ is some $(2f{+}1)$-decomposable path from $u$ to $v$ in $G-A_{\nu}$,
subpaths of decomposable paths are decomposable and $y$ is on $P$,
we get $d^{(2f+1)}(u,y,A_{\nu}) \le \length{P[u,y]}$ 
and $d^{(2f+1)}(y,v,A_{\nu}) \le \length{P[y..v]}$.
Since $P$ is also far away from $e_F$ the vertex $z$ is outside the trapezoid 
$\tr^{\eps/9}_{G-F}(P)$, we finally arrive at the contradiction
\begin{align*}
	\length{\seg_{\lambda}(e_F,P)} &\ge \length{P_{\nu}[y..z]} \ge d_{G-F}(y,z)\\
		&> \frac{\eps}{9} \min(\length{P[u..y]},\length{P[y..b]}) 
			\ge \frac{\varepsilon}{9} \min\!\big(d^{(2f+1)}(u,y,A_{\nu}), d^{(2f+1)}(y,v,A_{\nu})\big).
\end{align*}

For the approximation, we focus on proving 
\begin{equation*}
	d_{G-F}(u,v) \leq FT_{\lambda}(b_u,b_v,F) + 2\lambda 
	\leq (3{+}\delta) \cdot d^{(2f+1)}_{\eps/9}(u,v,F)
\end{equation*}
with $\delta = 8\lambda/L$
if $P$ is the \emph{shortest} $(2f{+}1)$-decomposable path from $u$ to $v$ in $G-F$
and has more than $L$ edges;
in particular, if $\length{P} = d^{(2f+1)}_{\eps/9}(u,v,F)$.
The other claim is established in passing.

Recall that $FT_{\lambda}(b_u,b_v,F) \le 3 \length{P_{\nu^*}}$ for the output node $\nu^*$,
for which we determined that $3 \length{P_{\nu^*}} > d_{G-F}(b_u,b_v)$.
By the triangle inequality, it holds that
$FT_{\lambda}(b_u,b_v,F) + 2\lambda \ge d_{G-F}(b_u,b_v) + d_{G-F}(u,b_u) + d_{G-F}(b_v,v) 
	\ge d_{G-F}(u,v)$.
We have seen that $Q$ survives until $\nu^*$ and that $P_{\nu^*}$ is not longer than $Q$.
In summary,
\begin{multline*}
	FT_{\lambda}(b_u,b_v,F) + 2\lambda \le 3 \length{Q} + 2\lambda
		 \le 3 (\length{P} + 2\lambda) + 2\lambda\\
		 \le 3 \length{P} + 8\lambda =  3 \length{P} + \delta L 
		 < (3+\delta) \, d^{(2f+1)}_{\eps/9}(u,v,F). \qedhere
\end{multline*}
\end{proof} 

Before formally proving the $3+\varepsilon$ stretch of the new query algorithm,
we sketch the necessary changes to \Cref{lem:sublinear_query}.
Recall that we assume $\varepsilon >0$ to be bounded away from $3$,
thus $\Delta = 3-\varepsilon > 0$ is a constant.
We define $\lambda =\frac{\Delta}{96} \varepsilon L$,
which in turn implies $\delta = \frac{\Delta}{12} \varepsilon$.
As it turns out, any $\delta \le \frac{3-\varepsilon}{9+\varepsilon} \varepsilon$ would do
as this ensures 
$\delta +  (6 + \delta + \varepsilon) \frac{\varepsilon}{9} \le \varepsilon$.
In \Cref{lem:sublinear_query} we had $w_{H^F}(u,v) \le 3\, d_{G-F}(u,v)$
if the path was short (``first case'') or long but far away from all failures (``second case'').
We now only have the weaker inequality $w_{H^F}(u,v) \le (3{+}\delta)\, d_{G-F}(u,v)$ due to the dense ball case.
For the ``third case'', we are going to use the $x$-$y$-$z$-argument of \Cref{lem:not_too_far_off} again. 
A similar reasoning as before gives
$w_{H^F}(u,z) \le (3{+}\delta)(1{+}\frac{\varepsilon}{9}) d_{G-F}(u,y)$
(instead of $(3{+}\frac{\varepsilon}{3}) \, d_{G-F}(u,y)$).
The crucial part is to carefully adapt the chain of inequalities to show that also this slightly higher factor gives the desired stretch of $3+\varepsilon$.

\begin{lemma}
\label{lem:improved_query_correctness}
  With the changes made in \Cref{sec:improved_query_time}, 
  and when setting $\lambda = \frac{3-\varepsilon}{96} \varepsilon L$ 
  and $\delta = \frac{8\lambda}{L}$,
  the inequalities
  $d_{G-F}(s,t) \le d_{H^F}(s,t) \le (3{+}\eps) \nwspace d_{G-F}(s,t)$
  hold with high probability over all queries.
\end{lemma}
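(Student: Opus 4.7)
The plan is to mirror the induction from \Cref{lem:sublinear_query} but with the weaker per-edge guarantees that arise in the dense-ball regime. Concretely, I would prove the strengthened claim that $d_{G-F}(u,v) \le d_{H^F}(u,v) \le (3{+}\varepsilon) \nwspace d_{G-F}(u,v)$ for every pair $u,v \in V(H^F)$, by induction on $d_{G-F}(u,v)$. The lower bound is immediate, because every edge weight $w_{H^F}(u,v)$ is built either from $\widehat{d^{\le L}}(u,v,F)$, from $FT(u,b,F){+}FT(b,v,F)$, or from $FT_\lambda(b_u,b_v,F){+}2\lambda$, each of which underestimates no distance (for the last, this is proven directly in \Cref{lem:crucial_survival_granularity}; for the $FT$ values this is the inheritance-based argument already used in \Cref{sec:sublinear_query_time}).

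For the upper bound, I would split the argument into the same three cases as in \Cref{lem:sublinear_query}, differing only in which tool bounds $w_{H^F}(u,v)$. \textbf{Case 1:} $P(u,v,F)$ has at most $L$ edges, so the short-path oracle with $k=2$ yields $w_{H^F}(u,v) \le \widehat{d^{\le L}}(u,v,F) \le 3\nwspace d_{G-F}(u,v)$. \textbf{Case 2:} $P(u,v,F)$ is long and far away from $F$. If at least one of $\ball_{G-F}(u,\lambda), \ball_{G-F}(v,\lambda)$ is sparse, \Cref{lem:approx_sparse_balls} already gives $w_{H^F}(u,v)\le 3\nwspace d_{G-F}(u,v)$. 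If both balls are dense, then the associated new pivots $b_u,b_v$ exist and \Cref{lem:crucial_survival_granularity} gives $w_{H^F}(u,v) \le FT_\lambda(b_u,b_v,F)+2\lambda \le (3{+}\delta) d_{G-F}(u,v)$ (using that $P(u,v,F)$ itself is a shortest $(2f{+}1)$-decomposable, far-away path with more than $L$ edges). \textbf{Case 3:} $P(u,v,F)$ is long but not far away. Apply \Cref{lem:not_too_far_off} to obtain $x\in\{u,v\}$, $y\in V(P)$, and $z\in V(F)$ with $P[x..y]\circ P(y,z,F)$ far away from $F$ and of length at most $(1{+}\tfrac{\varepsilon}{9})\nwspace d_{G-F}(x,y)$. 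Applying Case 1 or Case 2 to this shorter far-away path yields $w_{H^F}(u,z) \le (3{+}\delta)(1{+}\tfrac{\varepsilon}{9})\nwspace d_{G-F}(u,y)$ (replacing the tighter $(3{+}\tfrac{\varepsilon}{3})$ bound from \Cref{lem:sublinear_query}), and the induction hypothesis, which is applicable because $d_{G-F}(z,v)<d_{G-F}(u,v)$, gives $d_{H^F}(z,v)\le(3{+}\varepsilon)\nwspace d_{G-F}(z,v)$.

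The main obstacle is propagating the weaker $(3{+}\delta)$ bound from the dense-ball case through the triangle-inequality chain of Case 3 without exceeding the target stretch $3{+}\varepsilon$. I would bound $d_{H^F}(u,v) \le w_{H^F}(u,z)+d_{H^F}(z,v)$, expand using $d_{G-F}(z,v)\le d_{G-F}(z,y)+d_{G-F}(y,v)\le \tfrac{\varepsilon}{9}\nwspace d_{G-F}(u,y)+d_{G-F}(y,v)$, and then collect all terms proportional to $d_{G-F}(u,y)$ and $d_{G-F}(y,v)$. The $d_{G-F}(y,v)$ contribution is at most $(3{+}\varepsilon)\nwspace d_{G-F}(y,v)$ by inspection, while the $d_{G-F}(u,y)$ contribution equals $\bigl(3+\delta+(6{+}\delta{+}\varepsilon)\tfrac{\varepsilon}{9}\bigr)\nwspace d_{G-F}(u,y)$. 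Choosing $\lambda=\tfrac{3-\varepsilon}{96}\varepsilon L$ so that $\delta=\tfrac{8\lambda}{L}=\tfrac{3-\varepsilon}{12}\varepsilon\le\tfrac{3-\varepsilon}{9+\varepsilon}\varepsilon$ (valid since $\varepsilon\le 3$) yields $\delta+(6{+}\delta{+}\varepsilon)\tfrac{\varepsilon}{9}\le\varepsilon$, hence $d_{H^F}(u,v)\le(3{+}\varepsilon)\bigl(d_{G-F}(u,y)+d_{G-F}(y,v)\bigr)=(3{+}\varepsilon)\nwspace d_{G-F}(u,v)$, closing the induction. The ``w.h.p.'' qualifier comes from the sampling events needed in the underlying lemmas (the $(L,f)$-replacement path covering succeeding, the pivot set $B$ hitting all relevant subpaths of concatenated replacement paths, and the scarce pivot set $\B$ hitting every dense $\lambda$-ball), all of which hold jointly by a union bound over the polynomially many queries.
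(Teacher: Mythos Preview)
Your proposal is correct and mirrors the paper's proof closely: same induction on $d_{G-F}(u,v)$, same three cases, and the identical final inequality chain governed by $\delta + (6{+}\delta{+}\varepsilon)\tfrac{\varepsilon}{9} \le \varepsilon$. One point to tighten in Case~3: you cannot literally ``apply Case~1 or Case~2'' to the pair $(u,z)$, because those cases are stated for the \emph{replacement path} $P(u,z,F)$, which may itself be long and \emph{not} far away from $F$; only the specific path $P' = P[x..y]\circ P(y,z,F)$ is known to be far away. The paper therefore re-runs the underlying mechanisms with $P'$ in place of the replacement path (short-path oracle if $P'$ is short; the pivot argument behind \Cref{lem:approx_sparse_balls} redone for $P'$ in the sparse-ball subcase, using that $P'$ is a concatenation of two replacement paths so the adjusted sampling of $B$ still hits its $\lambda$-prefix; and, in the dense-ball subcase, the \emph{first} part of \Cref{lem:crucial_survival_granularity}, which bounds $FT_\lambda+2\lambda \le 3|P'|+\delta L < (3{+}\delta)|P'|$ for \emph{any} far-away $(2f{+}1)$-decomposable path once $|P'|>L$), arriving at exactly your $w_{H^F}(u,z)\le (3{+}\delta)(1{+}\tfrac{\varepsilon}{9})\,d_{G-F}(u,y)$.
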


\begin{proof}
	The structure of the proof is like the one for \Cref{lem:sublinear_query}.
	Recall the graph $H^F$, which depends on the query $(s,t,F)$.
	It has an edge for every pair in $\binom{V(F) \cup \{s,t\}}{2}$,
	  the weight $w_{H^F}(u,v)$ is the minimum of $\widehat{d^{\le L}}(u,v,F)$
	and $w'_{H^F}(u,v)$, where the computation of the latter depends on whether
	we are in the sparse ball case or the dense ball case.
	The details are given in \Cref{eq:sparse-pivots,eq:dense-pivots}
	
	The first inequality $d_{G-F}(s,t) \le d_{H^F}(s,t)$ follows from the fact that all oracle calls used to compute $w_{H^F}(u,v)$ do not underestimate
	the true replacement distance $d_{G-F}(u,v)$.
	The second inequality will be implied by $d_{H^F}(u,v) \le (3{+}\eps) \nwspace d_{G-F}(u,v)$
	holding for all pairs $u,v \in V(H^F)$.
    We prove this by induction over the replacement distance.
 
    Fix a pair $(u,v)$ 
    and assume the statement holds for all pairs with smaller distance in $H^F$.
	We distinguish the same three cases as before,
 	beginning with the replacement path $P = P(u,v,F)$ having at most $L$ edges.
 	The same argument as before, via \Cref{thm:oracle_short_paths}, gives
    $d_{H^F}(u,v) \le 3 \nwspace d_{G-F}^{\le L}(u,v) = 3 \nwspace d_{G-F}(u,v) 
    	\le (3+\varepsilon) \nwspace d_{G-F}(u,v)$ w.h.p.

	In the second case, $P$ is long and far away from all failures.
	If $\ball_{G-F}(u,\lambda)$ or $\ball_{G-F}(v,\lambda)$ contain at most $L^f$ vertices,
	then \Cref{lem:approx_sparse_balls} also shows that
	$d_{H^F}(u,v) \le w_{H^F}(u,v) \le 3 \cdot d_{G-F}(u,v)$.
	For the subcase that $|\ball_{G-F}(u,\lambda)|, |\ball_{G-F}(v,\lambda)| > L^f$,
	recall that $\delta = 8\lambda/L = \frac{3-\varepsilon}{12} \varepsilon$
	and that $d_{\varepsilon/9}^{(2f+1)}(u,v,F)$ is the length of the shortest $(2f{+}1)$-decomposable
	$u$-$v$-path in $G-F$ that is far away from $F$.
	The replacement path $P$ is $f$-decomposable and therefore also $(2f{+}1)$-decomposable.
	It is far away by assumption, so we get
	$d_{\varepsilon/9}^{(2f+1)}(u,v,F) = \length{P} = d_{G-F}(u,v)$.
	The second part of \Cref{lem:crucial_survival_granularity} now implies that
	$d_{H^F}(u,v) \le w'_{H^F}(u,v) \le (3+\delta) \cdot d_{G-F}(u,v) 
		\le (3+\varepsilon) \cdot d_{G-F}(u,v)$.
  
	The main part of the proof consists in recovering the third case,
	where the replacement path $P$ is long but not far away from $F$.
	By \Cref{lem:not_too_far_off}, there are vertices $x \in \{u,v\}$, $y \,{\in}\, V(P)$,
	and $z \,{\in}\, \tr^{\eps/9}_{G-F}(P) \cap V(F)$
	such that $d_{G-F}(z,y) \le \tfrac{\eps}{9} \,{\cdot}\, d_{G-F}(x,y)$
	and the path $P' = P[x..y] \circ P(y,z,F)$ is far away.
	$P(y,z,F)$ denotes the shortest path from $y$ to $z$ in $G-F$.
  	The length of $P'$ is bounded from above by \mbox{$(1+ \tfrac{\eps}{9}) \cdot d_{G-F}(x,y)$}.

	We again assume $x=u$ for simplicity.
	Just as before, if $P'$ has at most $L$ edges, we have
	$w_{H^F}(u,z) \le \widehat{d^{\le L}}(u,z,F) \le 3 \nwspace \length{P'}
      \le 3(1+\frac{\eps}{9}) \, d_{G-F}(u,y)$.
  	If $P'$ has more than $L$ edges, we have to distinguish 
  	the sparse ball and dense ball subcases again.
  	First, note that $P[u..y]$ is a subpath of the replacement path $P = P(u,v,F)$,
  	so it is itself the unique replacement path $P(u,y,F)$.
  	Therefore, $P' = P[u..y] \circ P(y,z,F)$ and \emph{all its subpaths} are a concatenation of at most two replacement paths.
  	Moreover, since replacement paths are $f$-decomposable,
  	all subpaths of $P'$ are $(2f{+}1)$-decomposable.
  	Finally, recall that all subpaths are far away from all failures in $F$.

  	For the sparse balls case, let $a \in \{u,z\}$ be a vertex such that $\ball_{G-F}(a,\lambda)$ is sparse.
  	Consider the vertex $a'$ that is exactly $\lambda$ steps away from $a$ on the path $P'$.
  	Then, $P'[a...a']$ is a concatenation of two replacement paths and has $\lambda$ edges.
  	We adjusted the sampling probability for $B$ to ensure that there is a (regular) pivot
  	$b^* \in B \cap \ball_{G-F}(a,\lambda)$ on $P'$.
  	The subpath $P'[b^*..z]$ is $(2f{+}1)$-decomposable and far away from all failures,
  	so the exact same argument as in \Cref{lem:approx_sparse_balls}
  	gives $w_{H^F}(u,z) \le 3 \nwspace \length{P'} \le 3(1+ \tfrac{\eps}{9}) \cdot d_{G-F}(u,y)$.
  	
  	Regarding the dense ball case, the whole path $P'$ is $(2f{+}1)$-decomposable.
	The first part of \Cref{lem:crucial_survival_granularity}
	together with $L < \length{P'}$ gives
	$w_{H^F}(u,z) \le 3 \length{P'} + \delta L < (3+\delta) \length{P'} \le
	(3+\delta)(1+ \tfrac{\eps}{9}) \cdot d_{G-F}(u,y)$.
	In summary, we have $w_{H^F}(u,z) \le (3+\delta)(1+ \tfrac{\eps}{9}) \cdot d_{G-F}(u,y)$
	in all cases.
	Vertex $z$ lies in the trapezoid associated with the path $P = P(u,v,F)$,
	so $d_{G-F}(z,v) < d_{G-F}(u,v)$.
	By induction, we get $d_{H^F}(z,v) \le (3+\varepsilon) \nwspace d_{G-F}(z,v)$.
	Recall that our choices of $\lambda$ and $\delta$ imply
	$\delta +  (6 + \delta + \varepsilon) \frac{\varepsilon}{9} \le \varepsilon$.
	Combining all this, we arrive at
	\begin{align*}
    d_{H^F}(u,v) &\le w_{H^F}(u,z) + d_{H^F}(z,v)
       \le (3{+}\delta)\! \left(1{+}\frac{\varepsilon}{9}\right)  d_{G-F}(u,y)
        + (3{+}\eps)  d_{G-F}(z,v)\\[.33em]
      &\le (3{+}\delta)\! \left(1{+}\frac{\varepsilon}{9}\right)  d_{G-F}(u,y)
        + (3{+}\eps)  d_{G-F}(z,y) + (3{+}\eps)  d_{G-F}(y,v)\\[.33em]
      &\le (3{+}\delta)\! \left(1{+}\frac{\varepsilon}{9}\right)  d_{G-F}(u,y)
        + (3{+}\eps)\frac{\varepsilon}{9}  d_{G-F}(u,y) + (3{+}\eps)  d_{G-F}(y,v)\\[.33em]
      &= 3 d_{G-F}(u,y) + \delta \cdot d_{G-F}(u,y) +
       (6 + \delta + \varepsilon) \frac{\varepsilon}{9} \cdot d_{G-F}(u,y)\, +
      	(3{+}\eps) \nwspace d_{G-F}(y,v)\\[.33em]
      &\le 3 d_{G-F}(u,y) + \varepsilon \cdot  d_{G-F}(u,y) + (3{+}\eps) \nwspace d_{G-F}(y,v)
       = (3{+}\eps) \nwspace d_{G-F}(u,v). \qedhere
\end{align*}
\end{proof}

\subsection{Proof of \texorpdfstring{\Cref{thm:oracle_long_paths}}{Theorem 1.1}}
\label{subsec:proof_main_theorem}

We can now complete the proof of \Cref{thm:oracle_long_paths}, which we restate below.
It follows in the same fashion as \Cref{lem:long_paths_sublinear_query}
(see \Cref{subsec:sublinear_wrapup}),
but takes into account the changes made in this section.
The main difference is the transition from $L$ to $\lambda$,
giving an extra $1/\varepsilon$ factor in the space and preprocessing time,
and, of course, the improved query time.
\RestateGo{\restateoraclelongpaths}
\oraclelongpaths*

\begin{proof}
	The stretch of $3+\varepsilon$ is treated in \Cref{lem:improved_query_correctness},
	requiring $\lambda = O(\varepsilon L)$.
	The derivation of the other parameters of the theorem is very similar to the proof of \Cref{lem:long_paths_sublinear_query}.
	The main difference is that now the vertices for $B$ are sampled with probability
	$\Otilde(f/\lambda) = \Otilde(1/\varepsilon L)$ (as opposed to $\Otilde(1/L)$ before).
	We again choose $L = n^{\alpha/(f+1)}$.
	
	The total size of the FT-trees with and without granularity
	are discussed in \Cref{subsec:improved_query_dense}.
	There, we claimed that the $\Otilde(nL^{2f+o(1)}/\lambda)$ space needed
	to store all the pivot sets $B_{G_i}(x,\lambda)$ 
	for graphs $G_i$ of the $(L,f)$-replacement path covering is immaterial.
	\begin{equation*}
		\Otilde\!\left(\frac{nL^{2f+o(1)}}{\lambda}\right) 
			= \Otilde\!\left(\frac{nL^{2f-1+o(1)}}{\varepsilon}\right)
			= \Otilde\left(\frac{n^{1+ \frac{\alpha(2f-1+o(1))}{f+1}}}{\varepsilon}\right)
			= \frac{n^{1+\alpha\frac{2f-1+o(1)}{f+1}}}{\varepsilon}.
	\end{equation*}
	The last transformation uses $\alpha = O(1)$.
	We compare this with the space for the FT-trees.
	\begin{equation*}
		\Otilde\!\left(\frac{n^2}{\lambda}\right) \cdot O\!\left(\frac{\log n}{\varepsilon}\right)^{f+1}
			= \Otilde\!\left(\frac{n^{2-\frac{\alpha}{f+1}}}{\varepsilon}\right) \cdot O\!\left(\frac{\log n}{\varepsilon}\right)^{f+1}
			= \Otilde(n^{2-\frac{\alpha}{f+1}}) \cdot O\!\left(\frac{\log n}{\varepsilon}\right)^{f+2}.
	\end{equation*}
	Indeed, due to $\alpha < \frac{1}{2} \le \frac{f+1}{2f+o(1)}$,
	the latter part dominates.
	
	The main effort when answering a query is computing the edge weights in the auxiliary graph $H^F$
	in the sparse ball case.
	There, we have to scan all pivots in $B \cap \ball_{G-F}(x,\lambda)$.
	(The dense ball case only takes time 
	$\Otilde(L^{o(1)}(\lambda + 1/\varepsilon)) 
		= \Otilde(\varepsilon L^{1+o(1)} + L^{o(1)}/\varepsilon)$.)
	Recall that $H^F$ is of size $O(f^2) = O(1)$.
	The query time is
	\begin{equation*}
		\Otilde\!\left(\frac{L^{f+o(1)}}{\varepsilon \lambda}\right)
			= \Otilde\!\left(\frac{L^{f-1+o(1)}}{\varepsilon^2}\right)
			= \Otilde\!\left(\frac{n^{\frac{\alpha(f-1+o(1))}{f+1}}}{\varepsilon^2}\right)
			= \frac{n^{\alpha\frac{f-1}{f+1}+o(1)}}{\varepsilon^2} 
			= O\!\left(\frac{n^\alpha}{\varepsilon^2}\right).
	\end{equation*}

	For the preprocessing time, we assume that we can compute an expath with granularity~$\lambda$
	in time $\Otilde(fm + \lambda) = \Otilde(m)$.
	Even though the FT-trees with granularity are much larger, we need fewer of them.
	It still takes longer to construct all the regular FT-trees.
	Compared to \Cref{lem:long_paths_sublinear_query}, we get an additional $1/\varepsilon$
	factor from replacing $L$ by $\lambda = \Theta(\varepsilon L)$,
	yielding 
	$\Otilde(mn^{2-\frac{\alpha}{f+1}}/\varepsilon)
  	\cdot O(\log n/\varepsilon)^{f}$.
	It takes $\Otilde(nL^{3f+o(1)}) = n^{1+3\alpha \frac{f}{f+1}+o(1)}$ additional time
	to prepare the sets $B_{G_i}(x,\lambda)$.
	This is negligible compared to the 
	$mn^{2-\frac{\alpha}{f+1}}/\varepsilon = \Omega(n^{3-\frac{\alpha}{f+1}})$ term.
	We style the final preprocessing time as 
	$\Otilde(mn^{2-\frac{\alpha}{f+1}}/\varepsilon)
  	\cdot O(\log n/\varepsilon)^{f} =  	
  	\Otilde(mn^{2-\frac{\alpha}{f+1}})
  	\cdot O(\log n/\varepsilon)^{f+1}$
\end{proof}

\section{Computing Shortest \texorpdfstring{$(2f{+}1)$}{(2f+1)}-Expaths in \texorpdfstring{$\Otilde(fm)$}{Õ(fm)} Time}
\label{sec:improved_preproc} 

We finally turn to computing shortest $(2f{+}1)$-expaths in $\Otilde(fm)$ time.
We assume that we are given access to the all-pairs distances in the original graph $G$.
Since the latter data can be obtained in time $\Otilde(mn)$, this completes the proof of the preprocessing time in
\Cref{lem:long_paths_sublinear_query} and \Cref{thm:oracle_long_paths}.
It also allows us to improve the time needed to construct the (superquadratic-space) $f$-DSO with stretch $(1+\eps)$ by Chechik et al.~\cite{ChCoFiKa17}.
More precisely, the preprocessing time, which was $O(n^{5+o(1)}/\varepsilon^f)$~\cite{ChCoFiKa17},
is now reduced to $O(mn^{2+o(1)}/\varepsilon^f)$ (\Cref{thm:improved_preprocessing}),
improving the complexity by a factor of $n^3/m$.

We present our algorithm to compute $(2f{+}1)$-expaths in \emph{weighted} undirected graphs
and with a sensitivity of up to $f = o(\log n/\log\log n)$.
Any edge $\{a,b\} \in E$ carries a weight $w(a,b)$ between $1$ and some maximum weight $W = \poly(n)$.
The reason for this generalization is to fit the framework of~\cite{ChCoFiKa17}.
The definition of graph distances is adjusted accordingly.
This also has an effect on the definition of decomposable paths.
For any non-negative integer~$\ell$, 
Afek et al.~\cite[Theorem~1]{Afek02RestorationbyPathConcatenation_journal}
showed that in unweighted graphs after at most $\ell$ edge failures
shortest paths are the concatenation of up to $\ell+1$ shortest paths in $G$;
if $G$ is weighted, this changes to a
concatenation of up to $\ell+1$ shortest paths and $\ell$ interleaving edges.
We mean the latter whenever we speak of $\ell$-decomposable paths below.
Let $D = n \cdot W$ be an upper bound of the diameter of $G$.
An $\ell$-expath is now a concatenation of $1+2\log_2 D$ $\ell$-decomposable paths
such that, for every $0 \le i \le 2\log_2 D$,
the length of the $i$-th \mbox{$\ell$-decomposable} path is at most $\min(2^i, 2^{2\log_2(D) - i})$.

\subsection{Shortest \texorpdfstring{$(2f{+}1)$}{(2f+1)}-Decomposable Paths}
\label{subsec:improved_preproc_decomposable}

As a warm-up, we first describe how to obtain 
$(2f{+}1)$-decomposable paths efficiently.
We later extend our approach to $(2f{+}1)$-expaths,
with or without granularity.
Our main tool is a modification of Dijkstra's algorithm~\cite[Chapter 22.3]{clrs}.
Given an edge weighted graph with a distinguished source vertex $s$,
the algorithms determines in time $\Otilde(m)$ the distance from $s$ to any other vertex.
It maintains a priority queue of vertices whose distance estimate
is not yet final.
The \texttt{DecreaseKey} operation updates the estimate.
We mainly adapt this operation.

Let $A \subseteq E$ be a set of edges in $G$ and $s,t \in V$ two vertices.
We denote the shortest $\ell$-decomposable path between $u$ and $v$ in $G-A$ by $P^{(\ell)}(s,t,A)$,
its length is $d^{(\ell)}(s,t,A)$.
Recall that $d^{(\ell)}(s,t,A)$ may be larger than $d_{G-A}(s,t)$ if $|A| > \ell + 1$.

\begin{lemma}
\label{lem:f-decomposable}
Given the original distances $d_G(u,v)$ for all $u,v \in V$
and an edge set $A \subseteq E$, 
the distance $d^{(2f+1)}(s, t, A)$ is computable in time $\Otilde(fm)$.
Moreover, one can compute $d^{(2f+1)}(s, v, A)$
for \emph{all} vertices $v \in V$ within the same time bound.
\end{lemma}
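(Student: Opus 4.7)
The plan is to build the $(\ell{+}1)$-level exploded graph $G^*$ of \Cref{fig:exploded graph} with $\ell = 2f+1$, giving one copy $V_i$ of the vertex set for each $i \in \{0, 1, \ldots, 2f{+}1\}$ plus a super-source $s_0$, and to run a single Dijkstra from $s_0$ on $G^*$. The intended invariant is that a shortest $s_0$-to-$v_i$ path in $G^*$ encodes the shortest $i$-decomposable $s$-$v$-path in $G-A$, so that reading off $d_{G^*}(s_0, v_{2f+1})$ for every $v \in V$ yields $d^{(2f+1)}(s, v, A)$ all at once, immediately delivering the ``moreover'' clause.

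\textbf{Construction.} I first add edges $s_0 \to u_0$ with weight $d^{(0)}(s, u, A)$, namely $d_G(s, u)$ if the unique shortest $s$-$u$-path in $G$ avoids $A$ and $+\infty$ otherwise. These weights can be obtained in $\Otilde(m)$ time from a single-source Dijkstra tree of $G$ rooted at $s$, followed by a bottom-up propagation of a ``blocked'' flag along any tree edge that lies in $A$. Between consecutive levels $V_i$ and $V_{i+1}$, I insert ``stay'' edges $u_i \to u_{i+1}$ of weight $0$ together with ``take an interleaving edge'' transitions $u_i \to w_{i+1}$ of weight $w(u,w)$ for every $\{u,w\} \in E \setminus A$. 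Finally, inside each layer $V_i$ I place an undirected copy of the edges of $G-A$ with their native weights, so that the Dijkstra exploration can realise the next constituent shortest path by walking within the layer. Summed over the $2f+1$ level transitions and $2f+2$ layers, $G^*$ has $O(fm)$ edges and $O(fn)$ vertices, so one Dijkstra call takes $\Otilde(fm)$ time.

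\textbf{Correctness.} For the upper bound $d_{G^*}(s_0, v_\ell) \le d^{(\ell)}(s, v, A)$, I will map any $\ell$-decomposable path $P = Q_0 \circ e_1 \circ Q_1 \circ \cdots \circ e_\ell \circ Q_\ell$ in $G-A$, where the $Q_i$ are shortest paths of $G$ (possibly empty) and each $e_j \in E \setminus A$ is an interleaving edge (possibly empty, collapsing to a ``stay''), to the natural route through $G^*$ that starts at $s_0$, lands on the endpoint of $Q_0$ in layer $V_0$, walks along $Q_i$ inside $V_i$, and crosses $V_i \to V_{i+1}$ via $e_{i+1}$; the total weight of this route equals $|P|$. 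For the matching lower bound, a shortest $s_0$-to-$v_\ell$ path in $G^*$ uses at most $\ell$ level transitions and on each intermediate layer traces a walk in $G-A$ whose weight equals $d_{G-A}$ between its entry and exit vertices; by the unique-shortest-paths assumption, that distance matches $d_G$ precisely when the shortest path in $G$ avoids $A$, which is the condition needed for the corresponding decomposable piece to live in $G-A$.

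\textbf{Main obstacle.} The subtle point is the lower-bound direction: a walk inside layer $V_i$ realises $d_{G-A}(w, v)$, which in general only upper-bounds $d_G(w, v)$, so $G^*$ could a priori \emph{under}-shoot the true $d^{(\ell)}(s, v, A)$ whenever the cheapest walk inside some layer is a many-edge detour rather than a single shortest path of $G$. I expect the remedy to rest on the uniqueness of shortest paths together with an inductive re-apportionment argument: any such many-edge walk can be reinterpreted as a concatenation of single-edge shortest paths of $G$ and absorbed into the interleaving-edge budget of the adjacent layers, so that every $s_0$-to-$v_\ell$ path in $G^*$ is witnessed by an actual $\ell$-decomposable path of the same length in $G-A$, closing the gap between $d_{G^*}(s_0, v_\ell)$ and $d^{(\ell)}(s, v, A)$ and completing the proof in the claimed $\Otilde(fm)$ time.
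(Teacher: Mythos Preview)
Your construction of the exploded graph and the upper-bound direction are fine, and you correctly flag the real danger in the lower-bound direction: an in-layer walk in $G^*$ realises $d_{G-A}(w,u)$, which need not be a single shortest path of $G$. However, your proposed ``re-apportionment'' remedy does not work. Breaking such a walk into its individual edges does turn it into a concatenation of (single-edge) shortest paths of $G$, but that concatenation may have $\Theta(n)$ pieces, while an $\ell$-decomposable path is allowed only $\ell+1$ shortest-path pieces (plus at most $\ell$ interleaving edges). There are not nearly enough adjacent layer transitions to absorb the excess. Concretely, when $|A|$ is large (as it is in the FT-tree application, where $A$ is a union of whole segments), the true shortest $s$-$v$-path in $G-A$ can require far more than $2f{+}2$ shortest-path pieces of $G$, so $d^{(2f+1)}(s,v,A)$ is strictly larger than $d_{G-A}(s,v)$; plain Dijkstra on your $G^*$ would happily return the latter. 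Hence the gap is genuine, not merely a matter of bookkeeping.

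The paper closes this gap not by a structural argument about paths in $G^*$, but by \emph{modifying Dijkstra itself}: for each vertex $a_j$ in layer $V_j$ it records the entry vertex $x_{a_j}$ into that layer, and it refuses to relax an intra-layer edge $(a_j,b_j)$ unless $d'(s_0,a_j)-d'(s_0,x_{a_j})+w(a_j,b_j)=d_G(x,b)$. This check, which costs $O(1)$ per relaxation given the precomputed $d_G(\cdot,\cdot)$, forces the portion of the Dijkstra tree inside each layer to be a genuine shortest path of $G$, so any $s_0$-$v_\ell$-path that survives is automatically $\ell$-decomposable. That is the missing ingredient; once you add this constrained relaxation, your single-shot exploded-graph approach and the paper's level-by-level induction become equivalent, and the $\Otilde(fm)$ bound follows.
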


\begin{proof}
We prove the lemma by induction over $\ell$ from $0$ to $2f{+}1$,
computing  $d^{(\ell)}(s, v, A)$ for all targets $v \in V$ in the $\ell$-th step.
For the base case, 
note that $d^{(0)}(s,v,A) = d_G(s,v)$ if the shortest $s$-$v$-path in $G$
does not use any edge in $A$ (that is, if it also exists in $G-A$);
and $d^{(0)}(s,v,A) = +\infty$ otherwise.
We use a modified version of Dijkstra's algorithm in the graph $G-A$ from the source~$s$.
Let $d'(s, a)$ be the distance from $s$ to some vertex $a$
computed so far by our algorithm.
During relaxation of an edge $e = \{a, b\}$, we check if the current path is also the shortest path in $G$
by testing if $d'(s,a) + w(a, b) = d_G(s,b)$, with the right-hand side being precomputed.
If this fails, we do not decrease the key of vertex $b$.

We now argue that $d'(s,v) = d^{(0)}(s,v,A)$.
Note that if the shortest $s$-$v$-path in $G$ also lies in $G{-}A$,
then all its edges are relaxed at one point 
and the corresponding checks in the modification succeed.
Indeed, the last key of $v$ in the priority queue (i.e., $d'(s,v)$) 
then is $d_{G}(s,v) = d^{(0)}(s,v,A)$.
Otherwise, due to the uniqueness of shortest paths,
\emph{every} $s$-$v$-path in $G-A$ has length larger than $d_G(s,v)$.
Therefore, the key of $v$ is never decreased, it remains at $+\infty$.

For the induction step, we construct a new \emph{directed} graph $G^*= (V^*, E^*)$ with $V^* = \{s_0\} \cup V_1 \cup V_2$, where $V_1$ and $V_2$ are two copies of $V$. 
For a given $v \in V$, we denote by $v_1$ and $v_2$ the copies of $v$ that are contained in $V_1$ and $V_2$, respectively.
The set $E^*$ contains the following edges.
\begin{itemize}
    \item $(s_0,v_1)$ of weight $w_{G^*}(s_0,v_1)=d^{(\ell-1)}(s, v, A)$
    	for every $v \in V$ with $d^{(\ell-1)}(s, v, A) \neq +\infty$;
    \item $(v_1,v_2)$ of weight $w_{G^*}(v_1,v_2)=0$ for every $v \in V$;
    \item $(u_1,v_2)$, $(v_1,u_2)$ both of weight $w(u,v)$ 
    	for every edge $\{u,v\} \in E {\setminus} A$;
    \item $(u_2,v_2)$, $(v_2,u_2)$ of weight $w(u,v)$ for every $\{u,v\} \in E {\setminus} A$.
\end{itemize}

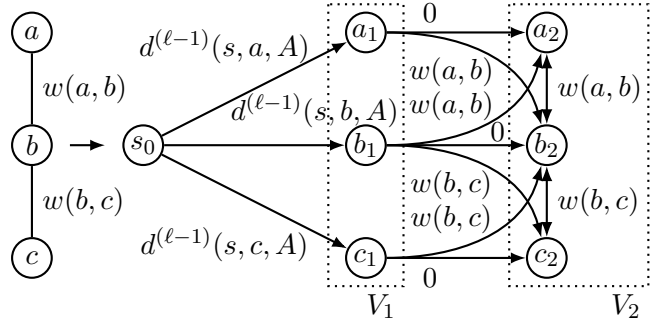
\begin{figure}[t]

\centering
\begin{tikzpicture}
\begin{scope}[scale=1.25]
    \node[vert] (a) at (-.5*\gwidth,  1*\gdist) {$a$};
    \node[vert] (b) at (-.5*\gwidth,  0*\gdist) {$b$};
    \node[vert] (c) at (-.5*\gwidth, -1*\gdist) {$c$};
    \draw (a) edge[-] node[left] {$w(a,b)$} (b);
    \draw (c) edge[-] node[left] {$w(b,c)$} (b);
    
    \node (larrow) at (-.35*\gwidth,  0*\gdist) {};
    \node (rarrow) at (-.15*\gwidth,  0*\gdist) {};
    \draw (larrow) edge (rarrow);
    
    \node[vert] (s0) at (0, 0) {$s_0$};
    \node[vert] (a1) at (1*\gwidth,  1*\gdist) {$a_1$};
    \node[vert] (b1) at (1*\gwidth,  0*\gdist) {$b_1$};
    \node[vert] (c1) at (1*\gwidth, -1*\gdist) {$c_1$};
    \node[vert] (a2) at (2.2*\gwidth,  1*\gdist) {$a_2$};
    \node[vert] (b2) at (2.2*\gwidth,  0*\gdist) {$b_2$};
    \node[vert] (c2) at (2.2*\gwidth, -1*\gdist) {$c_2$};
    
    \coordinate(box1a) at (1*\gwidth - \gpad, 1.25*\gdist);
    \coordinate(box1b) at (1*\gwidth + \gpad,-1.25*\gdist);
    \coordinate(box2a) at (2.2*\gwidth - \gpad, 1.25*\gdist);
    \coordinate(box2b) at (2.2*\gwidth + 2.7*\gpad,-1.25*\gdist);
    \node [anchor=north east] at (box1b) {$V_1$};
    \node [anchor=north east] at (box2b) {$V_2$};
    \begin{scope} [every path/.style = {thick, dotted}]
        \draw (box1a) rectangle (box1b);
        \draw (box2a) rectangle (box2b);
    \end{scope}
    
    \draw
      (a2) edge[Latex-Latex] node[right] {$w(a,b)$} (b2)
      (c2) edge[Latex-Latex] node[right] {$w(b,c)$} (b2)
      ;
    \draw
      (s0) edge["{$d^{(\ell-1)}(s,a,A)$}"] (a1)
           edge["{$d^{(\ell-1)}(s,b,A)$}" xshift=.24cm] (b1)
           edge["{$d^{(\ell-1)}(s,c,A)$}" below] (c1)
      (a1) edge node[above, xshift=1.85cm] {$0$} (a2)
      (b1) edge node[below, xshift=1.85cm] {$0$} (b2)
      (c1) edge node[below, xshift=1.85cm] {$0$} (c2)
      (a1) edge[out=0,in=120] node[below, xshift=-.08cm] {$w(a,b)$} (b2)
      (b1) edge[out=0,in=120] node[below, xshift=-.08cm] {$w(b,c)$} (c2)
      (b1) edge[out=0,in=240] node[above, xshift=-.08cm] {$w(a,b)$} (a2)
      (c1) edge[out=0,in=240] node[above, xshift=-.08cm] {$w(b,c)$} (b2)
      ;
      \end{scope}
  \end{tikzpicture}
  \caption{Example construction of the graph $G^*$ (right) from $G-A$ (left) for step $\ell$ of the algorithm to compute shortest $(2f{+}1)$-decomposable paths (\Cref{lem:f-decomposable}).}
  \label{fig:f-decomposable-construction}
\end{figure}

See \Cref{fig:f-decomposable-construction} for an example of this construction.
The intuition behind $G^*$ is the following.
The edge from the source $s_0$ to some vertex $v_1 \in V_1$ represents the shortest $(\ell{-}1)$-decomposable path from $s$ to $v$ in $G{-}A$.
They become $\ell$-decomposable paths when concatenated with some shortest path in $G$ which only use edges from $E{\setminus}A$.
The latter are represented by the edges from $V_1$ to $V_2$ and those within $V_2$.
If the shortest path from $u$ to $v$ in $G$ is the edge $\{u,v\}$ (respectively, if $u = v$),
this is modeled by the edge $(u_1,v_2) \in V_1 \times V_2$ in $G^*$(respectively, by $(u_1,u_2)$ with weight $0$).
If the shortest $u$-$v$-path in the weighted graph $G$ has multiple edges, 
this is modeled by first following $(u_1,u_2)$ and then the respective edges through $V_2$.

We use $s_0$ as the source and compute the values $d'(s_0,v_i)$ in $G^*$, for $i \in {1,2}$ and all $v \in V$,
with a similar Dijkstra modification as in the base case.
The relaxation of any out-edge of $s_0$ or any edge from $V_1$ to $V_2$ remains unchanged.
For the relaxation of  $e = (a_2,b_2)$ in $G^*[V_2]$,
let $w_2$ be the first vertex from $V_2$ on the shortest path from $s_0$ to $a_2$ that we found. 
We check whether $d'(s_0, a_2) - d'(s_0, w_2) + w_{G^*}(a_2,b_2) = d_G(w,b)$
from the original graph $G$.
If not, we do not decrease the key of $b_2$.
This makes sure that the subpath from $w_2$ to $b_2$
corresponds to the shortest $w$-$b$-path in $G$ (not only in $G-A$).
To have access to $w_2$ in constant time, we store the \emph{entry vertex} $w_2$ for each $a_2$ at the time the key of $a_2$ is decreased.
If this happens using an outgoing edge from $V_1$, then we set the entry vertex of $a_2$ to $a_2$ itself.
Otherwise, we set it to be equal to the entry vertex of its predecessor.

The main part of the proof is to show that the computed distance $d'(s_0,v_2)$ 
is indeed $d^{(\ell)}(s, v, A)$.
This inductively implies that the algorithm eventually produces $d^{(2f+1)}(s,v,A)$.
Any path in $G^*$ from $s_0$ to some vertex $v_2 \in V_2$ has at least $3$ vertices
and its prefix has the form $(s_0, u_1, w_2)$.
By construction, we have $w_{G^*}(s_0,u_1)=d^{(\ell-1)}(s, u, A)$,
which corresponds to the shortest $(\ell{-}1)$-de\-com\-pos\-able path $P^{(\ell-1)}(s,u,A)$ in $G$.
Next, note that there is no edge leaving $G^*[V_2]$,
so the rest of the path from $w_2$ to $v_2$ exclusively uses vertices from $V_2$.
Let $P$ denote the corresponding path in $G$ (meaning it uses the corresponding vertices from $V$).
Our checks in the modification ensure that $P$ is the shortest $w$-$v$-path in $G$.
Slightly abusing notation, let $e  = \{u,w\} \in E{\setminus}A$ be an edge in case $u \neq w$;
and $e = \emptyset$ otherwise.
In summary, the computed path through $G^*$
corresponds to the path $P_{\ell} = P^{(\ell-1)}(s, u, A) \circ e \circ P$ in $G$.
Since $P^{(\ell-1)}(s, u, A)$ is $(\ell{-}1)$-decomposable, and $P$ is a shortest path,
$P_{\ell}$ is $\ell$-decomposable.
It lies entirely in $G{-}A$.

We now prove that $P_{\ell}$ is also the \emph{shortest} $\ell$-decomposable path from $s$ to $v$
in $G{-}A$.
To reach a contradiction, let $Q_{\ell}$ be a strictly shorter $\ell$-decomposable $s$-$v$-path.
$Q_{\ell}$ is not $(\ell{-}1)$-decomposable
since otherwise the two-edge path $(s_0,v_1,v_2)$ in $G^*$ of length $\length{Q_{\ell}}$ 
would be strictly shorter than $P_\ell$.
Our algorithm would have found that path instead (even without modifications).

There exists a decomposition $Q_\ell = Q_{\ell-1} \circ e' \circ Q$,
where $Q_{\ell-1}$ is an $(\ell{-}1)$-decomposable path in $G-A$ ending in some vertex $a$,
$e'$ is either empty or a single edge $\{a,b\} \in E{\setminus}A$,
and $Q$ is a shortest path in $G$ from $b$ to $v$ that only uses edges from $E{\setminus}A$.
Let $Q = (b,x^{(2)}, \dots, x^{(i)},v)$.
Since $d^{(\ell-1)}(s, a, A) \le \length{Q_{\ell-1}}$,
the path $(s_0,a_1,b_2,x^{(2)}_2, \dots, x^{(i)}_2, v_2)$ through $G^*$
has length at most $\length{Q_\ell} < \length{P_\ell}$.
It would have been preferred by our algorithm,
a contradiction.

Concerning the runtime, we have $O(f)$ steps. 
In each of them, we build the graph $G^*=(V^*,E^*)$ with $O(n)$ vertices and $O(m)$ edges 
and run (the modified) Dijkstra's algorithm.
It requires $\Otilde(m)$ time,
so the overall time of our algorithm is $\Otilde(fm)$.
\end{proof}

\begin{figure*}[t]
  \centering
  \begin{tikzpicture}
  \begin{scope}[scale=1.25]
    \node[vert] (a) at (-.5*\gwidth,  1*\gdist) {$a$};
    \node[vert] (b) at (-.5*\gwidth,  0*\gdist) {$b$};
    \node[vert] (c) at (-.5*\gwidth, -1*\gdist) {$c$};
    \draw (a) edge[-] node[left] {$w(a,b)$} (b);
    \draw (c) edge[-] node[left] {$w(b,c)$} (b);
    
    \node (larrow) at (-.35*\gwidth,  0*\gdist) {};
    \node (rarrow) at (-.15*\gwidth,  0*\gdist) {};
    \draw (larrow) edge (rarrow);
    
      \node[vert] (s0) at (0, 0) {$s_0$};
      \node[vert] (a0) at (1*\gwidth,  1*\gdist) {$a_0$};
      \node[vert] (b0) at (1*\gwidth,  0*\gdist) {$b_0$};
      \node[vert] (c0) at (1*\gwidth, -1*\gdist) {$c_0$};
      \node[vert] (a1) at (2.2*\gwidth,  1*\gdist) {$a_1$};
      \node[vert] (b1) at (2.2*\gwidth,  0*\gdist) {$b_1$};
      \node[vert] (c1) at (2.2*\gwidth, -1*\gdist) {$c_1$};
      \node[vert] (ak) at (3*\gwidth,  1*\gdist) {$a_\ell$};
      \node[vert] (bk) at (3*\gwidth,  0*\gdist) {$b_\ell$};
      \node[vert] (ck) at (3*\gwidth, -1*\gdist) {$c_\ell$};
      \node (dots) at (2.77*\gwidth,  0*\gdist) {$\dots$};
      
    \coordinate(box1a) at (1*\gwidth - \gpad, 1.25*\gdist);
    \coordinate(box1b) at (1*\gwidth + \gpad,-1.25*\gdist);
    \coordinate(box2a) at (2.2*\gwidth - \gpad, 1.25*\gdist);
    \coordinate(box2b) at (2.2*\gwidth + 3.4*\gpad,-1.25*\gdist);
    \coordinate(boxka) at (3*\gwidth - \gpad, 1.25*\gdist);
    \coordinate(boxkb) at (3*\gwidth + 3.4*\gpad,-1.25*\gdist);
    \node [anchor=north east] at (box1b) {$V_{0}$};
    \node [anchor=north east] at (box2b) {$V_{1}$};
    \node [anchor=north east] at (boxkb) {$V_{\ell}$};
    \begin{scope} [every path/.style = {thick, dotted}]
      \draw (box1a) rectangle (box1b);
      \draw (box2a) rectangle (box2b);
      \draw (boxka) rectangle (boxkb);
    \end{scope}
    
    \begin{scope} [every path/.style = {{Latex-Latex}}]
      \draw
      (a1) edge node[right] {$w(a,b)$} (b1)
      (c1) edge node[right] {$w(b,c)$} (b1)
      (ak) edge node[right] {$w(a,b)$} (bk)
      (ck) edge node[right] {$w(b,c)$} (bk)
      ;
    \end{scope}
      \draw
      (s0) edge["{$d^{(0)}(s,a,A)$}"] (a0)
           edge["{$d^{(0)}(s,b,A)$}" xshift=.27cm] (b0)
           edge["{$d^{(0)}(s,c,A)$}" below] (c0)
           
      (a0) edge node[above, xshift=1.85cm] {$0$} (a1)
      (b0) edge node[below, xshift=1.85cm] {$0$} (b1)
      (c0) edge node[below, xshift=1.85cm] {$0$} (c1)
      (a0) edge[out=0, in=120] node[below, xshift=-.08cm] {$w(a,b)$} (b1)
      (b0) edge[out=0, in=120] node[below, xshift=-.08cm] {$w(b,c)$} (c1)
      (b0) edge[out=0, in=240] node[above, xshift=-.08cm] {$w(a,b)$} (a1)
      (c0) edge[out=0, in=240] node[above, xshift=-.08cm] {$w(b,c)$} (b1)
      ;
      \end{scope}
  \end{tikzpicture}
  \caption{The exploded graph for computing shortest $\ell$-decomposable paths.}
  \label{fig:exploded graph}
\end{figure*}
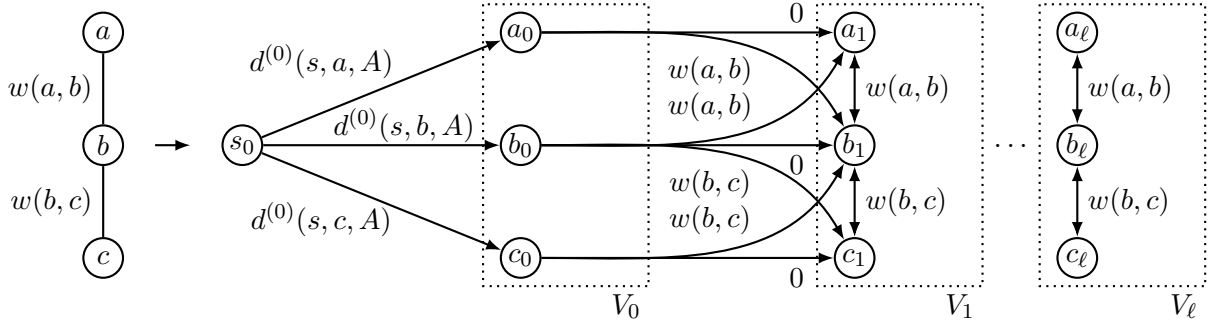

Unrolling the inductive computation of $(2f{+}1)$-decomposable distances
discussed above leads to a graph of $2f+2$ layers as follows.
As the edges from $s_0$ to $v_1 \in V_1$ in the graph $G^*$ are modeling $(\ell{-}1)$-decomposable paths,
we could substitute those by the construction graph used to compute the $(\ell{-}1)$-decomposable paths and proceed recursively.
This way, we obtain an \emph{exploded graph} with a source $s_0$ and $2f+2$ additional layers $V_0,V_1,\ldots,V_{2f+1}$, where layer $V_\ell$ models the graph $G-A$ as above 
and is used to compute shortest $\ell$-decomposable distances.
\Cref{fig:exploded graph} gives an overview.

Running the modified version of Dijkstra's algorithm in the exploded graph guarantees that subpaths entirely contained in one layer are also shortest paths in $G$ (while using only edges from $E{\setminus}A$).
The decomposable distances may be computed out of order;
for example, we may compute the value $d^{(\ell)}(s,v,A)$ before $d^{(\ell-1)}(s,u,A)$ 
provided that $d^{(\ell)}(s,v,A) \le d^{(\ell-1)}(s,u,A)$).
Notwithstanding, for each vertex $v_\ell \in V_\ell$, the computed distance is $d^{(\ell)}(s,v,A)$.

We can further modify Dijkstra's algorithm to limit the length of partial paths to some upper bound $\delta_{i}$, by only allowing edges to be relaxed that do not increase the length of a subpath above that threshold.
For this, we need to store information about the start of a subpath,
e.g., the entry point into the $i$-th layer, for each of its descendant nodes.
This can be propagated during edge relaxation as we did with $w_2$ in the proof above.
In the next subsection, we formally describe and combine these two ideas,
exploded graphs and length restrictions, to compute $\ell$-expaths efficiently.

\subsection{Shortest \texorpdfstring{$(2f{+}1)$}{(2f+1)}-Expaths} 
\label{sec:preprocessing-expaths}

\begin{figure*}[t]

  \centering
  \begin{tikzpicture}
  \begin{scope}[scale=1.23]
    \node[vert] (a) at (-.5*\gwidth,  1*\gdist) {$a$};
    \node[vert] (b) at (-.5*\gwidth,  0*\gdist) {$b$};
    \node[vert] (c) at (-.5*\gwidth, -1*\gdist) {$c$};
    \draw (a) edge[-] node[left] {$w(a,b)$} (b);
    \draw (c) edge[-] node[left] {$w(b,c)$} (b);
    
    \node (larrow) at (-.35*\gwidth,  0*\gdist) {};
    \node (rarrow) at (-.15*\gwidth,  0*\gdist) {};
    \draw (larrow) edge (rarrow);
    
    \node[vert] (s0) at (0, 0) {$s^*$};
    \node[vert] (a0) at (1*\gwidth,  1*\gdist) {$a_0$};
    \node[vert] (b0) at (1*\gwidth,  0*\gdist) {$b_0$};
    \node[vert] (c0) at (1*\gwidth, -1*\gdist) {$c_0$};
    \node[vert] (a1) at (2.2*\gwidth,  1*\gdist) {$a_1$};
    \node[vert] (b1) at (2.2*\gwidth,  0*\gdist) {$b_1$};
    \node[vert] (c1) at (2.2*\gwidth, -1*\gdist) {$c_1$};
    \node[vert] (ak) at (3*\gwidth,  1*\gdist) {$a_\ell$};
    \node[vert] (bk) at (3*\gwidth,  0*\gdist) {$b_\ell$};
    \node[vert] (ck) at (3*\gwidth, -1*\gdist) {$c_\ell$};
    
    \node (dots) at (2.77*\gwidth,  0*\gdist) {$\dots$};
      
    \coordinate(box1a) at (1*\gwidth - \gpad, 1.25*\gdist);
    \coordinate(box1b) at (1*\gwidth + 3.4*\gpad,-1.25*\gdist);
    \coordinate(box2a) at (2.2*\gwidth - \gpad, 1.25*\gdist);
    \coordinate(box2b) at (2.2*\gwidth + 3.4*\gpad,-1.25*\gdist);
    \coordinate(boxka) at (3*\gwidth - \gpad, 1.25*\gdist);
    \coordinate(boxkb) at (3*\gwidth + 3.4*\gpad,-1.25*\gdist);
    \node [anchor=north east] at (box1b) {$V_{i,0}$};
    \node [anchor=north east] at (box2b) {$V_{i,1}$};
    \node [anchor=north east] at (boxkb) {$V_{i,\ell}$};
    \begin{scope} [every path/.style = {thick, dotted}]
      \draw (box1a) rectangle (box1b);
      \draw (box2a) rectangle (box2b);
      \draw (boxka) rectangle (boxkb);
    \end{scope}
      
    \begin{scope} [every path/.style = {{Latex-Latex}}]
      \draw
      (a0) edge node[right] {$w(a,b)$} (b0)
      (c0) edge node[right] {$w(b,c)$} (b0)
      (a1) edge node[right] {$w(a,b)$} (b1)
      (c1) edge node[right] {$w(b,c)$} (b1)
      (ak) edge node[right] {$w(a,b)$} (bk)
      (ck) edge node[right] {$w(b,c)$} (bk)
      ;
    \end{scope}
    \draw
      (s0) edge["{$d^{i-1,\ell}(s,a,A)$}"] (a0)
           edge["{$d^{i-1,\ell}(s,b,A)$}" xshift=.27cm, red] (b0)
           edge["{$d^{i-1,\ell}(s,c,A)$}" below] (c0)
           
      (a0) edge node[above, xshift=1.85cm] {$0$} (a1)
      (b0) edge node[above, xshift=1.85cm] {$0$} (b1)
      (c0) edge node[below, xshift=1.85cm] {$0$} (c1)
      (a0) edge[out=0, in=120] node[below, xshift=-.08cm] {$w(a,b)$} (b1)
      (b0) edge[out=0, in=120] node[below, xshift=-.08cm] {$w(b,c)$} (c1)
      (b0) edge[out=0, in=240, red] node[above, xshift=-.08cm] {$w(a,b)$} (a1)
      (c0) edge[out=0, in=240] node[above, xshift=-.08cm] {$w(b,c)$} (b1)
      (a1) edge[red, dashed] (ak)
      ;
  \end{scope}
  \end{tikzpicture}
  \caption{Example construction of graph $G_i$ (right) from $G$ (left) 
  	for the algorithm to compute shortest $i$-partial $\ell$-expaths (\Cref{lem:l-expath}).
  Each layer $V_{i,j}$ is connected to its neighboring layers in the same way as the first two layers.
  If the red path is the shortest path from $s_0$ to $a_\ell$, then the entry nodes for $a_\ell$ are $w_{a_\ell} = b_0$ and $x_{a_\ell} = a_\ell$.}
  \label{fig:f-expath-construction}
\end{figure*}
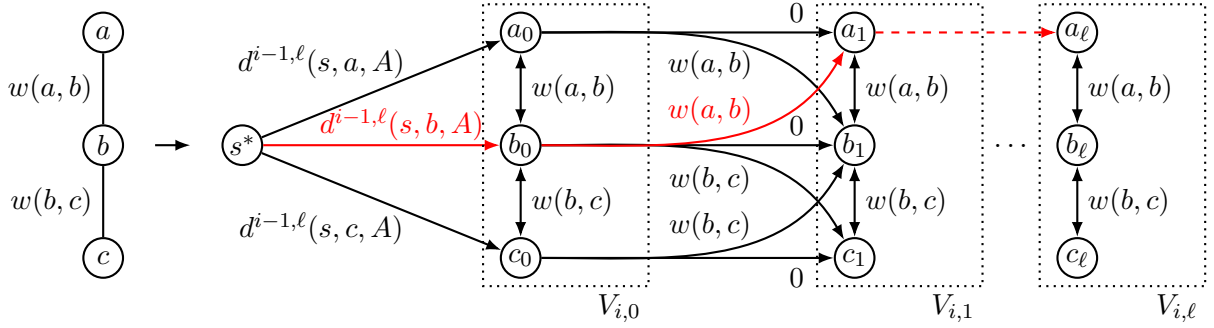

We show how to compute the shortest $(2f{+}1)$-expath in time $\Otilde(fm \log(nW))$,
when given access to all-pairs distances in $G$. 
We define an $i$-partial $\ell$-expath as follows.

\begin{definition}[$i$-partial $\ell$-expath] \label{Definition:k-decomposable}
	Let $A \subseteq E$ be a set of edges and $i,\ell$ non-negative integers.
	An \emph{$i$-partial $\ell$-expath} in $G{-}A$ is a concatenation of $i+1$ $\ell$-decomposable paths in $G{-}A$
	such that, for every $0 \le j \le i$ the length of the $j$-th $\ell$-decomposable path
	is at most $\delta_j=\min( 2^j, 2^{2\log_2 (nW) - j} )$. 
\end{definition}

We write $P^{i} = P_0 \circ P_1 \circ \ldots \circ P_i$ for the constituting subpaths 
of an $i$-partial $\ell$-expath, that is, $P_j$ is an $\ell$-decomposable path and $|P_j| \le \delta_j$.
The definition interpolates between $\ell$-decomposable paths (i.e., $0$-partial $\ell$-expaths) 
and ordinary $\ell$-expaths (i.e., $(2 \log_2(nW))$-partial $\ell$-expaths).

The shortest $\ell$-expath from $s$ to $t$ in $G - A$ is computed in $2\log_2(nW)+1$ phases.
At the start of the $i$-th phase, we already have the shortest $(i{-}1)$-partial $\ell$-expath 
$P^{i-1}$ from~$s$ to $v$ in $G - A$ for each $v \in V$.
We extend them to the shortest $i$-partial $\ell$-expath 
$P^i = P^{i-1} \circ  P_{i}$ from $s$ to $v$ in $G - A$, again for each $v \in V$. 

Let $d^{i,\ell}(s,v,A)$ denote the length of the shortest $i$-partial $\ell$-expath from $s$ to $v$ in $G{-}A$. We define a directed graph $G_i$ as follows. We set $G_i= (V_i, E_i)$ with $V_i = \{s^*\} \cup \bigcup_{j=0}^{\ell}{V_{i,j}}$, where all the $V_{i,j}$ are copies of $V$.
In the remaining description, for every $v \in V$ and every $0 \leq j \leq \ell$, we denote by $v_{j}$ the copy of $v$ contained in $V_{i,j}$. 
The graph $G_i$ has the following edges $E_i$.
\begin{itemize}
    \item $(s^*,v_{0})$ of weight $w_{G_i}(s^*,v_{0}) = d^{i-1,\ell}(s,v,A)$
    	for every $v \in V$ with $d^{i-1,\ell}(s,v,A) \neq +\infty$;
    \item $(v_{j-1},v_{j})$ of weight $w_{G_i}(v_{j-1},v_{j}) = 0$
    	for every $v \in V$ and every $1 \leq j \leq \ell$;
    \item $(u_{j-1},v_j),(v_{j-1},u_j)$ both of weight $w(u,v)$
    	for every edge $\{u,v\} \in E{\setminus}A$ and every $1 \leq j \leq \ell$;
    \item $(u_{j},v_{j}), (v_j,u_j)$ of weight $w(u,v)$
    	for every \mbox{$\{u,v\} \in E{\setminus}A$} and every $0 \leq j \leq \ell$.
\end{itemize}
The construction is visualized in~\Cref{fig:f-expath-construction}. 
Intuitively, this adds a new $\ell$-decomposable path to the previously computed partial expath, by using the same exploded graph as in \Cref{fig:exploded graph}.
Regarding the weights of the out-edges of $s^*$ in the first graph $G_0$,
note that $d^{-1,\ell}(s,v,A)$, by definition, is $1$ if $\{s,v\} \in E{\setminus}A$;
and $+\infty$ otherwise.
Let $d'(s^*,v_j)$ be the values we compute in $G_i$ by running Dijkstra's algorithm
from the source $s^*$ with the following modifications.
\begin{enumerate}
    \item For each vertex $v_j \in V_i$, we store \emph{entry vertices} $w_{v_j}$ and $x_{v_j}$.
    \item If \texttt{DecreaseKey} is called on $v_0 \in V_{i,0}$ upon relaxation 
    	of an edge $(s^*, v_0)$, $w_{v_0}$ is set to $v_0$.
    \item If \texttt{DecreaseKey} is called on a vertex $v_{j} \in V_{i,j}$ upon relaxation
    	of an edge $(u_{j-1}, v_{j})$ from the previous layer, then $x_{v_{j}}$ is set to $v_{j}$.
    \item For all other calls of \texttt{DecreaseKey}, when relaxing
    edge $(u_j,v_j)$, set $w_{v_j}=w_{u_j}$ and $x_{v_j}=x_{u_j}$.
    \item \label[modification]{mod:shortest-in-g} To relax an edge $(u_{j}, v_{j})$,
    	we require 
    	$d'(s^*, u_{j}) - d'(s^*, x_{u_{j}}) + w_{G_i}(u_{j}, v_{j}) = d_G(u,v)$.
    \item \label[modification]{mod:limit-length} To relax an edge $(u_{j'},v_{j})$, 
    	$j' \in \{j-1,j\}$, we require 
    	$d'(s^*, u_{j'}) - d'(s^*, w_{u_{j'}}) + w(u_{j'}, v_j) \leq \delta_i$.
\end{enumerate}
Finally, we set $d^{i,\ell}(s,v,A) = d'(s^*, v_{\ell})$ for each $v \in V$.

The modifications are such that $w_{v_j}$ marks the entry point of the current shortest path from $s^*$ to $v_j$ into the graph induced by $V_i{\setminus}\{s^*\}$, 
while $x_{v_j}$ marks the entry point into the layer $V_{i,j}$.
\Cref{mod:shortest-in-g} further ensures that a path entirely contained within one layer
corresponds to a shortest path in $G$.
This implies that a shortest path from $s^*$ to a vertex $v_\ell \in V_{i,\ell}$
corresponds to a composition of an $(i{-}1)$-partial $\ell$-expath, via the edge $(s^*,w_{v_\ell})$,
and an $\ell$-decomposable path.
\Cref{mod:limit-length} enforces that the $\ell$-decomposable path we append
has length at most $\delta_i$.

\begin{lemma}
\label{lem:l-expath}
	Given the original distances $d_G(u,v)$ for all $u,v \in V$, 
	a set $A \subseteq E$, and the distances $d^{i-1,\ell}(s,v,A)$ for all $v \in V$,
	the distances $d^{i,\ell}(s, v, A)$ for all $v$ are computable in total time $\Otilde(\ell m)$.
\end{lemma}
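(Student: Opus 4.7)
The graph $G_i$ has $1+(\ell+1)n = O(\ell n)$ vertices. Its edge set consists of the $n$ out-edges of $s^*$, $n$ zero-weight layer-transition edges $(v_{j-1}, v_j)$ for each of the $\ell$ layer pairs, and two directed copies of each edge of $E{\setminus}A$ either within a given layer or crossing each of the $\ell$ layer boundaries, for a total of $|E_i| = O(\ell m)$. Running Dijkstra with a standard priority queue on $G_i$ therefore costs $\Otilde(\ell m)$ time; Modifications~1--6 only inspect $O(1)$ stored values per edge relaxation, so the asymptotic bound is preserved.

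To verify $d'(s^*, v_\ell) = d^{i,\ell}(s,v,A)$ for every $v \in V$, I plan to argue two inequalities. For $\ge$, I take any path accepted by the modified Dijkstra from $s^*$ to $v_\ell$; it begins with a single edge $(s^*, u_0)$ of weight $d^{i-1,\ell}(s,u,A)$, which by assumption corresponds to an $(i{-}1)$-partial $\ell$-expath from $s$ to $u$, and its remainder stays in $V_i \setminus \{s^*\}$. The remainder decomposes into in-layer segments (each a shortest $G$-path from the layer's entry vertex to its exit, by Modification~5) joined by inter-layer transitions that are either zero-weight or edges of $E{\setminus}A$. This translates back to an $\ell$-decomposable path of total weight at most $\delta_i$ by Modification~6, and concatenating with the prefix produces an $i$-partial $\ell$-expath of length $d'(s^*, v_\ell)$. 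Conversely, for $\le$, I start from a shortest $i$-partial $\ell$-expath $\Pi' \circ \Pi_i$ in $G-A$ ending at $v$, where $\Pi'$ ends at some $u$ and $\Pi_i = \pi_0 \circ e_1 \circ \pi_1 \circ \cdots \circ e_\ell \circ \pi_\ell$ with each $\pi_j$ a (possibly empty) shortest path in $G$ and each $e_j$ empty or a single edge of $E{\setminus}A$. I translate it into a path in $G_i$: use $(s^*, u_0)$ of weight $d^{i-1,\ell}(s,u,A) \le |\Pi'|$, traverse the in-layer copy of $\pi_{j-1}$ inside $V_{i,j-1}$, then cross to $V_{i,j}$ using either the zero-weight transition (if $e_j$ is empty) or the inter-layer copy of $e_j$. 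All modifications are satisfied because each $\pi_j$ is a shortest $G$-path and $|\Pi_i| \le \delta_i$, so Dijkstra accepts this path and obtains $d'(s^*, v_\ell) \le d^{i,\ell}(s,v,A)$.

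The main technical obstacle is justifying that the dynamic entry-vertex bookkeeping behaves correctly: $w_{v_j}$ and $x_{v_j}$ are overwritten each time $v_j$'s tentative key is decreased, yet Modifications~5 and~6 reference them when deciding future relaxations. I would handle this by induction over the extraction order from the priority queue, showing that when $v_j$ is extracted, its stored $x_{v_j}$ is the first vertex of the current-layer segment of the shortest accepted path to $v_j$ and $w_{v_j}$ is the first vertex of the whole $V_i \setminus \{s^*\}$-excursion along that path. This guarantees the tests in Modifications~5 and~6 are evaluated on exactly the right quantities when relaxing out-edges of $v_j$, so the bookkeeping faithfully restricts Dijkstra to paths that encode $i$-partial $\ell$-expaths, matching the two inequalities above.
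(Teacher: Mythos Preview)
Your proposal is correct and follows essentially the same approach as the paper: both establish a bidirectional correspondence between paths in $G_i$ accepted by the modified Dijkstra and $i$-partial $\ell$-expaths in $G-A$, and both derive the $\Otilde(\ell m)$ bound from the size of $G_i$. The only cosmetic difference is that the paper phrases the correctness as ``existence of an expath of the computed length'' followed by ``minimality by contradiction,'' whereas you state the two inequalities directly; also, you are more explicit than the paper about the entry-vertex bookkeeping invariant, which the paper treats somewhat casually.
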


\begin{proof}
First, we show that, for an arbitrary vertex $t \in V$ with $d'(s^*,t_\ell) \neq +\infty$,
there exists an $i$-partial $\ell$-expath in $G-A$ of length $d'(s^*,t_\ell)$.
Consider the $s^*$-$t_\ell$-path $Q$ through $G_i$ computed by our algorithm.
For each layer $j$, let $P_{i,j}$ be the path in $G$ corresponding 
to the subpath of $Q$ within $V_{i,j}$. 
That means, for every $(u_j, v_j) \in E(Q)$, $P_{i,j}$ contains the edge $\{u, v\}$.
Due to~\Cref{mod:shortest-in-g}, we only relax edges $(u_j, v_j)$,
if the distance to the current entry vertex $x_{v_j}$ into the $j$-th layer,
corresponding to some $x \in V$, is equal to the $d_G(x,v)$.
Thus, each $P_{i,j}$ is a shortest path in $G$.

Recall that we use symbol $P_i$ for the $(i{+}1)$th constituting $\ell$-de\-com\-pos\-able subpath of 
the $i$-partial $\ell$-expath $P^{i}$ we aim to construct. 
We define $P_i$ by interleaving all the paths $P_{i,j}$ 
with the edges corresponding to the layer transitions. 
In more detail, let $(u_j,v_{j+1})$ be the edge leaving $V_{i,j}$.
Note that $Q$ never returns to $V_{i,j}$.
We add the corresponding edge $\{u,v\}$ between $P_{i,j}$ and $P_{i,j+1}$ if $u \neq v$;
otherwise, we concatenate $P_{i,j}$ and $P_{i,j+1}$ directly.
Since $P_i$ consists of $\ell+1$ shortest paths in $G$ possibly interleaved with single edges,
$P_i$ is indeed an $\ell$-decomposable path.
By the definition of $E_i$, path $P_i$ exclusively uses edges from $E{\setminus}A$.

The path $P_i$ starts in the vertex $w \in V$ corresponding to $w_{t_\ell} \in V_{i,0}$, 
hence the length of $P_i$ is exactly $d'(s^*, t_\ell) - d'(s^*, w_{t_\ell})$
as our transformation preserves edge weights
and the edges $(v_{j-1},v_j)$ between vertices corresponding to the same $v$ have weight $0$ in $G_i$. 
By~\Cref{mod:limit-length}, the length of $P_i$ is bounded by $\delta_i$
as otherwise the last edge would not have been relaxed.
Let $P^{i-1}$ be the $(i{-}1)$-partial $\ell$-expath corresponding to the edge $(s^*, w_{t_\ell})$
with length $d^{i,\ell}(s,w,A)$.
In summary, $P^{i} = P^{i-1} \circ P_i$ is an $i$-partial $\ell$-expath
that has length $\length{Q} = d'(s^*, t_\ell)$.

It remains to prove that $P^{i}$ is the shortest such path in $G - A$.
Assume there is a shorter $i$-partial $\ell$-expath $P' = P'_0 \circ \ldots \circ P'_i$.
Let $x'$ be the first vertex of $P'_i$, then $w_{G_i}(s^*, x'_0) = d^{i,\ell}(s,x,A) \le w( P'_0 \circ \ldots \circ P'_{i-1})$ as $P'_0 \circ \ldots \circ P'_{i-1}$ is an $(i{-}1)$-partial $\ell$-expath
from $s$ to $x$ in $G-A$.
Also, $P'_i$ is an $\ell$-decomposable path from $x'$ to $t$ of length $\length{P'_i} \le \delta_i$.

Let $Q'_i$ be the corresponding path through $G_i$ from $x'_0$ to $t_{\ell}$.
Then, the path $Q' = (s^*, x'_0) \circ Q'_i$ is an $s^*$-$t_{\ell}$-path in $G_i$ 
that is shorter than the path $Q$ that our algorithm found.
Since Dijkstra's (original) algorithm is correct and $Q'_i$ has length $\length{P'_i} \le \delta_i$, this can only happen due to \Cref{mod:shortest-in-g}.
During the computation, some edge $(u_j,v_j) \in E(Q'_i)$ 
satisfies $d'(s^*,u_j) - d'(s^*,x_{u_j}) + w_{G_i}(u_j,v_j) > d_G(u,v)$.
Thus, the subpath of $Q'_i$ between $x_{u_j}$ and $v_j$ is 
entirely contained in $V_{i,j}$ but not a shortest path in $G$.
This is a contradiction to $P'_i$ being $\ell$-decomposable.
Since $G_i$ has $O(\ell n)$ vertices and $O(\ell m)$ edges the $i$-th phase of
our modified algorithm runs in time $\Otilde(\ell m)$.
\end{proof}

If desired, the computed $i$-partial $\ell$-expath can be reconstructed by storing the parent of the relaxed vertex whenever \texttt{DecreaseKey} is called.
Additionally, we can label the start and endpoints of the $\ell$-decomposable paths 
as well as the shortest paths within them, by inserting labels for the first vertex after $s^*$
and when an edge transitions from one layer to the next.

\Cref{lem:l-expath} implies the following result
that we frequently referenced in \Cref{sec:sublinear_query_time,sec:improved_query_time}.\footnote{%
	Recall that in \Cref{sec:sublinear_query_time,sec:improved_query_time}
	the maximum weight is $W = 1$, whence the running time simplifies to $\Otilde(fm)$,
	and further to $\Otilde(m)$ as $f$ is assumed to be constant.
}

\begin{corollary}
Given vertices $s,t \in V$, the original distances $d_G(u,v)$ for all $u,v \in V$,
and edges $A \subseteq E$, the shortest $(2f{+}1)$-expath from $s$ to $t$ in $G{-}A$
is computable in time $\Otilde(fm \log(nW))$.
\end{corollary}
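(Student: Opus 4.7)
The plan is to iterate Lemma~\ref{lem:l-expath} phase by phase. By the generalized definition of an $\ell$-expath at the start of \Cref{sec:improved_preproc}, a $(2f{+}1)$-expath from $s$ to $t$ in $G-A$ is exactly a $(2\log_2 (nW))$-partial $(2f{+}1)$-expath between the same endpoints. Hence it suffices to compute $d^{i,2f+1}(s,v,A)$ for every $v \in V$ and $i = 0,1,\ldots, 2\log_2(nW)$, and then read off $d^{2\log_2(nW),\,2f+1}(s,t,A)$ at the end. The path itself can be reconstructed from parent pointers maintained during the modified Dijkstra runs, as indicated right after the proof of Lemma~\ref{lem:l-expath}.

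For the base case $i=0$, the length bound is $\delta_0 = 1$, so a $0$-partial $(2f{+}1)$-expath is either empty or a single edge of $G-A$ incident to $s$. This is precisely what the construction of $G_0$ encodes through its out-edges from $s^*$, using the convention $d^{-1,2f+1}(s,v,A) = 1$ if $\{s,v\} \in E \setminus A$ and $+\infty$ otherwise. For the inductive step, I would invoke Lemma~\ref{lem:l-expath} with $\ell = 2f+1$ and successive values of $i$: given the values $d^{i-1,2f+1}(s,v,A)$ from the previous phase as edge weights out of $s^*$, one modified Dijkstra call on $G_i$ computes all the values $d^{i,2f+1}(s,v,A)$ in time $\Otilde((2f{+}1)m) = \Otilde(fm)$. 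Summing over the $O(\log(nW))$ phases gives the claimed $\Otilde(fm \log(nW))$ total.

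The main technical content is already discharged by Lemma~\ref{lem:l-expath}, which jointly enforces $(2f{+}1)$-decomposability of the newly appended segment (via the entry marker $x_{v_j}$ together with the shortest-path check against the precomputed distances $d_G$) and its length cap $\delta_i$ (via the entry marker $w_{v_j}$ into the current layered graph). The only remaining point to verify is that phase chaining yields all $(2f{+}1)$-expaths faithfully; but this is immediate from the definition, since concatenating any $(i{-}1)$-partial $(2f{+}1)$-expath with a $(2f{+}1)$-decomposable suffix of length at most $\delta_i$ produces an $i$-partial $(2f{+}1)$-expath, and conversely every such expath admits a unique prefix decomposition of this form.
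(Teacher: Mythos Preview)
Your proposal is correct and mirrors the paper's own argument: the paper explicitly sets up the computation as $2\log_2(nW)$ phases, each invoking \Cref{lem:l-expath} with $\ell = 2f{+}1$ at cost $\Otilde(fm)$, and the corollary is stated without further proof as the immediate consequence. One minor quibble: the prefix decomposition of an $i$-partial expath into an $(i{-}1)$-partial expath plus a final $\ell$-decomposable segment need not be \emph{unique}, but this is harmless since \Cref{lem:l-expath} already certifies optimality of each phase.
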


\subsection{Expaths with Granularity}
\label{subsec:improved_preproc_granularity}

It is also not hard to extend the efficient expath computation to positive granularity $\lambda$
(\Cref{definition:l-expath_granularity}).
The difference is that the path now may have a pre- and suffix of up to $\lambda$ edges each.
Recall that $d_{G-A}^{\le \lambda}(u,v)$ is the minimum length of paths 
between vertices $u,v$ in $G-A$
that have at most $\lambda$ edges; or $+\infty$ if no such path exists.
We first prepare the distances $d_{G-A}^{\le \lambda}(s,v)$
and $d_{G-A}^{\le \lambda}(v,t)$ for all $v \in V$ by running Dijkstra's algorithm
from $s$ and from $t$, respectively.
This takes time $\Otilde(m)$ since $\lambda \le n$,
whence it does not affect the total computation time.

Let $G_0, \ldots, G_{2 \log n}$ be the graphs defined above,
where $G_i$ is used to compute the $i$-partial \mbox{$\ell$-expath}.
We incorporate the prefix of $\lambda$ edges in the Dijkstra run from $s^*$ in $G_0$.
We set the weight of the edges $(s^*, v_{0})$ for every $v \in V$ to $d_{G-A}^{\le \lambda}(s,v)$
(and omit the edge in case $d_{G-A}^{\le \lambda}(s,v) = +\infty$).
For the suffix, we add a final node $t^*$ after the last layer of the last graph $G_{2\log n}$
The weight of the edges $(v_{\ell}, t^*)$ is $d_{G-A}^{\le \lambda}(v, t)$.

It is not difficult to see that the algorithm for computing shortest $(2f{+}1)$-expaths 
(without granularity) from \Cref{sec:preprocessing-expaths} together with these adaptions computes shortest \mbox{$(2f{+}1)$-expaths} with granularity $\lambda$ in $G{-}A$.

\subsection{Improved Preprocessing of the Distance Sensitivity Oracle of \texorpdfstring{\\}{} Chechik, Cohen, Fiat, and Kaplan}
\label{subsec:improved_preproc_DSO}

We now plug our expath computation into the preprocessing algorithm
of the $f$-DSO of Chechik et al.~\cite{ChCoFiKa17}.
The sensitivity $f$ can grow to $o(\log n/\log\log n)$ in their setting
and the underlying graph $G$ is weighted with a polynomial maximum weight $W = \poly(n)$.
They use fault-tolerant trees $FT(u,v)$ for all pairs of vertices $u,v \in V$
incurring super-quadratic space.
Every node $\nu$ in an FT-tree is associated with a specific subgraph $G_\nu \subseteq G$.
To obtain the expath $P_{\nu}$ from $u$ to $v$,
all-pairs shortest paths in $G_\nu$ are computed
and then assembled in time
$\Otilde(fn^3 + n^2\log (nW) + n \log (nW) \log \log (nW)) = \Otilde(fn^3)$ \emph{per node}.
With $O(n^2)$ FT-trees having $O(\log(nW)/\varepsilon)^f$ nodes each,
this makes for a preprocessing time of
$\Otilde(fn^5) \cdot O(\log(nW)/\varepsilon)^f
  = O(1/\varepsilon^f)  \cdot n^{5+o(1)}$.

We have shown above that APSP is only needed in the original graph $G$ to obtain 
the expaths in all relevant subgraphs,
taking only $\Otilde(mn)$ time.
Our algorithm for expaths then reduces the time to construct one node of an FT-tree to $\Otilde(fm)$.
In total, we obtain a preprocessing time of 
$\Otilde(mn) + \Otilde(fmn^2) \cdot O(\log(nW)/\varepsilon)^f 
	= O(1/\varepsilon^f)  \cdot mn^{2+o(1)}$.
The stretch of $1+\varepsilon$, space $\Otilde(fn^2) \cdot O(\log(nW)/\varepsilon)^f$,
and query time $O(f^5 \log n)$ of the DSO remain the same as for Chechik et al.~\cite[Theorem~3.2]{ChCoFiKa17}.
This proves our~\Cref{thm:improved_preprocessing}.

\printbibliography

\end{document}